\documentclass[10pt]{article}
\usepackage{amsmath,amssymb,fullpage,graphicx, xcolor,
enumerate, pdflscape}
\usepackage{comment, subfig, makecell,capt-of}
\usepackage{soul}
\usepackage{booktabs, siunitx}
\let\hat\widehat
\let\tilde\widetilde

\usepackage[ruled,linesnumbered]{algorithm2e}
\usepackage[colorlinks=true,linkcolor=blue,citecolor=blue]{hyperref}
\hypersetup{
   colorlinks=true, 
   linktoc=all,     
   linkcolor=blue  
}

\usepackage[nottoc]{tocbibind}

\usepackage[authoryear,round]{natbib}
\bibliographystyle{lt_plainnat}

\usepackage{graphicx}

\newcommand{\lortomas}[1]{\textbf{\color{orange}[Lorenzo comment: #1]}}
\newcommand{\ltomas}[1]{{\color{black}#1}}

\newtheorem{theorem}{Theorem}
\newtheorem{lemma}[theorem]{Lemma}

\usepackage{url}

\newenvironment{proof}{{\bf Proof.}}{$\Box$}
\DeclareMathOperator*{\argmax}{argmax}
\DeclareMathOperator*{\argmin}{argmin}

\renewcommand{\P}{\mbox{$\mathbb{P}$}}
\newcommand{\E}{\mbox{$\mathbb{E}$}}
\allowdisplaybreaks


\definecolor{myblue}{RGB}{50,50,150}

\usepackage{soul}

\parskip 10pt
\parindent 0pt

\usepackage{comment}

\def\V{\textcolor{magenta}}
\def\VV{\textcolor{cyan}}

\begin{document}

\begin{center}
\LARGE{\bf Robust Simulation Based Inference}\\
\vspace{.25cm}
\large{Lorenzo Tomaselli, Val\'erie Ventura, Larry Wasserman}\vspace{.1cm}\\ 
Department of Statistics \& Data Science,
Carnegie Mellon University\vspace{-.1cm}\\ 
\end{center}

\begin{quote}
{\em
Simulation-Based Inference (SBI) is an approach to statistical 
inference where simulations from an assumed model are used 
to construct estimators and confidence sets. 
SBI is often used when the likelihood is intractable and to construct 
confidence sets that do not rely on asymptotic methods or regularity 
conditions. Traditional SBI methods assume that the model is correct, 
but, as always, 
this can lead to invalid inference when the model is misspecified. 
This paper introduces 
robust methods that allow for valid frequentist inference 
in the presence of model misspecification.
We propose a framework where the target of inference is a projection parameter that 
minimizes a discrepancy between the true distribution and the assumed model. 
The method guarantees valid inference, even when the model is incorrectly specified
and even if the standard regularity conditions fail.
Alternatively, we introduce model expansion through exponential tilting 
as another way to account for model misspecification. 
We also develop an SBI 
based goodness-of-fit test to detect model misspecification.
Finally, we propose two ideas that are useful in the SBI framework beyond robust inference:
an SBI based method to obtain closed form approximations of intractable models and an active 
learning approach to more efficiently sample the parameter space.
}
\end{quote}

\vspace{-.2cm}
{\bf Keywords:} projection parameter, Hellinger discrepancy, relative fit, goodness of fit
active learning, models approximation.

\vspace{-.4cm}

\tableofcontents

\newpage

\section{Introduction}

Simulation based inference (SBI)
is an approach to statistical
inference in which 
simulations from
an assumed model
facilitate inference.
SBI can be used for two
distinct purposes.
The first, and most common,
is to perform inference
when the likelihood function is intractable.
The second is to
construct confidence sets
when standard regularity conditions do not hold.
In some cases, SBI is used for both tasks.

Perhaps the earliest use of SBI
was for approximate Bayesian computation (ABC) \citep{rubin_bayesianly_1984, beaumont_approximate_2002}. 
This approach for inference compares summary statistics from  observed data with those extracted 
from simulations using generative models, endowed with a prior distribution.
Bayesian SBI approaches have been implemented in astrophysics
\citep{mishra-sharma_neural_2022}, high-energy physics 
\citep{cranmer_approximating_2016}, 
genetics \citep{beaumont_approximate_2002}, epidemic models \citep{mckinley_simulation-based_2014, ionides_inference_2015, 
minter_approximate_2019,
hao_reconstruction_2020, golightly_accelerating_2023}, and ecology \citep{beaumont_approximate_2010}, to cite a few. 
However, 
Bayesian methods
do not yield
valid confidence intervals.
The focus of this paper is instead on frequentist inference.

The literature on likelihood-based SBI
is large and growing fast.
Some key references include:
\cite{thomas_likelihood-free_2022, dalmasso_likelihood-free_2023,
mishra-sharma_neural_2022, brehmer_mining_2020, cranmer_frontier_2020,
cranmer_approximating_2016}.
SBI for likelihood or quasi-likelihood estimation has gained popularity in many fields from
epidemic models, via particle filtering or sequential Monte Carlo \citep{ionides_inference_2006, king_inapparent_2008, breto_time_2009},
to econometrics, where it is commonly named ``indirect inference" \citep{jiang_indirect_2004}.
SBI confidence sets are considered in
\cite{dalmasso_likelihood-free_2023, cranmer_frontier_2020, walchessen_neural_2024}, \cite{lenzi_neural_2023,lenzi_towards_2024},
\cite{xie_repro_2022}.

The usual likelihood-based
SBI methods lead to valid inference
under the assumption that the model is correct.
In this paper, we consider robust SBI methods that
allow model misspecification
and failure of regularity assumptions.
Our target of inference
is the {\em projection parameter}
$\theta^*$ that minimizes
$d(P_\theta,P)$
for some discrepancy $d$,
where $P$ denotes the true distribution,
which need not be contained in the assumed 
model ${\cal P} = \{ P_\theta:\ \theta\in\Theta \}$,
$\Theta\subset\mathbb{R}^d$. 
We also write
$d(P_\theta,P)$
as
$d(p_\theta,p)$,
where
$p_\theta$ and $p$
are the densities of
$P_\theta$ and $P$.
In tractable models that do not require SBI,
projection estimators have been studied in
\cite{beran_minimum_1977, lindsay_efficiency_1994, basu_robust_1998}.
In particular,
\cite{beran_minimum_1977}
emphasized the important role of Hellinger distance
because it yield efficient inference when
the model happens to be correct.
These works also assume that the model
satisfies many regularity conditions
which we try to avoid.
In intractable models that do require SBI,
\cite{nickl2010}
also consider projection estimators
when standard regularity holds
and the densities are estimated by B-splines.
A related method is repro sampling
\citep{xie_repro_2022}.
%

In SBI, constructing confidence sets
that do not require regularity conditions typically 
relies on inverting a hypothesis test
\citep{dalmasso_likelihood-free_2023}.
But this approach does not yield
valid confidence sets when the model
is misspecified, 
because the null hypothesis $H_0: \theta = \theta_{\rm true}$ is false for
every $\theta$. 
Because minimum discrepancy estimators
are $M$-estimators, an alternative approach 
could use
$M$-estimation asymptotics
to find confidence sets.
But then this approach
assumes that the model
satisfies substantial regularity
conditions.
Since our goal is to have valid confidence sets
for projection parameters 
whether the model is regular or not, we will instead 
extend to the SBI framework the relative fit approach in
\cite{park_robust_2023,takatsu2025bridging}.
The method uses much weaker
regularity conditions.

A different approach to handle a misspecified model 
$p_\theta$ is to expand $p_\theta$ using an exponential tilt,
so it is more flexible, and apply existing 
likelihood-based SBI to make inference about $\theta$.
The exponential form of the model expansion
leads to some simplifications that reduce the computational 
burden.
If the expanded model remains misspecified, we can make robust inference 
about the projection parameter of the expanded model.

The purpose of this paper is to provide a robust SBI comprehensive framework 
to perform valid statistical inference 
without necessarily assuming a tractable likelihood, a correct model, 
or regularity conditions on the model.
In this paper we focus on the iid case.
In a companion paper we deal with
dependent data.

\medskip

{\bf Our Contributions.}
This paper makes the following contributions:

(1) We develop discrepancy based SBI (point and confidence set estimation) 
without assuming the model 
is correct and without making regularity assumptions on the model 
(Sections~\ref{sec::robust_sbi} and~\ref{sec::cs_misspecified}).

(2) We use one-step semiparametric estimators for
the discrepancies.

(3) We develop SBI based inference on 
the exponentially tilted model expansion
(Section~\ref{sec::exponential_tilt}).

(4) We propose an SBI based goodness of fit test
for the model (Section~\ref{sec::gof}).

(5) We compare three discrepancies and show their
advantages and disadvantages (Section~\ref{sec::applications_iid}).

In Section~\ref{sec::accoutrements} we also propose two ideas that are useful in the SBI framework beyond robust inference:

(6) In cases where SBI is used to estimate intractable likelihoods,
we show how SBI can be used to obtain
a closed form approximation to the model (Section~\ref{sec::approx}).

(7) We develop an active learning approach to more efficiently sample the parameter space (Section~\ref{sec::AL}).

\medskip

But first, we introduce the basics of SBI
for point and confidence estimation in Section~\ref{sec::SBI}, and in Section~\ref{sec::density_ratio_estim} we review
techniques for density ratio estimation, which we use throughout.

\section{Simulation Based Inference}\label{sec::SBI}

We now review SBI in the case of a correctly specified model.
Let $Y_1,\ldots, Y_n \sim P$ be the observed data and
let ${\cal Y}_{obs} = (Y_1,\ldots, Y_n)$.
We consider parametric models
consisting of densities
${\cal P} = \{ p_\theta:\ \theta\in\Theta \}$
where $\Theta\subset\mathbb{R}^d$.
We let
${\cal L}(\theta) \equiv {\cal L}(\theta; {\cal Y}_{obs}) = \prod_i p_\theta(Y_i)$
denote the likelihood function and
$\ell(\theta) = \log {\cal L}(\theta)$
the log-likelihood function.
Let
$\hat\theta_{mle} = \argmax_\theta \ell(\theta)$
denote the maximum likelihood estimator.
Let $Y_1(\theta),\ldots, Y_m(\theta)$
denote a sample of size $m$ from $p_\theta$ 
and 
let ${\cal Y} \equiv {\cal Y}(\theta) = (Y_1(\theta),\ldots, Y_m(\theta))$.
In some cases,
one can take
$Y_i(\theta) = G(U_i,\theta)$
for some $G$, where
$U_1,\ldots, U_m$ is a draw
from a fixed distribution $F$.
In these cases,
samples from different $p_\theta$'s are
obtained from the same base sample $U_1,\ldots, U_m$.


\paragraph{Estimating the Likelihood Function.}


Let $\theta_1,\ldots, \theta_N\sim \pi$,
where $\pi$ is some distribution with full support on $\Theta.$
Let ${\cal Y}_j = {\cal Y}(\theta_j)$
be a sample of size $m$ from $p_\theta$ with 
$\theta=\theta_j$, $j = 1, \ldots, N$.
We simulate a dataset
$$
\{(Z_j,{\cal Y}_j,\theta_j): 1 \leq j \leq 2N\} = \{ (1,{\cal Y}_1,\theta_1),\ldots, (1,{\cal Y}_N,\theta_N),
(0,{\cal Y}_1,\theta_{q(1)}),\ldots, (0,{\cal Y}_N,\theta_{q(N)}) \},
$$
where $Z_j=1$ for $j\leq N$ and
$Z_j=0$ for $j > N$, and 
$q$ is a permutation of $1,\ldots, N$.
This is summarized in Algorithm~\ref{alg:data_SBI}.
The second half of the dataset is the same as the first
except that the $\theta_j$'s have been randomly permuted. 
The distribution of
$({\cal Y},\theta)$ given $Z=1$ is
$p(\theta,{\cal Y}) = p_\theta(Y_1,\ldots, Y_n)\pi(\theta)$,
while the distribution of
$({\cal Y},\theta)$ given $Z=0$ is
$p(Y_1,\ldots, Y_n) \pi(\theta)$,
where
$p(Y_1,\ldots, Y_n) = \int p_\theta(Y_1,\ldots, Y_n)\pi(\theta)d\theta$.
From Bayes' theorem, we have that
\begin{align*}
h({\cal Y},\theta) &\equiv 
P(Z=1|{\cal Y},\theta) =
\frac{p(\theta,{\cal Y})}{p(\theta,{\cal Y}) + \pi(\theta)p({\cal Y})}
=\frac{\pi(\theta)p_\theta({\cal Y})}
{\pi(\theta)p_\theta({\cal Y}) + \pi(\theta)p({\cal Y})}
=
\frac{p_\theta({\cal Y})}
{p_\theta({\cal Y}) + p({\cal Y})},
\end{align*}
so that
\begin{equation}
    \frac{h({\cal Y},\theta)}{1-h({\cal Y},\theta)} =
\frac{p_\theta({\cal Y})}{p({\cal Y})}\propto
p_\theta({\cal Y}) = {\cal L}(\theta; {\cal Y}).
\label{eq::likelihood_trick}
\end{equation}
Thus the binary classifier 
$h({\cal Y},\theta)$ estimates the likelihood function using 
the so-called ``likelihood ratio trick" 
\citep{cranmer_approximating_2016, cranmer_frontier_2020,walchessen_neural_2024}.

An alternative approach
\citep{thomas_likelihood-free_2022}
is to draw a sample
${\cal Y}'=(Y_1',\ldots, Y_\ell')$
from a fixed reference density
$g$ and then
fit a separate classifier
for each $\theta_j$
between
${\cal Y}_j$ and
${\cal Y}'$.
This requires more computation but might be
more accurate
since the classifier is focused on a single $\theta_j$.
Furthermore, we can use a different
reference density $g$ for each $\theta_j$ if desired. 
The tradeoff between improved accuracy and increased
classification is an open question.

\paragraph{Constructing Confidence Sets \citep{dalmasso_likelihood-free_2023}.}
Let $\theta^*$ denote the (unknown) true value of $\theta$.
Let $T(\theta,{\cal Y})$
denote any statistic
which is allowed to depend on the parameter as well as the data.
This could be, but need not be, the likelihood function.
Let
$$
B(\theta,{\cal Y}(\theta),{\cal Y}_{obs}) = \mathbb I \, 
\Bigl\{ T(\theta, {\cal Y}(\theta)) \geq T(\theta, {\cal Y}_{obs}) \Bigr\},
$$
where $\mathbb I$ is the indicator function.
Now, 
\begin{equation}
    \label{eq::B}
\text{pv}(\theta, {\cal Y}_{obs})=\E_\theta[B(\theta,{\cal Y}(\theta),{\cal Y}_{obs})]
\end{equation}
is precisely
the p-value for testing
that the true value of the parameter is $\theta$.
(The expected value is over the randomness of
${\cal Y}(\theta)$ with ${\cal Y}_{obs}$ and $\theta$ treated as fixed.)
Thus,
$$
C= \{\theta:\ \text{pv}(\theta,{\cal Y}_{obs}) \geq \alpha\}
$$
is an exact $1-\alpha$ confidence set for $\theta^*$, that is
$
\inf_\theta P_\theta(\theta\in C) \geq 1-\alpha.
$

In SBI, we use simulation to estimate
(\ref{eq::B}):
we simulate
$\theta_1,\ldots, \theta_N$ from some distribution $\pi$.
For each $j$,
we simulate
${\cal Y}_j \equiv {\cal Y} (\theta_j)$ from $p_{\theta_j}$.
Let
$$
B_{j} = \mathbb I \,\Bigl\{ T(\theta_j,{\cal Y}_j) \geq T(\theta_j,{\cal Y}_{obs} )\Bigr\}.
$$
Now we perform
nonparametric regression of
$B_1,\ldots, B_N$
on
$\theta_1,\ldots, \theta_N$,
which gives an estimate
$\widehat{\text{pv}}(\theta, \mathcal{Y}_{obs})$ 
of 
(\ref{eq::B}). 
The estimated confidence set is
$\hat C = \{ \theta:\ \widehat{\text{pv}}(\theta, \mathcal{Y}_{obs}) \geq \alpha\}$.
($C$ can also be obtained by using quantile regression to estimate the
$1-\alpha$ quantile of the test statistic rather than using the $p$-value.)
Assuming $\text{pv}$ is $\gamma$-Holder smooth,
typical nonparametric regression methods
achieve
$$
\hat{\text{pv}}(\theta,{\cal Y}_{obs})-  \text{pv}(\theta,{\cal Y}_{obs}) = O_P(N^{-\gamma/(2\gamma+d)}).
$$
(Recall that, in simple terms,
$\gamma$-Holder smooth means that the function has $\gamma$ continuous derivatives.)
In many cases,
$\text{pv}(\theta,{\cal Y}_{obs})$ is infinitely differentiable so that
$\hat{\text{pv}}(\theta,{\cal Y}_{obs})-  \text{pv}(\theta,{\cal Y}_{obs}) = 
O_P (\sqrt{\log N/N})$.
As long as $N > n \log n$,
the error added by estimating the p-value function is then
negligible.
When the model is correct, this approach yields valid and efficient confidence sets. However, when the model is misspecified,
inverting the test does not
yield valid confidence sets.

\section{Densities and Density Ratios}\label{sec::density_ratio_estim}

SBI typically requires estimating densities
\citep{nickl2010}
or density ratios 
\citep{cranmer_frontier_2020}.
We saw that for the likelihood based approach described earlier
where we used a classifier to estimate the ratio
${p(\theta,{\cal Y})}/{( \pi(\theta)p({\cal Y}) )} $.
In some cases
we have a choice of estimating a density or a density ratio.
Current practice seems to
focus mostly on
density ratio estimation rather than
density estimation. Indeed, there seems to be a unspoken assumption
in much of the SBI literature
that density ratios can be easier to estimate than
densities.
There are, perhaps, two reasons
that users prefer density ratio estimation
to density estimation.
The first is that density ratios
can be estimated by using
classification methods and there
is a plethora of available methods. For example,
random forests, boosting, neural nets
and deep learning
are popular classification methods
that have been shown to be very effective in practice.
Especially in multivariate cases,
this could be a benefit.
With certain assumptions on the density ratio,
it has been shown that
deep learning methods can
achieve fast rates of convergence
that might even be dimension independent
\citep{bauer2019deep, schmidtheiber2020nonparreg, kohlerlanger2021ratedeep}.
Such results need to be treated with caution, as they do
make extra assumptions on the function being estimated.
Nonetheless, such results provide a strong motivation
for neural net methods.
A second reason for
preferring density ratios is that
they can sometimes be less complex than
densities.
For example,
consider two densities $p$ and $q$.
We might have that
$p,q\in \text{Holder}(\beta)$ while
$p/q\in \text{Holder}(\xi)$
with $\xi > \beta$.
An extreme example is when $p$ is highly
nonsmooth,
but $q=p$ so that $r = 1$.
In this case,
$p$ and $q$ are complex but the ratio is
simple.
The text by \cite{sugiyama_density_2012} presents many effective techniques for
estimating density ratios.

Whether it is better
to focus on estimating densities
or density ratios
is an open question.
Given the current preference
for density ratios,
we will express our methods
in terms of density ratios
to be consistent with common practice
but one could replace density ratio estimation with
density estimation in what follows.

Suppose that
$Y_1,\ldots, Y_n \sim p$,
$Y_{n+1},\ldots, Y_{n+m} \sim q$
and that we want to estimate
$r(y) = p(y)/q(y)$. 

{\bf Classifier Approach.}
We define
the pair $(Z,Y)$ where
$Z=1$ if $Y\sim P$ and
$Z=0$ if $Y\sim Q$.
Then, by Bayes' theorem,
$$
r(y) = \frac{1-a}{a} \frac{1-h(y)}{h(y)}
$$
where
$a = n/(n+m)$ and
$h(y) = P(Z=1|Y=y)$.
The function $h$ is estimated using a classifier
to get $\hat h$ and then we set
$$
r(y) = \frac{1-a}{a} \frac{1-\hat h(y)}{\hat h(y)}.
$$
The classifier can be
logistic regression, a random forest, a neural net etc.

{\bf Least Squares Approach.}
Classifiers like neural nets
are very flexible
but they require careful training,
very large sample sizes
and can require choosing many tuning parameters.
An alternative is to use
an $L_2$ approach
\citep{kanamori_least-squares_2009}.
The goal is to choose $\hat r$
to minimize
$$
\int (\hat r - r)^2 q =
\int \hat r^2 q - 2 \int \hat r r q + \int r^2 q =
\int \hat r^2 q - 2 \int \hat r p + \int r^2 q.
$$
The last term does not involve $\hat r$
so it suffices to choose $\hat r$ to
minimize
$$
L(\hat r) = \int \hat r^2 q - 2 \int \hat r p
$$
which can be estimated by
\begin{equation}\label{eq::theloss}
\hat L(\hat r) =
\frac{1}{m}\sum_{i=n+1}^{n+m} \hat r^2(Y_i) -
\frac{2}{n}\sum_{i=1}^n \hat r(Y_i).
\end{equation}
Following \citep{kanamori_least-squares_2009},
we assume that
$r$ is contained in a reproducing kernel Hilbert space (RKHS)
${\cal H}$ defined by a kernel $K$.
More precisely,
we minimize the penalized loss
$$
\hat L(\hat r) + \lambda ||r||_{\cal H}^2.
$$
The minimizer $\hat r$ of
$\hat L(\hat r)$
subject to $r\in {\cal H}$
has the form
$\hat r(y) =\sum_i \beta_j K(Y_i,y)$ 
for some $\beta_1,\ldots, \beta_{n+m}$.
It usually suffices to
restrict $r$ to be of the form
\begin{equation}
r(y)=\sum_{i=1}^{n_c} \beta_i K(y,\sigma, c_i) 
\label{eq::kernel_density_ratio}
\end{equation}
where $c_1,\ldots, c_{n_c}$
are centers,
$K(y, \sigma, c)$ is a Gaussian kernel with 
center $c$ and scale $2\sigma^2$ evaluated at a point $y$. 
Define 
$\widehat{H}_{ij}=
\dfrac{1}{m}\sum_{k=n+1}^{n+m}
\exp{\left(-\frac{\|Y_k-c_i\|^2+\|Y_k -c_j\|^2}{2\sigma^2}\right)}$ 
and 
$\widehat h\in \mathbb{R}^{c}$ with 
$\hat h_{i}=\frac{1}{n}\sum_{k=1}^{n} 
\exp{\left(-\frac{\|Y_k-c_i\|^2}{2\sigma^2}\right)}$ 
for $i,j=1,\dots,n_c$.  The centers of the Gaussian kernels are either selected at random 
or can be selected over a grid. 
The values of the hyperparameters $\sigma$ and $\lambda$ are chosen by cross-validation \citep{sugiyama_density_2010}. Alternatively, $\sigma$ can be 
chosen by the median heuristic
\citep{garreau2017}.

Inserting this into
(\ref{eq::theloss})
and minimizing over $\beta$ yields:
\begin{equation*}
\hat \beta := \arg\min_{\beta \in \mathbb{R}^c}\ 
\frac{1}{2}\,\beta^\top \hat H \beta - 
\hat h^\top \beta + \frac{\lambda}{2} \beta^\top\beta
\end{equation*}
To ensure non-negativity of the density ratios, 
we set $\hat \beta_i=\max(0, \beta_i)$ elementwise. 
\cite{sugiyama_density_2010}
show that,
if $r\in Holder(\beta)$ for $\beta>1/2$, then
$||\widehat{r}-r|| =  O_P\left(n^{-\frac{\beta}{2\beta+d}}\right)$.

{\bf Remark:}
{\em
We can reduce the computation by constructing
only one density ratio estimator.
Generate
$(Z_j,\theta_j,Y_j)$
for $j=1,\ldots, 2N$
as follows.
For $1\leq j \leq N$
set $Z_j=1$,
draw $\theta_1,\ldots, \theta_N \sim \pi$
and $Y_j \sim p_{\theta_j}$.
For $N+1\leq j \leq 2N$,
set $Z_j=0$,
draw $\theta_{N+1},\ldots, \theta_{2N} \sim \pi$
and
$Y_j \sim g$.
Then, as in Section~\ref{sec::SBI} we have 
$p(y,\theta|z=1) \propto \pi(\theta)p_\theta(y)$
and
$p(y,\theta|z=0) \propto \pi(\theta)g(y)$,
and estimating the ratio of these two densities gives
$p_\theta(y)/g(y)$.
Hence a single density ratio estimator
yields $p_\theta(y)/g(y)$ for all values of $\theta$
simultaneously.
Similarly, if we prefer to estimate densities rather than density ratios,
a single density estimator applied to the first sample
yields $p(\theta,y)\propto p_\theta(y)$
and hence estimates the density for each $\theta$
simultaneously.
There is a tradeoff.
One can do many density estimates of the dimension of $Y$ are one
density estimate of the dimension of $(Y,\theta)$.
}

\section{Robust SBI Using Discrepancies}\label{sec::robust_sbi}

If the true distribution $p$
is not contained in the model 
${\cal P} = \{ p_\theta:\ \theta\in\Theta \}$
then we say that the model is misspecified.
In this case, we take as our target of inference
\begin{equation}
\theta^* = \argmin_\theta d(p_\theta,p)\label{eq::proj_param}
\end{equation}
where $d(\cdot,\cdot)$ is some discrepancy.
We call $\theta^*$ the
{\em projection parameter}
(this corresponds to the true value
when the model is correctly specified).
Under regularity conditions,
the mle converges to the value that minimizes
the Kullback-Leibler discrepancy
$D(p,p_\theta) = \int p \log (p/p_\theta)$.
But this discrepancy
leads to non-robust estimators \citep{beran_minimum_1977}.
Instead, we consider 
three other discrepancies:
the Hellinger discrepancy,
the power divergence and
the kernel distance.
Each
has advantages and disadvantages; see Table
\ref{table::compare}.
In this section we discuss point estimation for the projection parameter and,
in the next section, we provide confidence sets.

We next define the discrepancies
and their estimators.
The estimators proposed in this section are one step estimators,
which
have the form: plugin estimator plus influence function.
Under smoothness conditions,
these estimators are efficient
and asymptotically
Normal.
In what follows,
we will often use a sample
$Y_1^*,\ldots, Y_k^*$ from a convenient reference density $g$.
We assume that $g$ is known in closed form.

\paragraph{The Hellinger discrepancy.}
The Hellinger discrepancy between $p_\theta$ and $p$ is
$$
h^2(p_\theta,p) = \int (\sqrt{p_\theta}-\sqrt{p})^2 = 2 - 2\psi(p_\theta,p)
$$
where
$\psi(p_\theta,p)=\int \sqrt{p_\theta p}$.
Given an estimate 
$\hat\psi(p_\theta,p)$ of $\psi(p_\theta,p)$ 
we take
$\hat h^2(p_\theta,p) = 2 - 2 \hat\psi(p_\theta,p)$. 
Most work on estimating the Hellinger distance
has used the plugin estimate
$\psi(p_\theta,\hat p)$,
where $\hat p$ is a nonparametric density estimate
\citep{beran_minimum_1977, basu_robust_1998}.
This can lead to $n^{-1/2}$ consistent
estimates in some cases
if the density estimate is carefully undersmoothed.
But in general 
these estimates are asymptotically biased.
Instead, we use the
semiparametric one-step estimator.

\begin{lemma}\label{lemma::hellinger}
Let
$Y_1,\ldots, Y_n \sim p$,
$Y_1(\theta),\ldots, Y_m(\theta) \sim p_{\theta}$, 
and $Y_1^*,\ldots, Y_k^* \sim g$.
Let
$r(x)=\frac{p(x)}{g(x)}$ and $s_\theta(x)=\frac{p_\theta(x)}{g(x)}$.
Let
$\hat r$ be an estimate of $r$ based on
$Y_1,\ldots, Y_n$ and
$Y_1^*,\ldots, Y_k^* $,
as discussed in
the previous section.
Similarly, let
$\hat s_\theta$ be an estimate of $s_\theta$ based on
$Y_1(\theta),\ldots, Y_m(\theta)$
and
$Y_1^*,\ldots, Y_k^*$. 
The one-step estimator is 
\begin{align}
\hat\psi(p_{\theta}, p) &=
\frac{1}{2n} \sum_i \sqrt{\frac{\hat s_\theta(Y_i)}{\hat r(Y_i)}} +
\frac{1}{2k} \sum_i \sqrt{\frac{\hat r(Y_i(\theta))}{\hat s_\theta(Y_i(\theta))}}\label{eq::est_MHDE}.
\end{align}
Suppose that $r\in \text{Holder}(\beta_1)$ and 
$s_\theta\in \text{Holder}(\beta_2)$ for each $\theta$,
where $\beta_1,\beta_2 > d/2$  
(where we recall that $d$ is the dimension of $\theta$)
and $m,k\geq n$.
Assume that
$||\hat r - r|| = o_P(n^{-1/4})$ and
$||\hat s_{\theta^*} - s_{\theta^*}|| = o_P(n^{-1/4})$.
Then
$$
\sqrt{n}(\hat \psi -\psi)\rightsquigarrow N(0,\sigma^2) \ \ \mbox{where} \ \ \sigma^2 = \dfrac{1-\psi^2}{2}.
$$
\end{lemma}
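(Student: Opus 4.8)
The plan is to recognize $\hat\psi$ as a one-step (doubly robust) estimator built from the two equivalent representations of the target. Writing $\sqrt{p_\theta/p}=\sqrt{s_\theta/r}$ and integrating against $p$, and likewise $\sqrt{p/p_\theta}=\sqrt{r/s_\theta}$ against $p_\theta$, gives
\[
\psi=\int\sqrt{p_\theta p}=\E_P\Bigl[\sqrt{s_\theta/r}\Bigr]=\E_{P_\theta}\Bigl[\sqrt{r/s_\theta}\Bigr],
\qquad\text{so}\qquad
\psi=\tfrac12\E_P\Bigl[\sqrt{s_\theta/r}\Bigr]+\tfrac12\E_{P_\theta}\Bigl[\sqrt{r/s_\theta}\Bigr],
\]
and $\hat\psi$ is exactly the empirical version with $r,s_\theta$ replaced by $\hat r,\hat s_\theta$. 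First I would introduce the population functional $\Psi(r,s)=\tfrac12\E_P[\sqrt{s/r}]+\tfrac12\E_{P_\theta}[\sqrt{r/s}]$ and verify Neyman orthogonality at the truth, i.e. that its Gateaux derivatives in $r$ and $s$ vanish. Differentiating in $s$ along $\eta$ and substituting $p=rg$, $p_\theta=sg$ gives
\[
\frac{d}{dt}\Big|_{t=0}\Psi(r,s+t\eta)=\tfrac14\int\eta\,(sr)^{-1/2}p-\tfrac14\int\eta\,r^{1/2}s^{-3/2}p_\theta=\tfrac14\int\eta\sqrt{r/s}\,g-\tfrac14\int\eta\sqrt{r/s}\,g=0,
\]
and the $r$-derivative vanishes by the same computation with the roles swapped. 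This orthogonality is precisely what makes the symmetric two-sample average, rather than either single representation, first-order insensitive to nuisance error.

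Next I would decompose, writing $\mathbb{P}_n$ and $\mathbb{P}_m$ for the empirical measures of the $P$- and $P_\theta$-samples,
\[
\hat\psi-\psi=\underbrace{\tfrac12(\mathbb{P}_n-P)\sqrt{s_\theta/r}+\tfrac12(\mathbb{P}_m-P_\theta)\sqrt{r/s_\theta}}_{\text{leading term}}+\text{(empirical-process drift)}+\text{(nuisance bias)}.
\]
The nuisance-bias term collects the population pieces $\tfrac12 P(\sqrt{\hat s_\theta/\hat r}-\sqrt{s_\theta/r})+\tfrac12 P_\theta(\sqrt{\hat r/\hat s_\theta}-\sqrt{r/s_\theta})=\Psi(\hat r,\hat s_\theta)-\Psi(r,s_\theta)$; by the orthogonality just established its first-order part is zero, so a second-order Taylor expansion bounds it by $O_P(\|\hat r-r\|^2+\|\hat s_\theta-s_\theta\|^2+\|\hat r-r\|\,\|\hat s_\theta-s_\theta\|)=o_P(n^{-1/2})$ under the assumed $o_P(n^{-1/4})$ rates. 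The drift term $(\mathbb{P}_n-P)(\sqrt{\hat s_\theta/\hat r}-\sqrt{s_\theta/r})$ and its $P_\theta$-analogue are handled by the H\"older hypotheses: since $\beta_1,\beta_2>d/2$, the relevant transforms lie in a Donsker class, so asymptotic equicontinuity together with $\|\sqrt{\hat s_\theta/\hat r}-\sqrt{s_\theta/r}\|\to0$ forces these terms to be $o_P(n^{-1/2})$. (Only the self-evaluation pieces, where a nuisance is evaluated on its own training sample, actually require this; the cross pieces are automatically negligible by independence of the two samples.)

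It then remains to apply the CLT to the leading term, a sum of two independent centered iid averages. Using the identities $\E_P[s_\theta/r]=\int p_\theta=1$ and $\E_{P_\theta}[r/s_\theta]=\int p=1$ together with $\E_P[\sqrt{s_\theta/r}]=\E_{P_\theta}[\sqrt{r/s_\theta}]=\psi$, I would compute $\mathrm{Var}_P(\sqrt{s_\theta/r})=1-\psi^2$ and $\mathrm{Var}_{P_\theta}(\sqrt{r/s_\theta})=1-\psi^2$. Since the two halves are independent, $\mathrm{Var}\bigl(\sqrt n(\hat\psi-\psi)\bigr)\approx\tfrac14(1-\psi^2)+\tfrac14\tfrac{n}{m}(1-\psi^2)$; with $m\ge n$ and $m/n\to1$ this collapses to $\tfrac{1-\psi^2}{2}$, and Slutsky absorbs the negligible remainders to give $\sqrt n(\hat\psi-\psi)\rightsquigarrow N(0,\tfrac{1-\psi^2}{2})$.

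I expect the main obstacle to be the empirical-process drift rather than the algebra: because $\hat r$ (and $\hat s_\theta$) are evaluated at the very points used to fit them, controlling $(\mathbb{P}_n-P)(\sqrt{\hat s_\theta/\hat r}-\sqrt{s_\theta/r})$ requires either the Donsker argument, whence the role of $\beta_1,\beta_2>d/2$, or an explicit cross-fitting construction. The supporting technical point is to ensure the ratios and their square roots stay bounded away from $0$ and $\infty$, so that $(r,s)\mapsto\sqrt{s/r}$ is Lipschitz on the relevant range; this is what the smoothness and strict positivity of the densities must secure, and it is needed both for the Donsker class membership and for the second-order bias bound.
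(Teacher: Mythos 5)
Your proposal is correct and follows essentially the same route as the paper's proof: the paper derives the influence functions $\phi_r$ and $\phi_{s_\theta}$ by perturbing $P$ and $P_\theta$ (which is the distributional-perturbation view of the Neyman orthogonality you verify in the nuisance directions), performs the same von Mises decomposition into a linear term, empirical-process drift controlled by Donsker conditions, and a second-order remainder that is $o_P(n^{-1/2})$ under the $o_P(n^{-1/4})$ nuisance rates, and computes $\sigma^2=\E[\phi_r^2]+\E[\phi_{s_\theta}^2]=\tfrac{1-\psi^2}{2}$ by exactly the moment identities you use. If anything you are slightly more explicit than the paper on one point, namely that the stated variance requires the sample sizes to be balanced ($n/m\to 1$), which the paper's computation assumes implicitly.
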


The proof of lemma~\ref{lemma::hellinger} and all proofs henceforth are provided in Appendix~\ref{appendix::proofs}.

{\bf Remark:}
{\em The ratios in the sums can become unstable
in the tails so, in practice,
we trim the ratios.
The Normal approximations in the two theorems
breaks down if $p = p_{\theta^*}$ 
because the variance of the estimator tends to 0.
But if we add $1/n$ to the
estimated variance of $\hat\psi$
then the confidence intervals are still valid even in this case.}

\paragraph{The Power Divergence.}
The power divergence \citep{basu_robust_1998} is 
$$
d_\gamma(p,p_\theta) = 
\int \Biggl\{ p_\theta^{1+\gamma}(x)- \left( 1 + \frac{1}{\gamma}\right) 
p(x)p_\theta^\gamma(x)+ \frac{1}{\gamma}p^{1+\gamma}(x)\Biggr\} dx, 
$$
parametrized by $\gamma\in (0, 1]$. This is a wide family of divergences that 
balances efficiency (low $\gamma$) and robustness (large $\gamma$). 
Robustness, here, means that the projection is not sensitive to small
changes in $p$.
This includes
the KL distance ($\gamma \to 0$) and $L_2$ distance ($\gamma=1$).
For the purposes of this paper, it is only necessary
to estimate the first two terms 
\begin{equation}
\psi_\gamma(p,p_\theta)=
\int \Biggl\{ p_\theta^{1+\gamma}(x)- \left( 1 + \frac{1}{\gamma}\right) 
p(x)p_\theta^\gamma(x)\Biggr\} dx \label{eq::mdpd_discrepancy}
\end{equation}
since the third term is a constant of $\theta$.

\begin{lemma}\label{lemma::power}
Let
$Y_1,\ldots, Y_n \sim p$,
$Y_1(\theta),\ldots, Y_m(\theta) \sim p_\theta$ and $Y_1^*,\ldots, Y_k^* \sim g$. Considering the ratio-based approach, the one-step estimator is 
\begin{align}
\hat{\psi}_\gamma(p_\theta, p) &= \left(1+\gamma\right) \dfrac{1}{m}\sum_i \widehat{r}_\theta^\gamma (Y_i(\theta)) g^\gamma (Y_i(\theta))  \left(1 - \dfrac{\widehat{r}(Y_i(\theta)) }{\widehat{r}_\theta (Y_i(\theta)) }\right) - \dfrac{\left(1+\gamma \right)}{\gamma} \dfrac{1}{n}\sum_i \widehat{r}_\theta^\gamma (Y_i) g^\gamma (Y_i)  -
\gamma \hat\psi_{\gamma}(p_\theta, p,\gamma)\nonumber\\
&=  \dfrac{1+\gamma}{m} \sum_i   \widehat{r}_\theta^{\gamma}(Y_i(\theta))g^{\gamma}(Y_i(\theta)) -   \dfrac{1+\gamma}{m} \sum_i   \widehat{r}_\theta^{\gamma-1}(Y_i(\theta))\widehat{r}(Y_i(\theta))g^{\gamma}(Y_i(\theta))\nonumber\\
	&\quad- \left(1+\dfrac{1}{\gamma}\right)\dfrac{1}{n} \sum_i \widehat{r}_\theta^\gamma (Y_i) g(Y_i)^\gamma 
 -  \dfrac{\gamma}{k} \sum_i \widehat{r}^{1+\gamma}_\theta(Y_i^*) g^\gamma(Y_i^*)
     +  \dfrac{1+\gamma}{k} \sum_i 
     \widehat{r}(Y_i^*)\widehat{r}^{\gamma}_\theta(Y_i^*) g^\gamma(Y_i^*) 
\label{eq::est_MDPD}
\end{align}
which, for the $L_2$ loss ($\gamma=1$), simplifies to
\begin{align}
    \hat{\psi}_1 (p_\theta, p) 
&=  \dfrac{2}{m} \sum_i   \widehat{r}_\theta(Y_i(\theta))g(Y_i(\theta)) -  \dfrac{2}{m} \sum_i  \widehat{r}(Y_i(\theta))g(Y_i(\theta))- \dfrac{2}{n} \sum_i \widehat{r}_\theta (Y_i) g(Y_i)\nonumber\\
	&\quad - \dfrac{1}{k} \sum_i \widehat{r}^{2}_\theta(Y_i^*) g(Y_i^*) +
     \dfrac{2}{k} \sum_i 
     \widehat{r}(Y_i^*)\widehat{r}_\theta(Y_i^*) g(Y_i^*) 
\label{eq::est_L2}
\end{align}
Under the conditions of lemma~\ref{lemma::hellinger},
$$\sqrt{n}(\hat\psi_\gamma - \psi_\gamma)\rightsquigarrow N(0,\sigma^2)$$
where 
\begin{align*}
    \sigma^2 &=\E_{pp_\theta} \Biggl[\Biggl( \Bigl(1+\frac{1}{\gamma}\Bigr) s_\theta^{\gamma}(Y) g^{\gamma}(Y) - 
    (1+\gamma)^2 s_\theta^{\gamma}(Y(\theta))g^{\gamma}(Y(\theta))\Biggl)^2\Biggr]
    + (1+\gamma)^2 \E_g \Bigl[r^2(Y^*) s_\theta^{2\gamma-1}(Y^*) g^{2\gamma}(Y^*) \Bigr] \\
    & \quad - 2(1+\gamma)^2\E_g \Bigl[  r (Y^*) s_\theta^{2\gamma}(Y^*) g^{2\gamma}(Y^*)\Bigr]  
    + \frac{2(1+\gamma)^2}{\gamma}  \Bigl(\E_g \Bigl[ r(Y^*) s_\theta^\gamma(Y^*) g^{\gamma}(Y^*)\Bigr]\Bigr)^2
     - (1+\gamma)^2\psi_\gamma^2.
\end{align*}
\end{lemma}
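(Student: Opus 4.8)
The plan is to treat $\hat\psi_\gamma$ as a semiparametric one-step (von Mises) estimator and proceed exactly as in Lemma~\ref{lemma::hellinger}. The first observation is that $\psi_\gamma$ in \eqref{eq::mdpd_discrepancy} is \emph{linear} in the two sub-functionals $\mu_1 = \int p_\theta^{1+\gamma}$ and $\mu_2 = \int p\,p_\theta^\gamma$, namely $\psi_\gamma = \mu_1 - (1+\tfrac1\gamma)\mu_2$, and that each depends on the unknown densities only through the ratios $r = p/g$ and $s_\theta = p_\theta/g$. Because of this linearity there is no delta-method nonlinearity to manage: it suffices to produce an asymptotically linear expansion for $\hat\mu_1$ and for $\hat\mu_2$ separately and add them.

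Next I would compute the nonparametric influence function of each piece by Gateaux differentiation along mixture paths. Writing $p_\theta^\gamma = s_\theta^\gamma g^\gamma$, differentiating $\mu_1$ in the direction of a perturbation of $p_\theta$ yields the correction $(1+\gamma)\bigl(s_\theta^\gamma(y)g^\gamma(y) - \mu_1\bigr)$ at a $p_\theta$-draw, and since $s_\theta$ is itself a ratio estimated against the $g$-sample there is a matching correction evaluated at the $Y_i^*$; these generate the $Y_i(\theta)$ and $Y_i^*$ terms of \eqref{eq::est_MDPD}. Differentiating $\mu_2$ in $p$ and in $p_\theta$ produces the $p$-sample term together with the cross piece carrying the factor $\hat r_\theta^{\gamma-1}\hat r$. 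I would then verify term-by-term that \eqref{eq::est_MDPD} is exactly ``plugin plus estimated influence function,'' which by construction makes the correction Neyman-orthogonal to the nuisances $r$ and $s_\theta$.

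With the influence function identified I would perform the first-order expansion $\hat\psi_\gamma - \psi_\gamma = \mathbb{G}_n + R_n$, where $\mathbb{G}_n$ is the sum of three independent sample averages (over the $p$-, $p_\theta$-, and $g$-samples) of the influence function, and $R_n$ is the second-order remainder. Orthogonality forces $R_n$ to be quadratic in the nuisance errors, schematically $\int (\hat r - r)(\hat s_\theta - s_\theta)(\cdots) + \int (\hat s_\theta - s_\theta)^2(\cdots)$; unlike the purely bilinear Hellinger case, the power divergence genuinely produces a self-quadratic term with weight $\propto s_\theta^{\gamma-1}$, so for $\gamma<1$ I would invoke boundedness/trimming of the ratios to keep that weight controlled. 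I expect \textbf{this remainder bound to be the crux}: Cauchy--Schwarz together with the rate hypotheses $\|\hat r - r\| = o_P(n^{-1/4})$ and $\|\hat s_{\theta^*} - s_{\theta^*}\| = o_P(n^{-1/4})$ gives $R_n = o_P(n^{-1/2})$, while replacing $\tfrac1n\sum\hat\phi$ by its population mean requires a stochastic-equicontinuity argument. This is where the conditions $\beta_1,\beta_2 > d/2$ enter, since Holder balls of smoothness exceeding $d/2$ are Donsker (alternatively one cross-fits on independent folds and needs only the rates). The hypotheses $m,k\geq n$ guarantee that the simulated and reference sample averages enter $\mathbb{G}_n$ without degrading the $\sqrt n$ scaling.

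Finally I would apply the CLT to $\sqrt n\,\mathbb{G}_n$. By independence of the three samples the limiting variance is the sum of the per-sample variances of the influence function; squaring it and taking expectations under $p$, $p_\theta$, and $g$ reproduces the stated $\sigma^2$ — the $p$- and $p_\theta$-sample contributions combining into the leading $\E_{pp_\theta}[(\cdots)^2]$ term, the $g$-sample contributions giving the $\E_g[r^2 s_\theta^{2\gamma-1} g^{2\gamma}]$, $\E_g[r s_\theta^{2\gamma} g^{2\gamma}]$ and squared-mean pieces, and joint centering supplying $-(1+\gamma)^2\psi_\gamma^2$. The $L_2$ case $\gamma=1$ in \eqref{eq::est_L2} then follows by direct substitution. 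This concluding variance computation is routine (if lengthy) bookkeeping; the substantive work remains the orthogonality and remainder control of the previous step.
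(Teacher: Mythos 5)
Your overall route coincides with the paper's own proof: Gateaux differentiation of $\psi_\gamma$ along the paths $P+\epsilon(\delta_x-P)$ and $P_\theta+\epsilon(\delta_x-P_\theta)$ to obtain the two influence functions, assembly of the one-step estimator as plugin plus averaged corrections (with the plugin integrals approximated by Monte Carlo against the $g$-sample, which is where the $Y_i^*$ terms in \eqref{eq::est_MDPD} come from), then a von Mises expansion whose linear part is handled by the CLT and whose remainder and empirical-process terms are controlled exactly as in Lemma~\ref{lemma::hellinger} under the $o_P(n^{-1/4})$ rates and the Holder/Donsker conditions. Your remainder analysis is in fact more explicit than the paper's, which simply defers to the Hellinger proof; your observation that the self-quadratic piece of the remainder carries a weight proportional to $s_\theta^{\gamma-1}$ for $\gamma<1$ and so needs boundedness or trimming of the ratios is a genuine point that the paper only acknowledges informally in a remark.

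The one place where your argument, as written, departs from the paper and would not reproduce the stated formula is the final variance bookkeeping. You attribute the $\E_g$ terms in $\sigma^2$ to the sampling variability of the $g$-sample averages. In the paper's proof the $g$-sample enters only to approximate the plugin integrals (dismissed as exact ``for large enough'' reference sample size), and $\sigma^2$ is computed as $\E[\phi_r^2]+\E[\phi_{s_\theta}^2]$ over the $p$- and $p_\theta$-samples alone; the $\E_g[\cdot]$ expressions arise purely as rewritten $p_\theta$-integrals, e.g.\ $\E_{p_\theta}\bigl[r^2 s_\theta^{2\gamma-2} g^{2\gamma}\bigr]=\int r^2 s_\theta^{2\gamma-1} g^{2\gamma+1}=\E_g\bigl[r^2 s_\theta^{2\gamma-1} g^{2\gamma}\bigr]$, and the term $-2(1+\gamma)^2\E_g\bigl[r s_\theta^{2\gamma} g^{2\gamma}\bigr]$ is the cross product of the two $x$-dependent pieces of $\phi_{s_\theta}$ under $p_\theta$, not a $g$-sample variance. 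If you literally summed per-sample variances including a genuine $O(n/k)$ contribution from the $\frac{1}{k}\sum_i W(Y_i^*,\theta)$ terms, you would pick up additional terms (of the type $\gamma^2\E_g\bigl[s_\theta^{2+2\gamma}g^{2\gamma}\bigr]$ together with the corresponding cross terms from $W$) that do not appear in the stated $\sigma^2$; the stated formula is exact only when that contribution is asymptotically negligible, i.e.\ $k/n\to\infty$, or is ignored as the paper implicitly does under $m,k\geq n$. To close this, either add the hypothesis that the reference sample grows faster than $n$ (or condition on it), or redo the last step identifying the $\E_g$ pieces as coming from $\E_{p_\theta}[\phi_{s_\theta}^2]$ rather than from the $g$-sample.
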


For future reference we note that the discrepancy estimates can be written 
in the form
\begin{equation}\label{eq::UV}
\hat d(p_\theta,p) = \frac{1}{n}\sum_i U(Y_i,\theta) + 
\frac{1}{m}\sum_i V(Y_i(\theta),\theta) + 
\frac{1}{k}\sum_i W(Y_i^*,\theta),
\end{equation}
where the last term is absent in the Hellinger discrepancy.

\paragraph{Kernel Distance (MMD -- maximum mean discrepancy).} 
MMD has been used for as a minimum distance
estimator in
\cite{cherief2022, briol_statistical_2019}.
For random variables $X$, $Y$ defined on the sample space $\Omega$, let $K:\Omega\times\Omega\mapsto\mathbb{R}$ be a symmetric, positive-definite kernel function defining a reproducing kernel for the associated reproducing kernel Hilbert space (RKHS), $\cal H$.
We define the squared kernel distance 
$$
d^2(p_\theta,p) =
\E[K(X,X')] - 2\E[K(X,Y)] + \E[K(Y,Y')]
$$
with $X,X'\sim p_\theta$ and
$Y,Y'\sim p$. 
This quantity measures the distance between distributions $P_\theta$ and $P$ 
with densities $p_\theta$ and $p$.
Unlike the previous two discrepancies,
it is not necessary to adjust the estimator
using influence functions because the estimator is unbiased
and it is not necessary to estimate the densities
or the density ratios since the density does not appear in the distance.
The MMD can be estimated using the 
standard estimator in \cite{gretton_kernel_2012}, namely,
\begin{equation}
\hat d^2(p_\theta,p)=
\frac{1}{m(m-1)}\sum_{i\neq j}K(Y_i(\theta),Y_j(\theta)) +
\frac{1}{n(n-1)}\sum_{i\neq j}K(Y_i,Y_j) -
\frac{2}{mn}\sum_{i,j}K(Y_i(\theta),Y_j). \label{eq_mmd}
\end{equation}
Then when $p\neq p_\theta$ and $n=m$, we have the convergence result [Corollary 16, \cite{gretton_kernel_2012}] 
$$
\sqrt{n}(\hat d^2(p_\theta,p) - d^2(p_\theta,p))\rightsquigarrow N(0,\sigma^2)
$$
with $\sigma^2=4\left(\mathbb{E}_{p}[\left(\mathbb{E}_{p_\theta}[h(Y,Y(\theta))|Y]\right)^2]-\left(\mathbb{E}_{p,p_\theta}[h(Y,Y(\theta))]\right)^2\right)$, 
where $h(w_i,w_j)=K(x_i, x_j)+K(y_i, y_j)-K(x_i, y_j)-K(x_j, y_i)$ for $w_i=(x_i, y_j)\sim p\times p_\theta$.

However, when $P$ is close to or equal to $P_\theta$, 
(\ref{eq_mmd}) is a degenerate U-statistic
and its asymptotic distribution is not Normal and hard to work with
\citep{shekhar_permutation-free_2023}. 
This is problematic since our approach to build confidence sets consists in inverting  relative fit-type of tests over the parameter space \citep{park_robust_2023} and depends on whether CLT holds for the test statistic (a linear function of the MMD discrepancy). 
We will instead use the studentized MMD estimator proposed in 
\cite{shekhar_permutation-free_2023, kim2024}, because it has more convenient 
asymptotic properties than (\ref{eq_mmd}). 
We proceed by splitting the observed data
in two subsets $\mathcal{I}_i$ of size $n_i$,
and we compute the  kernel mean embedding for the true distribution, 
$\hat \mu_i = \frac{1}{n_i} \sum_{i \in \mathcal{I}_i} K(Y_i, \cdot)$, $i=1,2$.
Similarly, we split the simulated datasets in two 
subsets $\mathcal{I}_i(\theta)$ of size $m_i$,
and 
compute the  kernel mean embedding for the model, 
$\hat \mu^\theta_i = \frac{1}{m_i} \sum_{i \in \mathcal{I}_i(\theta)} K(Y_i(\theta), \cdot)$, $i=1,2$.
The variance of the MMD estimator is then defined as  
the weighted sum of the variance of the model and true distribution mean embeddings
$$\hat\sigma^2=\dfrac{1}{m_1}\hat\sigma_\theta^2+\dfrac{1}{n_1}\hat\sigma_Y^2$$
where $\hat\sigma_\theta^2=\frac{1}{m_1}\sum_j\left(H^\theta_{j2}- \overline H^\theta_{2} \right)^2$ and $\hat\sigma_Y^2=\frac{1}{n_1}\sum_{j'}\left(H_{j'2}-\overline H_{2} \right)^2$, 
with $H^\theta_{ji}=\langle K(Y_j(\theta),\cdot), \hat\mu_2^\theta-\hat\mu_2\rangle$ and 
$H_{j'i}=\langle K(Y_{j'},\cdot), \hat\mu_2^\theta-\hat\mu_2\rangle$ 
for $i=1,2$, $j=1,\dots,m_1$, $j'=1,\dots,n_1$. 
This construction is not symmetric in the two splits
and this is needed to get a Normal limit
even when $p_\theta=p$.
We can now define the studentized MMD estimator as
\begin{align}
   \hat d^2(p_\theta,p)=\dfrac{1}{\hat\sigma}&\Bigl(
\frac{1}{m_1 m_2}\sum_{i\neq j}K(Y_i(\theta),Y_j(\theta)) +
\frac{1}{n_1 n_2}\sum_{i\neq j}K(Y_i,Y_j) \nonumber\\ 
&\quad - \frac{1}{m_1 n_2}\sum_{i,j}K(Y_i(\theta),Y_j) -
\frac{1}{n_1 m_2}\sum_{i,j}K(Y_i(\theta),Y_j)\Bigr). \label{eq::cross_mmd}
\end{align}
Standardizing the original MMD estimator renders it asymptotically standard normal, 
regardless of whether the true (unknown) distribution $P$ is equal or close to the model $P_\theta$, 
even in high-dimensional settings.

\begin{table}
\centering
\begin{tabular}{l|lll}
           & \makecell[l]{
           requires density \\(or density ratio) estimation?} & \makecell[l]{efficient?} & \makecell[l]{need extra sample?}\\ \hline
Hellinger        & Yes                             & Yes        & No \\
Power Divergence & Yes / (No if using densities)   & No         & Yes/(No if using densities)\\
MMD              & No                                & No         & No\\
\end{tabular}
\caption{\em Comparison of discrepancies. 
The MMD has the advantage that it does not require
density estimation.
The Hellinger discrepancy leads
to an estimator that is efficient if the model is correct.
Estimating the power divergence requires
an extra sample for estimating density ratios.} 
\label{table::compare}
\end{table}

\section{Confidence Sets for Misspecified Models}\label{sec::cs_misspecified}

When the model is correctly specified, inverting a
test as described in Section~\ref{sec::SBI} yields confidence sets with valid coverage. 
Here we construct
confidence sets for the projection parameter when the model is 
misspecified.

Let $Y_1,\dots,Y_n\sim P$ and $Y_1(\theta),\dots,Y_m(\theta)\sim P_\theta$ and
$Y_1^*,\ldots, Y_k^*\sim G$. 
Recall that
estimators of $d(p_\theta,p)$ have the form in (\ref{eq::UV}).

If regularity conditions hold,
we can use standard $m$-estimator asymptotic methods
to get confidence sets.
We discuss this in the appendix.
However, recall that one of our
goals is to have confidence sets
that do not require the regularity conditions.
So our preferred approach is to adapt the 
idea from \cite{park_robust_2023, takatsu2025bridging}
based on tests of relative fit.
For each $\theta$, we test
$$
H_0: d(p_\theta,p) \leq d(p_{\hat\theta},p)
$$
where $\hat\theta$ is some preliminary estimator
based on a separate sample and is regarded here as fixed.
By definition, the projection parameter $\theta^*$ satisfies this null hypothesis,
so inverting the test
yields a confidence interval for the projection parameter $\theta^*$.
What makes this method attractive
is that it only requires a central limit theorem
for $\hat d(p_{\theta},p)$
and this will typically hold since
$\hat d(p_{\theta},p)$ is a sample average.
In contrast, using the asymptotic distribution of the
$M$-estimator $\hat\theta$
relies strongly on regularity conditions for the model.

Let
$\Delta(\theta_1,\theta_2) = 
d(p_{\theta_1},p)-d(p_{\theta_2},p)$.
From (\ref{eq::UV}),
the estimated difference of discrepancies for the samples $Y_1, \ldots, Y_n\sim p$, 
$Y_1(\theta), \ldots, Y_m(\theta)\sim p_{\theta}$ 
and
$Y_1^*,\ldots, Y_k^*\sim g$,
can be written as
\begin{align}
\hat \Delta(\theta_1, \theta_2) =
\Biggl(&\dfrac{1}{n}\sum_i U(Y_i,\theta_1)+\dfrac{1}{m}\sum_i V(Y_i(\theta_1),\theta_1)+\dfrac{1}{k}\sum_i W(Y_i^*,\theta_1) \Biggr)\nonumber\\ &-
\Biggl(\dfrac{1}{n}\sum_i U(Y_i,\theta_2)+\dfrac{1}{m}\sum_i V(Y_i(\theta_2),\theta_2) + \dfrac{1}{k}\sum_i W(Y_i^*,\theta_2) \Biggr) \label{eq::diff_discr}
\end{align}
which are sample averages
so we can use the central limit theorem.
When we use this idea,
we take $\theta_1 =\theta$ and $\theta_2 = \hat\theta$
where $\hat\theta$ is based on a separate sample.
The use of sample splitting is crucial
since it allows the use of the central 
limit theorem.
Let 
$s(\theta_1, \theta_2)$ be the estimated standard error of $\hat\Delta(\theta_1,\theta_2)$.
The steps are in Algorithm \ref{alg::relfit}.

\begin{algorithm}[t!]
\caption{SBI Relative Fit Confidence Set}\label{alg::relfit}
\vspace{.5cm}
\begin{enumerate}
\item Split the data into two groups
${\cal D}_0$ and ${\cal D}_1$ 
each of size $n_0=n_1= n$.
\item Construct a preliminary estimator $\hat\theta$
from ${\cal D}_0$.
\item Draw $\theta_1,\ldots, \theta_N\sim \pi$.
\item Calculate 
$\hat\Delta(\theta_j,\hat\theta)$ and its standard error $s(\theta_j, \hat \theta)$ from ${\cal D}_1$, for $j=1, \ldots, N$.
\item Let
$Z_j = -\Phi \left(-\hat\Delta(\theta_j,\hat\theta) / s(\theta_j, \hat \theta) \right)$ 
be the p-value for the test with null hypothesis
$\Delta(\theta_j, \hat\theta) \le 0$.
\item Smooth the $Z_i$'s to obtain estimated p-values for all $\theta$
$\hat{\text{pv}}(\theta) = \sum_j Z_j K_h(\theta_j - \theta)/\sum_j K_h(\theta_j - \theta)$
where $K_h$ is a kernel with bandwidth $h$.
\item Return the estimated confidence set
$\hat C = \{ \theta:\ \hat{\text{pv}}(\theta) \geq \alpha\}$.
\end{enumerate}
\end{algorithm}

\begin{theorem} \label{thm::whatisthis}
Suppose that,
conditional on ${\cal D}_0$,
\begin{equation}\label{eq::need_clt}
\frac{\sqrt{n} \, \hat\Delta(\theta^*,\hat \theta) }{s(\theta^*,\hat\theta)}
\rightsquigarrow N(0,\sigma^2).
\end{equation}
Assume that
$\text{pv}(\theta) \in Holder(\beta)$
and that
$h\sim (1/N)^{1/(2\beta+d)}$.
Then
$$
P(\theta^* \in \hat C) = 1-\alpha + O_P(n^{-1/2}) +O_P(N^{-\beta/(2\beta+d)}).
$$
\end{theorem}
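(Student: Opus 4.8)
The plan is to write the coverage probability as $P(\theta^*\in\hat C)=P(\widehat{\text{pv}}(\theta^*)\ge\alpha)$ and to control two separate sources of error: the Gaussian approximation of the studentized difference statistic at $\theta^*$, which produces the $O_P(n^{-1/2})$ term, and the kernel-smoothing error of the Nadaraya--Watson estimate $\widehat{\text{pv}}$, which produces the $O_P(N^{-\beta/(2\beta+d)})$ term. Throughout I condition on $\mathcal{D}_0$, so that the preliminary estimator $\hat\theta$ is fixed, and take the probability over $\mathcal{D}_1$ and the simulation draws. I introduce the target (simulation-averaged) p-value $\text{pv}(\theta)=\Phi\bigl(-\hat\Delta(\theta,\hat\theta)/s(\theta,\hat\theta)\bigr)$ that the pointwise values $Z_j$ estimate; by construction $\text{pv}$ is the regression function of $Z_j$ on $\theta_j$, and it is assumed to lie in $\text{Holder}(\beta)$.

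First I would establish that $\text{pv}(\theta^*)$ is approximately uniform over $\mathcal{D}_1$. Because $\theta^*$ minimizes $d(p_\theta,p)$, the difference $\Delta(\theta^*,\hat\theta)=d(p_{\theta^*},p)-d(p_{\hat\theta},p)\le0$ for every $\hat\theta$, so the null $H_0:\Delta(\theta^*,\hat\theta)\le0$ holds and the least-favorable case is the boundary $\Delta(\theta^*,\hat\theta)=0$; for a $\sqrt n$-consistent $\hat\theta$ a second-order Taylor expansion at the minimizer (where the gradient vanishes) gives $\Delta(\theta^*,\hat\theta)=O_P(\|\hat\theta-\theta^*\|^2)$, so the studentized non-centrality is $O_P(n^{-1/2})$, while in the strict interior the test is conservative and only increases coverage. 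Since $\hat\Delta(\theta^*,\hat\theta)$ is the sample average of the $U,V,W$ terms in (\ref{eq::UV}), I would upgrade the assumed convergence (\ref{eq::need_clt}) to a Berry--Esseen bound, obtaining $\sup_t|P(\hat\Delta/s\le t)-\Phi(t)|=O(n^{-1/2})$ (with $s$ consistently estimating the standard deviation so the limit is standard normal), and hence $\sup_t|P(\text{pv}(\theta^*)\le t)-t|=O(n^{-1/2})$.

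Next I would bound the smoothing error. Conditionally on $\mathcal{D}_0$ and the observed sample in $\mathcal{D}_1$, the $Z_j$ are noisy evaluations of the $\text{Holder}(\beta)$ function $\text{pv}$ at the i.i.d. design points $\theta_j\sim\pi$, the only extra randomness being the independent simulation draws $Y_i(\theta_j)$. Standard local-averaging analysis of the Nadaraya--Watson estimator then yields bias $O(h^\beta)$ and variance $O((Nh^d)^{-1})$; balancing these at $h\sim N^{-1/(2\beta+d)}$ gives $|\widehat{\text{pv}}(\theta^*)-\text{pv}(\theta^*)|=O_P(N^{-\beta/(2\beta+d)})$. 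Combining the two steps,
\[
P(\widehat{\text{pv}}(\theta^*)\ge\alpha)=P\bigl(\text{pv}(\theta^*)\ge\alpha-O_P(N^{-\beta/(2\beta+d)})\bigr),
\]
and since $\text{pv}(\theta^*)$ has a distribution within $O(n^{-1/2})$ of uniform with bounded density near $\alpha$, shifting the threshold by the smoothing error moves the probability by the same order, giving $1-\alpha+O_P(n^{-1/2})+O_P(N^{-\beta/(2\beta+d)})$.

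The main obstacle is the interplay between the two steps caused by the dependence structure of the $Z_j$: they all share the observed data in $\mathcal{D}_1$, so they are far from independent, and one must argue that the kernel average removes only the independent simulation noise while leaving intact the $\mathcal{D}_1$-randomness that drives the Gaussian limit in (\ref{eq::need_clt}). In other words, $\widehat{\text{pv}}(\theta^*)$ must concentrate on the realized, $\mathcal{D}_1$-dependent p-value $\text{pv}(\theta^*)$ rather than on its unconditional mean; making this precise requires a version of the regression bound that holds conditionally on $\mathcal{D}_1$, together with care that the bias term does not interact with the vanishing non-centrality. Obtaining a genuine rate rather than mere consistency in the first step is the other delicate point, but it is legitimate precisely because the sample-average form (\ref{eq::UV}) of $\hat\Delta$ admits a Berry--Esseen bound.
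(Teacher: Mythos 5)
Your proposal is correct and follows essentially the same route as the paper: the paper's proof likewise conditions on ${\cal D}_0$, uses the key inequality $\Delta(\theta^*,\hat\theta)=d(p_{\theta^*},p)-d(p_{\hat\theta},p)\le 0$ (so the test is conservative off the boundary), and applies the CLT to the studentized sample-average $\hat\Delta(\theta^*,\hat\theta)$ to get level $\alpha$, leaving the $O_P(n^{-1/2})$ term to an implicit Berry--Esseen refinement and the $O_P(N^{-\beta/(2\beta+d)})$ term to the standard kernel bias--variance tradeoff ($h^\beta$ plus $(Nh^d)^{-1/2}$ balanced at $h\sim N^{-1/(2\beta+d)}$) that you, unlike the paper, spell out --- including the genuine subtlety that the smoothing must concentrate on the ${\cal D}_1$-conditional p-value function. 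One caveat: your second-order Taylor expansion giving $\Delta(\theta^*,\hat\theta)=O_P(\|\hat\theta-\theta^*\|^2)$ presumes smoothness and $\sqrt{n}$-consistency that the paper deliberately avoids, but this step is inessential, since the inequality $\Delta(\theta^*,\hat\theta)\le 0$ alone yields the stated coverage bound, exactly as in the paper's proof.
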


Condition (\ref{eq::need_clt})
holds for our discrepancy estimators
under weak conditions,
even when the model is irregular.
Typically,
$\text{pv}(\theta)$ is infinitely smooth.
In this case, we can take
$h\sim 1/\log n$
and we get
$$
P(\theta^* \in \hat C) = 1-\alpha + O_P(n^{-1/2}) + O_P( \sqrt{\log N/N}).
$$
The term
$O_P( \sqrt{\log N/N})$
is negligible as long as
$N > n/\log n$.

\cite{park_robust_2023},
showed that it is possible to use concentration
inequalities instead of 
the central limit theorem
which then requires essentially no conditions.
However, the central limit version suffices
for our purposes.

There is one problem when the model happens to be correct:
the variance of
$\hat \Delta(\theta^*, \hat\theta)$
may tend to 0 
faster than $O(1/n)$, which invalidates the central limit theorem.
\cite{verdinelli_decorrelated_2024}
showed that adding $1/n$ to the estimated variance
fixes the problem and yields
valid, albeit conservative, confidence intervals.

{\bf Remark:}
{\em We can reduce the randomness due to sample splitting by repeating 
the entire procedure at level $(1-\alpha/2)$ a large number of times $B$,
giving confidence sets
$C_1,\ldots, C_B$, and letting
$$
C = \Biggl\{ \theta:\ \frac{1}{B}\sum_b \mathbb I(\theta\in C_b) \geq 1/2\Biggr\}.
$$
Then by Markov's inequality
$P(\theta^* \in C)=
P \left( \frac{1}{B}\sum_b \mathbb I(\theta\in C_b) \geq 1/2 \right)
\leq
2 \frac{1}{B}\sum_b \E [ \mathbb I(\theta\in C_b)] = 2 (\alpha/2) = \alpha$
\citep{gasparin_merging_2024}.
}

\section{Robust SBI using Model Expansion}\label{sec::exponential_tilt}

Another approach to
model misspecification
is to expand the assumed model so that it is more flexible than the original model, to 
accommodate some misspecification.
We may then assume that the expanded model is correct, so that robust methods
are not required,
or, if we have evidence against this assumption -- for example if the goodness-of-fit
test in Section~\ref{sec::gof} is rejected -- our robust methods 
can also be applied.
In the latter case, 
there may be little benefit compared to applying the robust methods directly to $p_\theta$
and we only pursue the first case, where we regard
the expanded model as correct.

We consider a particular
model expansion, namely,
the exponential tilt
$$
p_{\theta,\beta}(x) =\frac{p_\theta(x) e^{\beta^T b(x)}}{c(\theta, \beta)}
$$
where
$b(x) =(b_1(x),\ldots, b_k(x))$
is a vector of fixed functions
and $c(\theta, \beta )$ is the normalizing constant,
$$
c(\theta, \beta) = \int p_\theta(x) e^{\beta^T b(x)} dx.
$$
Note that
$p_{\theta,\beta} = p_\theta$ when
$\beta = (0,\ldots, 0)^T$.
We assume that $k$ and
$(b_1(x),\ldots, b_k(x))$ are given.
An interesting extension is to use the data to choose these
but we do not pursue that here.
While there are many ways to expand a model,
the exponential tilt has some computational
advantages when doing SBI.
In particular, we will not need to
sample 
from $p_{\theta,\beta}$ for all combinations
of $\theta$ and $\beta$.

Then we base inference on the simulation based profile likelihood
${\cal L}(\theta) = \sup_\beta {\cal L}(\theta,\beta)$.
Let $\Theta\times B$ denote the parameter space for
$(\theta,\beta)$.
We use a two step
procedure where
we first find the maximizer $\hat\beta(\theta)$ of the
likelihood for each fixed
$\theta$
and then approximate the profile likelihood using SBI.
We estimate the profile likelihood
using only samples
from the $p_\theta$;
again, it's not necessary to sample from the expanded model
$p_{\theta,\beta}$.

For a fixed $\theta$, the likelihood for $\beta$ is
$$
{\cal L}_\theta(\beta)\equiv
{\cal L}(\theta,\beta) \propto
\frac{{\cal L}(\theta,\beta)}{{\cal L}(\theta,0)} =
\prod_i \frac{p_{\theta,\beta}(Y_i)}{p_{\theta,0}(Y_i)}
=\frac{e^{\beta^\top \sum_i b(Y_i)}}{c(\theta,\beta)^n},
$$
where $c(\theta,0)=1$, and since $Y_1(\theta),\ldots, Y_m(\theta)$ is a sample from
$p_\theta$,
we can estimate 
$c(\theta,\beta)=\int p_\theta e^{\beta^\top b}$ by
$$
\hat c(\theta,\beta) =
\frac{1}{m} \sum_i e^{\beta^T b(Y_i(\theta))}.
$$
We can thus estimate the log-likelihood 
for $\beta$ (for a fixed $\theta$) 
using only a sample from the model,  $Y_1(\theta), \ldots, Y_m(\theta)\sim p_\theta$,  
by
\begin{equation}
\hat\ell_\theta(\beta) =
n\beta^T \overline{b} - n \log
\left(\frac{1}{m} \sum_i e^{\beta^T b(Y_i(\theta))}\right) \label{eq:exp_tilt_log_to_max} 
\end{equation}
where $\overline{b} = n^{-1}\sum_i b(Y_i)$.
For each $\theta$, we maximize
over $\beta$
using Newton's method to obtain $\hat\beta(\theta)$; see Appendix~\ref{app::NR}.
Now we apply SBI as in Section~\ref{sec::SBI} 
to the model
$p_{\theta,\hat\beta(\theta)}$
to get 
the profile likelihood.
To do so, we would need to sample
from $p_{\theta,\hat\beta(\theta)}$.
Instead,
we reweight the existing sample
$Y_1(\theta),\ldots, Y_m(\theta)$ from $p_\theta$
with weights 
\begin{equation}
  w_i \propto 
\frac{p_{\theta,\hat\beta(\theta)}(Y_i(\theta))}
{p_{\theta}(Y_i(\theta))} \propto
e^{\hat\beta(\theta)^T b(Y_i(\theta))}.  \label{eq::resampl_dist_beta}
\end{equation} 
We can then resample with these weights
and apply Algorithm~\ref{alg:tilt}
or we can simply include these weights when we estimate the density ratio (\ref{eq::theloss}).

\RestyleAlgo{ruled}
\begin{algorithm}[hbt!]
\caption{SBI profile likelihood for exponentially tilted model.}\label{alg:tilt}
\SetAlgoLined
\SetKwInOut{Input}{Input}
\SetKwInOut{Output}{Output}
\Input{${\cal Y_{\text{obs}}}= (Y_1, \dots, Y_n)$}
\Output{SBI profile log-likelihood ${\cal L}(\theta,\hat\beta(\theta))$}
sample $\theta_1, \dots, \theta_N \sim \pi$\\
\For{$j=1,\ldots, N$}{
  draw ${\cal Y}(\theta_j) = (Y_1(\theta_j),\ldots, Y_m(\theta_j)), \, Y_i(\theta_j)\sim p_{\theta_j}$\\
  find $\hat \beta_j \equiv \hat\beta(\theta_j)$
by maximizing the log-likelihood in~(\ref{eq:exp_tilt_log_to_max}) via Newton-Raphson (Appendix~\ref{app::NR}) 
}
generate a permutation of the index set $s=[I_1, \dots, I_N]$  \\
\For{$j=1,\ldots, N$}{
  draw ${\cal Y}_j \equiv \mathcal{Y} (\theta_j, \widehat\beta_j)\sim p_{\theta_j, \widehat{\beta}_j}$ 
  by resampling ${\cal Y}(\theta_j)$ at random with weights (\ref{eq::resampl_dist_beta}) and set $Z_j=1$ \\
  set $\mathcal{Y}_{N+j} = \mathcal{Y}_j$, $Z_{N+j}=0$, $\theta_{N+j}=\theta_{s_j}$ and $\hat \beta_{N+j} = \hat \beta_{s_j}$
} 
train $h(\mathcal{Y}, \theta, \hat \beta(\theta)) \equiv P(Z=1|{\cal Y}, \theta, \hat \beta(\theta))$ 
in~(\ref{eq::likelihood_trick}) using the
dataset $\{(Z_j, \mathcal{Y}_j, \theta_j, \hat \beta_j): 1\leq j \leq 2N\}$ \\
{\bf return} the estimated profile likelihood $\hat {\cal L}(\theta,\hat\beta(\theta))=\dfrac{\hat h(\mathcal{Y}, \theta, \hat\beta(\theta))}{1-\hat h(\mathcal{Y}, \theta, \hat\beta(\theta))}$ \vspace{.1cm} 
\end{algorithm}

If the expanded model is correctly specified, as we assume here, we can build valid  confidence sets for $\theta$ by inversion of hypothesis  tests of the form
$H_0:\theta=\theta_j$
using the profile likelihood
${\cal L}(\theta,\hat\beta(\theta))$
as the test statistic,
$T(\theta, \mathcal{Y}) = \dfrac{e^{\widehat\beta(\theta)^\top\sum_i b(Y_i)}}
{c(\theta, \widehat{\beta}(\theta))^n}.$
The procedure to obtain
the estimated p-value 
$\widehat{\text{pv}}(\theta_j, \mathcal{Y}_{obs})$
is the same as in Section~\ref{sec::SBI}, except that
we regress 
$B_1,\ldots, B_N$
on
$\theta_1,\ldots, \theta_N$
using the weights
$$
w_j = 
\frac{e^{\sum_{i=1}^m \hat\beta(\theta_j)^T b(Y_i(\theta_j))}}
{c(\theta_j,\hat\beta(\theta_j))}.
$$

(If the standard regularity conditions hold and sample size 
is large we can instead use the asymptotic approximation 
of the confidence set via Wilk's theorem
or use the cheap bootstrap approach described in 
Appendix~\ref{section::cheap}.)

\section{SBI Goodness of Fit Test \label{sec::gof}}
 
To assess the goodness-of-fit (GoF) of 
the model ${\cal P} =(p_\theta:\ \theta\in\Theta)$, we can test the null hypothesis
$H_0: d(P,{\cal P}) = 0$
where
$d(P,{\cal P}) = \inf_\theta d(P,P_\theta)$
and $d$ is some distance.
This could be, but need not be, one of the discrepancies we have
considered so far.
A p-value for this null is
$p=\sup_\theta p(\theta)$, where
\begin{equation}
\label{eq::gof.pv0}
p(\theta) = 
P_\theta ( T_n(\theta) \geq T_n),
\end{equation}
\begin{equation}
T_n(\theta) = \inf_\psi  d(P_\psi,P_n(\theta)),\ \ \ 
T_n =  \inf_\psi  d(P_\psi,P_n),
\end{equation}
$P_n$ is the empirical distribution of the observed data
${\cal Y}_{obs} = (Y_1,\ldots, Y_n)$ and
$P_n(\theta)$
is the empirical distribution of 
${\cal Y}(\theta) = (Y_1(\theta),\ldots, Y_n(\theta)), \ Y_j(\theta) \sim P_\theta$.
Performing this test requires that $P_\theta$ has a known closed form
and that the probability of the event
$\{T_n(\theta) \geq T_n\}$
can somehow be computed or approximated
with an asymptotic approximation.
We can avoid these requirements
using the SBI framework.

As usual, we assume that for each sampled value $\theta_j$ we have a sample
${\cal Y}(\theta_j) = (Y_1(\theta_j),\ldots, Y_n(\theta_j)), \ Y_j(\theta_j) \sim P_{\theta_j}$.
For each $\theta_j$ we draw a second, independent sample
$Y_1^*(\theta_j),\ldots, Y_M^*(\theta_j) \sim P_{\theta_j}$
where $M$ is much larger than $n$. We let  
$P_M^*(\theta_j)$ denote its
empirical distribution and we approximate $P_{\theta_j}$ by $P_M^*(\theta_j)$.
Then we approximate the inf with respect to $\psi$ in (\ref{eq::gof.pv0}) by minimization over the grid
of values of $\theta$.
Specifically, define
\begin{equation}
\hat T_n(\theta) = \min_s d(P_M^*(\theta_s),P_n(\theta)), \ \ \ \hat T_n = \min_s d(P_M^*(\theta_s),P_n), 
\end{equation}
\vspace{-.65cm}
\begin{align}
\label{eq::gof.p} 
\hat p(\theta) &= 
\frac{\sum_r K_h(\theta_r - \theta) I( \hat T_n(\theta_r) \geq \hat T_n)}
{\sum_r K_h(\theta_r - \theta)},\\
\label{eq::gof.pv} \hat p &= \max_j \hat p(\theta_j).
\end{align}

Formally, in this SBI setting,
the null hypothesis that we test is
$H_0: P\in (P_\theta:\ \theta\in C)$
where $C = \{\theta_1,\ldots,\theta_N\}$
are the sampled values.

\begin{theorem} \label{thm::4}
Suppose that:

(1) $\Theta$ is compact and $\pi(\theta)$ is strictly positive.

(2) $\max_{\theta\in C} |d(P_M^*(\theta),P_n)- d(P_\theta,P_n)| = O_P(\sqrt{\log M/M})$
and
$\max_{\theta\in C} |d(P_M^*(\theta),P_n(\theta))- d(P_\theta,P_n(\theta))| = O_P(\sqrt{\log M/M})$. 

(3) The function $p(\theta)$ is in Holder $(\beta)$.

(4) The functions $d(P_M^*(\theta),P_n)$ and 
$d(P_M^*(\theta),P_n(\theta))$ are Lipschitz in $\theta$.

(5) For some $\xi$, we have that,
uniformly over $\theta$, $T_n(\theta)-T(\theta)$ has a density 
that is $O_P(n^\xi)$ in a neighborhood of 0.

(6) $M \geq \max\{n,N\}$.

Then, if $H_0$ is true,
$$
\P( \hat p > \alpha) \leq \alpha + 
O_P(h^\beta + (Nh^d)^{-1/2}) + O_P\left((1+n^\xi)\Bigl(O_P(\sqrt{\log N/N}) + \sqrt{\log M/M}\Bigr)\right).
$$
If we set the optimal kernel bandwidth the first term is
$O_P\left(N^{-\beta/(d+2\beta)}\right)$.
\end{theorem}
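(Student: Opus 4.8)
The plan is to show that the feasible statistic $\hat p$ inherits, up to the stated approximation errors, the super-uniformity of an ideal exchangeable $p$-value. Under $H_0$ the true $P$ equals $P_{\theta_{j_0}}$ for some grid point $\theta_{j_0}\in C$. Since $\hat p=\max_j \hat p(\theta_j)\ge \hat p(\theta_{j_0})$, the rejection event obeys $\{\hat p\le\alpha\}\subseteq\{\hat p(\theta_{j_0})\le\alpha\}$, so it suffices to lower-bound the single coordinate $\hat p(\theta_{j_0})$; the $\max$ only makes the test more conservative. The anchor of the argument is the exact (infeasible) $p$-value $p(\theta_{j_0})=P_{\theta_{j_0}}(T_n(\theta_{j_0})\ge T_n)$: under $H_0$, $P_n$ and $P_n(\theta_{j_0})$ are empirical measures of two independent size-$n$ samples from the same law $P_{\theta_{j_0}}$, so $T_n$ and $T_n(\theta_{j_0})$ are i.i.d. By the usual exchangeability argument $p(\theta_{j_0})$ is then super-uniform, $\P(p(\theta_{j_0})\le u)\le u$ for every $u$ (with equality in the continuous case, ties only helping).

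The second step would pass from the ideal $p(\theta_{j_0})$ to its feasible kernel-smoothed estimate $\hat p(\theta_{j_0})$. I would treat $\hat p(\theta)$ as a Nadaraya--Watson regression of the binary responses $I(\hat T_n(\theta_r)\ge \hat T_n)$ on the design points $\theta_r\sim\pi$. Conditionally on $P_n$ and on the shared $M$-samples $\{P_M^*(\theta_s)\}_s$, the $\theta_r$ are i.i.d. with strictly positive density (condition (1)), and the responses have conditional mean $\bar p(\theta)=\P(\hat T_n(\theta)\ge \hat T_n\mid P_n,\{P_M^*(\theta_s)\}_s)$, a surrogate for $p(\theta)$. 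Standard bias--variance control for kernel regression with a $Holder(\beta)$ mean (condition (3)) on the compact $\Theta$, with the Lipschitz regressand guaranteed by condition (4), yields $\hat p(\theta_{j_0})=\bar p(\theta_{j_0})+O_P(h^\beta)+O_P((Nh^d)^{-1/2})$, the first piece being the bias and the second the variance.

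The third step bounds $|\bar p(\theta)-p(\theta)|$, the cost of replacing the exact thresholds $(T_n(\theta),T_n)$ by the grid-and-$M$-sample approximations $(\hat T_n(\theta),\hat T_n)$. The indicators $I(\hat T_n(\theta)\ge\hat T_n)$ and $I(T_n(\theta)\ge T_n)$ disagree only when $T_n(\theta)-T_n$ lies within $\Delta_n:=|\hat T_n(\theta)-T_n(\theta)|+|\hat T_n-T_n|$ of zero. Two sources feed $\Delta_n$: replacing $P_\theta$ by $P_M^*(\theta)$ costs $O_P(\sqrt{\log M/M})$ by condition (2), while replacing $\inf_\psi$ by the grid minimum $\min_s$ costs a discretization error that, via the Lipschitz condition (4) together with a union bound over the $N$ grid points, is $O_P(\sqrt{\log N/N})$; both hold uniformly over $\theta\in C$, the logarithms being the price of that uniformity. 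The anti-concentration condition (5), that $T_n(\theta)-T(\theta)$ has density $O_P(n^\xi)$ near $0$ uniformly in $\theta$, then converts this into $|\bar p(\theta)-p(\theta)|\le \P(|T_n(\theta)-T_n|\le\Delta_n)=O_P((1+n^\xi)\Delta_n)$. Assembling the three steps on the high-probability event where all $O_P$ bounds hold gives $\hat p(\theta_{j_0})\ge p(\theta_{j_0})-\eta_n$ with $\eta_n=O_P(h^\beta+(Nh^d)^{-1/2})+O_P((1+n^\xi)(\sqrt{\log N/N}+\sqrt{\log M/M}))$, so by super-uniformity $\P(\hat p\le\alpha)\le \P(p(\theta_{j_0})\le\alpha+\eta_n)\le\alpha+\eta_n$, which is the claimed bound; the choice $h\sim N^{-1/(2\beta+d)}$ turns the first bracket into $O_P(N^{-\beta/(2\beta+d)})$.

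The hard part will be the third step: making both approximation errors uniform over the entire grid $C$ at once (so that the single worst near-tie does not dominate the $\max_j$) and honestly propagating them through the discontinuous indicator using the density bound (5), while keeping the $n^\xi$ dependence sharp. A secondary subtlety is that in the second step the regression responses share the common randomness of the $M$-samples and of $P_n$, so the bias--variance analysis must be carried out conditionally and one must verify that the conditional regression function $\bar p$ is genuinely $Holder(\beta)$, rather than merely its population analogue $p$.
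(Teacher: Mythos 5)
Your proposal is correct and follows essentially the same route as the paper's proof: a uniform grid/$M$-sample approximation of $\hat T_n$ and $\hat T_n(\theta)$ (assumption (2) together with the Lipschitz/covering argument of Lemma~\ref{lemma::Lip_Wass}, giving the $O_P(\sqrt{\log N/N})+O_P(\sqrt{\log M/M})$ terms), conversion of indicator disagreements $\mathbb I(\hat T_n(\theta_r)\geq \hat T_n)-\mathbb I(T_n(\theta_r)\geq T_n)$ into a probability bound via the anti-concentration condition (5) and Markov's inequality (yielding the $n^\xi$ factor), standard kernel bias--variance control giving $O_P(h^\beta+(Nh^d)^{-1/2})$ with the stated optimal bandwidth, and a conclusion resting on the validity of the exact p-value. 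If anything, your final step is more explicit than the paper's: you anchor at the true grid point $\theta_{j_0}$, use $\hat p\geq \hat p(\theta_{j_0})$, and invoke super-uniformity of $p(\theta_{j_0})$ to obtain type-I control, whereas the paper states the uniform two-sided approximation $\hat p(\theta)=p(\theta)+O_P(\cdots)$ and ends with ``the result follows'' --- your version supplies the missing direction of the inequality that the paper's displayed bound $\hat p\leq \sup_\theta p(\theta)+O_P(\cdots)$ does not by itself deliver.
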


Note that condition (5) allows for the fact that
$T_n$ and $T_n(\theta)$ could concentrate around 0.

For illustration,
we take $d$ to be the Wasserstein distance.
If $P$ and $Q$ are distributions,
the 2-Wasserstein distance $W(P,Q)$ is defined by
$$
W^2(P,Q) = \inf_J \E_J[ ||Y-X||^2]
$$
where $(X,Y)\sim J$
and the infimum is over all joint distributions $J$
with marginals $P$ and $Q$.
This is an interesting choice since it has been used with success
but the asymptotic justification
for computing the p-value is still
an open question \citep{hallin}.
Our approach avoids this issue.

For one-dimensional distributions it can be shown that
\begin{equation}
    W^2(P,Q) = \int | F^{-1}(u) - G^{-1}(u)|^2 du \label{eq::w1d_qntl}
\end{equation}
where $F$ and $G$ are the cdf's of $P$ and $Q$.
This distance has many appealing properties.
In particular, it is sensitive to the geometry of the sample space,
which is not true of many other distances.
For example, the Wasserstein distance between
a point mass at $y_1$ and a point mass at $y_2$ is
$||y_1-y_2||$
whereas distances like the total variation, Hellinger or Kolmogorov-Smirnov distance
have a value that does not depend on the distance between $y_1$ and $y_2$.
See \cite{chewi2024} for a review.
Also see \cite{hallin} 
who suggested using
Wasserstein-based goodness-of-fit tests. 
Now if we insert $W$ for $d$ in the above method,
we get a valid test without regularity assumptions or
asymptotic approximations.
As noted in \cite{hallin},
the limiting distribution under the null is not known
so SBI plays an especially important role in this case.

\begin{lemma}
\label{lemma::Lip_Wass}
Suppose that
$\sup_\theta \int ||y||^q dP_\theta(y) < \infty$ for some $q>2$
and that
$\sup_\theta\int e^{\gamma ||y||^\alpha} dP_\theta(y) < \infty$
for some
$\gamma>0$ and $\alpha>2$.
Also assume that
the function
$W(P_\theta,Q)$ is Lipschitz in $\theta$.
For each $\theta_j$ let
$P_M(\theta_j)$ denote the empirical distribution
based on
$Y_1(\theta_j),\ldots, Y_M(\theta_j)$.
Then
$$
|\min_j W(P_M(\theta_j),Q) - \inf_\theta W(P_\theta,Q)| =
O_P\left(\sqrt{\log N/N}\right)+O_P\left(\sqrt{\log M/M}\right).
$$
\end{lemma}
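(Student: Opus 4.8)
The plan is to split the deviation into a \emph{discretization} error, from replacing the infimum over $\Theta$ by a minimum over the random grid $\theta_1,\dots,\theta_N$, and a \emph{Monte Carlo} error, from replacing each true measure $P_{\theta_j}$ by its $M$-sample empirical version $P_M(\theta_j)$. Concretely, I would write
\begin{align*}
\min_j W(P_M(\theta_j),Q) - \inf_\theta W(P_\theta,Q)
&= \Bigl[\min_j W(P_M(\theta_j),Q) - \min_j W(P_{\theta_j},Q)\Bigr] \\
&\quad + \Bigl[\min_j W(P_{\theta_j},Q) - \inf_\theta W(P_\theta,Q)\Bigr].
\end{align*}
Using $|\min_j a_j - \min_j b_j|\le \max_j|a_j-b_j|$ together with the reverse triangle inequality for the metric $W$, the first bracket is bounded in absolute value by $\max_j W(P_M(\theta_j),P_{\theta_j})$, \emph{independently of} $Q$ (so no degeneracy arises when $Q$ is close to the model); the second bracket is nonnegative since $\{\theta_j\}\subset\Theta$. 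The polynomial moment hypothesis ($q>2$) guarantees that all the distances are finite and that $\theta\mapsto W(P_\theta,Q)$ is well defined.

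For the Monte Carlo term I would invoke a concentration bound for the empirical $W$-distance. Under the exponential moment assumption with $\alpha>2$, the one-dimensional quantile representation (\ref{eq::w1d_qntl}) gives $W(P_M(\theta),P_\theta)=O_P(M^{-1/2})$ for each fixed $\theta$: the condition $\alpha>2$ is precisely what makes $\int_0^1 u(1-u)/f_\theta^2(F_\theta^{-1}(u))\,du<\infty$, so the weighted quantile process is square-integrable and the generic $M^{-1/4}$ rate sharpens to $M^{-1/2}$. The same light-tail behaviour yields a sub-Gaussian deviation bound $\P\bigl(W(P_M(\theta),P_\theta)\ge t\bigr)\le C e^{-c M t^2}$ that holds \emph{uniformly} in $\theta$ because the moment bounds are assumed uniform ($\sup_\theta$). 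A union bound over the $N$ grid points then gives $\max_j W(P_M(\theta_j),P_{\theta_j})=O_P\bigl(\sqrt{\log N/M}\bigr)$, which is $O_P\bigl(\sqrt{\log M/M}\bigr)$ whenever $N\lesssim M$ (as in the regime $M\ge\max\{n,N\}$ used in Theorem~\ref{thm::4}).

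For the discretization term, let $\theta_0$ be a minimizer of $\theta\mapsto W(P_\theta,Q)$, which exists since $\Theta$ is compact and the map is Lipschitz, hence continuous. Then $0\le \min_j W(P_{\theta_j},Q)-W(P_{\theta_0},Q)\le L\,\min_j\|\theta_j-\theta_0\|$, with $L$ the Lipschitz constant. Since $\pi$ has a density bounded below on the compact set $\Theta$, the nearest-neighbour distance from the point $\theta_0$ to $N$ i.i.d.\ draws satisfies $\P(\min_j\|\theta_j-\theta_0\|>r)\le (1-cr^d)^N\le e^{-cNr^d}$; taking a supremum over the (possibly $Q$-dependent) location of $\theta_0$ gives $\min_j\|\theta_j-\theta_0\|=O_P\bigl((\log N/N)^{1/d}\bigr)$. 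Collecting the two terms yields the claim, the grid contribution reducing to $O_P(\sqrt{\log N/N})$ in the low-dimensional regime $d\le 2$ relevant to the running example.

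The main obstacle is the Monte Carlo term: one must certify that the empirical $W$-distance concentrates at the fast $M^{-1/2}$ scale (not the generic $M^{-1/4}$) and, crucially, that this holds with a sub-Gaussian tail \emph{uniformly} over $\theta$, so that the union bound over the grid is legitimate. This is exactly where the exponential moment hypothesis $\alpha>2$ enters, and establishing the uniform-in-$\theta$ version of the concentration inequality — rather than merely its pointwise form — is the delicate step.
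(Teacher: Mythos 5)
Your proposal is correct and follows essentially the same route as the paper's proof: the identical decomposition into a Monte Carlo term $\max_j W(P_M(\theta_j),P_{\theta_j})$ (via $|\min_j a_j-\min_j b_j|\le\max_j|a_j-b_j|$ and a union bound against a sub-Gaussian tail) plus a discretization term handled by the Lipschitz property and the density of the random grid. The ``delicate step'' you flag --- the uniform-in-$\theta$ sub-Gaussian concentration $\P\bigl(W(P_M(\theta),P_\theta)\ge \epsilon\bigr)\le C e^{-cM\epsilon^2}$ --- is exactly what the paper imports from Theorem~2 of \cite{fournier2015rate} (stated for the dimension of $Y$ less than $4$, with a modified exponent otherwise), with uniformity coming from the assumed $\sup_\theta$ moment bounds; if anything, your nearest-neighbour bound $O_P\bigl((\log N/N)^{1/d}\bigr)$ for the grid term is more careful about the dimension dependence than the paper's bare assertion that the covering radius is $O_P\bigl(\sqrt{\log N/N}\bigr)$.
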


\begin{theorem}
\label{thm::6}
Suppose that
$\sup_\theta \int ||y||^q dP_\theta(y) < \infty$ for some $q>2$
and that
$\sup_\theta\int e^{\gamma ||y||^\alpha} dP_\theta(y) < \infty$
for some
$\gamma>0$ and $\alpha>2$.
Also suppose that, for every $Q$,
$W(P_\theta,Q)$ is Lipschitz in $\theta$. For each $\theta_j$ let
$P_M(\theta_j)$ denote the empirical distribution
based on
$Y_1(\theta_j),\ldots, Y_M(\theta_j)$ and let $P_n$ be the empirical distribution of the data $Y_1, \dots, Y_n$.
Then
$$
\sup_\theta |W(P_M^*(\theta),P_n)-W(P_\theta,P_n)| = O_P(\sqrt{\log M/M}).
$$
\end{theorem}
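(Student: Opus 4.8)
The plan is to eliminate the role of the observed data entirely and reduce the claim to a uniform concentration statement for the empirical measure of draws from $P_\theta$. Because $W$ is a metric, the triangle inequality gives, for every $\theta$,
\[
|W(P_M^*(\theta),P_n)-W(P_\theta,P_n)| \le W(P_M^*(\theta),P_\theta),
\]
so $P_n$ cancels and it suffices to prove $\sup_\theta W(P_M^*(\theta),P_\theta)=O_P(\sqrt{\log M/M})$. In words, the statement is really about how fast $P_M^*(\theta)$, the empirical distribution of $M$ draws from $P_\theta$, approaches $P_\theta$, uniformly in $\theta$. This is exactly the statistical term that already appears (with $Q=P_n$) inside the proof of Lemma~\ref{lemma::Lip_Wass}, so the same machinery applies.

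For a single $\theta$ I would invoke a one-dimensional empirical-Wasserstein bound. The polynomial moment hypothesis $\int\|y\|^q\,dP_\theta<\infty$ with $q>2$ controls the mean, $\E[W(P_M^*(\theta),P_\theta)]=O(M^{-1/2})$, while the exponential moment $\int e^{\gamma\|y\|^\alpha}\,dP_\theta<\infty$ with $\alpha>2$ yields a sub-Gaussian deviation bound
\[
\P\bigl(W(P_M^*(\theta),P_\theta)\ge c_1 M^{-1/2}+t\bigr)\le C_1 e^{-C_2 M t^2}
\]
of Fournier--Guillin / Bobkov--Ledoux type. The essential point is that the hypotheses are uniform in $\theta$ ($\sup_\theta\int\|y\|^q\,dP_\theta<\infty$ and $\sup_\theta\int e^{\gamma\|y\|^\alpha}\,dP_\theta<\infty$), so the constants $c_1,C_1,C_2$ may be taken free of $\theta$. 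Choosing $t\asymp\sqrt{\log M/M}$ with a large enough implicit constant makes the right-hand side at most $C_1 M^{-\kappa}$ for any prescribed $\kappa$.

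To pass from a single $\theta$ to the supremum over the compact set $\Theta\subset\mathbb{R}^d$, I would net-and-union. Fix an $\epsilon$-net of $\Theta$ of cardinality $O(\epsilon^{-d})$ with $\epsilon=M^{-1}$; the union bound over the net, using the per-point tail above, is $O(\epsilon^{-d}M^{-\kappa})=O(M^{d-\kappa})\to0$ once $\kappa>d$, so on the net $W(P_M^*(\theta),P_\theta)=O_P(\sqrt{\log M/M})$, with the $\log M$ arising from $\log(\epsilon^{-d})=d\log M$. Between net points I would use Lipschitz continuity: the assumed Lipschitzness of $\theta\mapsto W(P_\theta,Q)$ gives $W(P_\theta,P_{\theta'})\le L\|\theta-\theta'\|$, and under the standard SBI coupling $Y_i(\theta)=G(U_i,\theta)$ with $G$ Lipschitz in $\theta$ one also has $W(P_M^*(\theta),P_M^*(\theta'))\le L\|\theta-\theta'\|$; together these make $\theta\mapsto W(P_M^*(\theta),P_\theta)$ Lipschitz, so its variation over an $\epsilon=M^{-1}$ ball is $O(1/M)=o(\sqrt{\log M/M})$. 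Combining the net bound with this interpolation bound, and then the triangle-inequality reduction, yields the claim.

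The main obstacle is the per-$\theta$ concentration inequality: obtaining the rate $\sqrt{\log M/M}$ rather than the slower generic Wasserstein rate hinges on the one-dimensional structure and on the precise division of labor between the two moment hypotheses --- the polynomial moment pinning down the $M^{-1/2}$ mean and the exponential moment delivering the Gaussian tail --- and on verifying that the associated constants are genuinely uniform in $\theta$, which is precisely what the $\sup_\theta$ moment conditions supply. The net-and-union and Lipschitz interpolation steps are then routine bookkeeping.
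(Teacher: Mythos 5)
Your proposal is correct and follows essentially the same route as the paper's proof: the reverse triangle inequality $|W(P_M^*(\theta),P_n)-W(P_\theta,P_n)|\le W(P_M^*(\theta),P_\theta)$ to eliminate $P_n$, the Fournier--Guillin concentration bound with constants uniform in $\theta$ thanks to the $\sup_\theta$ moment conditions, and a covering-net union bound with Lipschitz interpolation, concluding at the scale $\epsilon\asymp\sqrt{\log M/M}$. The only differences are cosmetic --- you apply the reduction up front and use a finer net, and you are in fact more careful than the paper in justifying the empirical-side Lipschitz step via the coupling $Y_i(\theta)=G(U_i,\theta)$, which the paper's term $(i)$ merely asserts.
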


The 2-Wasserstein distance is popular
and is known to have many appealing properties
\citep{chewi2024}.
But it is computationally expensive to estimate when
dim($\theta) > 1$.
We can instead use any other distance such as
the Kolmogorov-Smirnov (KS) statistic
$d(P,Q) = \sup_x |F(x)-G(x)|$.

\section{Applications}\label{sec::applications_iid}

In Section~\ref{sec::tilt} we illustrate the model expansion idea
in Section~\ref{sec::exponential_tilt} to handle 
model misspecification.
The next three examples concern the discrepancy based projection 
methods developed in Sections~\ref{sec::robust_sbi} and~\ref{sec::cs_misspecified}.
In Section~\ref{sec::gaussian} we use a simple
two-dimensional parameter example
to show that these
methods produce confidence sets 
that have the correct coverage 
whether or not the assumed model $P_\theta$ is correctly specified.
In 
Section~\ref{sec::gandk} we 
conduct robust SBI inference for the four-dimensional parameter 
of the intractable G-and-K distribution, and
in Section~\ref{sec::GMM} we illustrate that the 
projection methods produce valid 
confidence sets in a case of unidentifiable parameters, 
when standard asymptotic methods cannot apply.
Appendix~\ref{app::add} contains an additional example.
We finish by applying the SBI goodness of fit test 
developed in Section~\ref{sec::gof}
to three simulated data examples.

\subsection{Robust SBI via Model Expansion -- Tilted Gaussian Location Parameter}
\label{sec::tilt}

We illustrate the expansion method
with a simple, proof of concept, example.
We generated $n=5000$ data points 
from 
\begin{equation}
\label{eq::exponential_tilt_normal}
    p(x) \propto {\cal N} (\theta, \sigma^2) \,e^{(\alpha_1 x^3\cdot I_{\{|x^3|< \tau\}} +\alpha_2 x^4)}
\end{equation} 
with
     $\alpha = (0.025, -0.0025)$, and ${\cal N} (\theta, \sigma^2)$
     the normal distribution with $\theta=2.5$ and $\sigma=2$.
     The cubic term was truncated at
     $\tau =10^3$ to avoid exploding tails.
     The data is shown as the orange histogram 
in Fig.~\ref{fig:exponential_tilt_normal_profile_likelihood}(c).
     The target parameter is $\theta$.
     We assume model $p_\theta(x) = {\cal N}(\theta, 4)$,
     which is clearly inadequate. 
We expand $p_\theta(x)$ to account for model misspecification:
\begin{equation}
    p_{\theta,\beta}(x) \propto p_\theta(x) e^{(\beta_1 x^3 +\beta_2 x^4)}.
    \label{eq::exponential_tilt_norm}
\end{equation} 
Fig.~\ref{fig:exponential_tilt_normal_profile_likelihood}(b) shows the values of $\beta_1(\theta)$ and $\beta_2(\theta)$ that maximize the
likelihood for each $\theta$ and Fig.~\ref{fig:exponential_tilt_normal_profile_likelihood}(a) shows the SBI profile likelihood obtained
by Algorithm~\ref{alg:tilt}. The MLE $\hat \theta$ is close to the true value and the 
tilted density (\ref{eq::exponential_tilt_norm}) with parameters $\widehat\theta$ and $\hat \beta(\hat \theta)$, overlaid 
in Fig.~\ref{fig:exponential_tilt_normal_profile_likelihood}(c), matches the observed data well.
(This is not surprising since (\ref{eq::exponential_tilt_normal}) and (\ref{eq::exponential_tilt_norm}) are very close.)

\begin{figure}[t!]
    \centering
    \subfloat[\sl Estimated profile log-likelihood $\hat {\cal L}(\theta,\hat\beta(\theta))$.
]{\includegraphics[width=4.3cm]{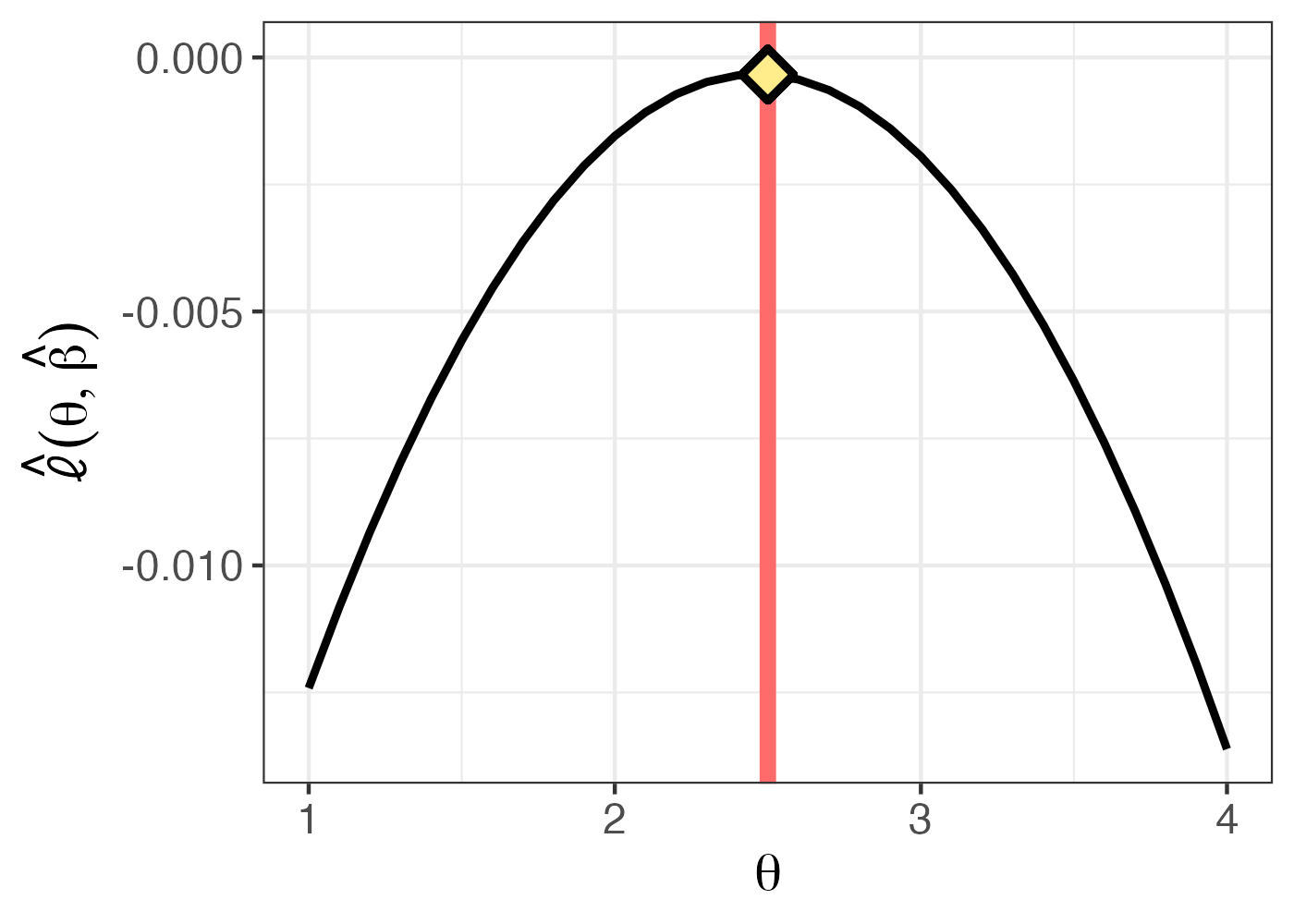}} 
    \qquad
    \subfloat[\sl ML estimates $\hat \beta_1 (\theta)$ and $\hat \beta_2(\theta)$ as functions of $\theta$. ]{
    \includegraphics[width=4.3cm]{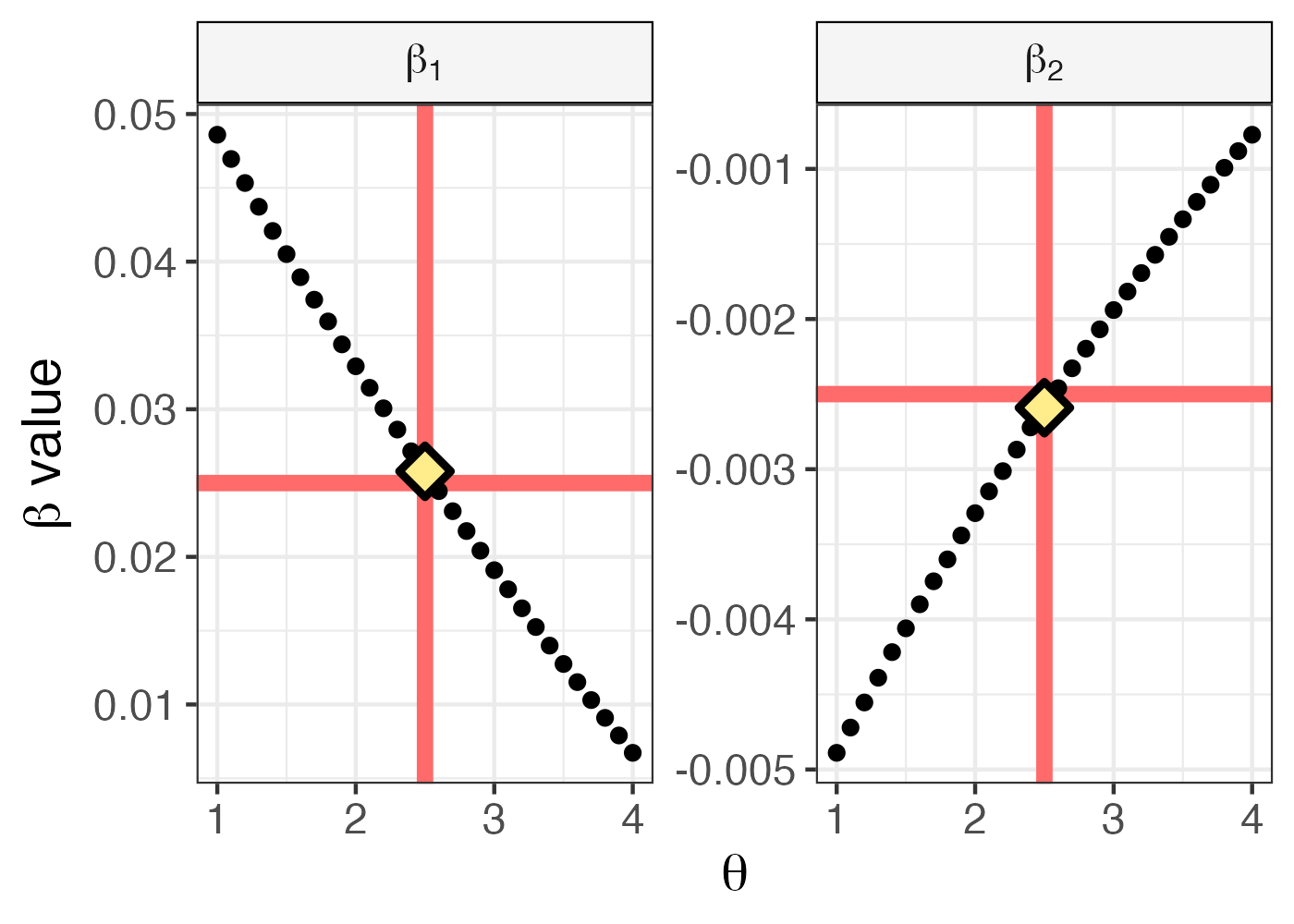}} 
    \qquad
    \subfloat[\sl Histograms of the data with fitted expanded model $p_{\hat\theta, \hat{\beta}(\hat\theta)}$. ]{
    \includegraphics[width=4.3cm]{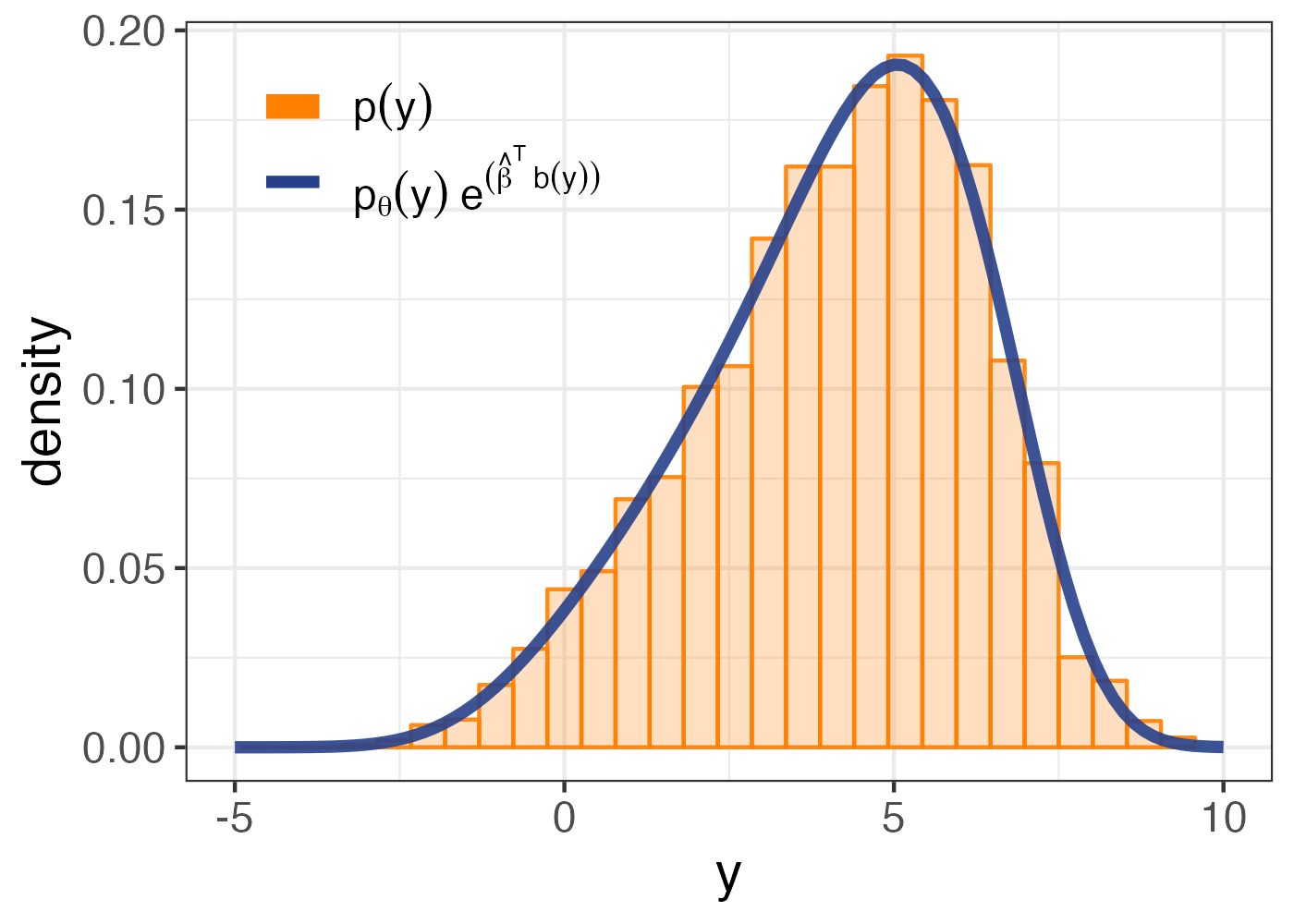}} \\ 
    \caption{\sl {\bf Robust inference via model expansion.} 
     The true model is (\ref{eq::exponential_tilt_normal}) with
     $\alpha_1 = 0.025$, $\alpha_2 = -0.0025$, and ${\cal N} (\theta, \sigma^2) = {\cal N} (2.5, 4)$.
    The target parameter is $\theta$.
     The assumed model is $p_\theta(x) = {\cal N}(\theta, 4)$.
     The expanded model is in (\ref{eq::exponential_tilt_norm}).
     Red lines and gold diamonds indicate true and estimated parameters, respectively.
}
\label{fig:exponential_tilt_normal_profile_likelihood}
\end{figure}

\subsection{Robust SBI via Projection -- Gaussian Location and Scale Parameters}
\label{sec::gaussian}

\begin{figure}[t!]
    \centering
    \subfloat[Gaussian location.  True parameter values $\mu^*=2.5$ $\sigma^*=1$]{
    \includegraphics[width=.7\linewidth]{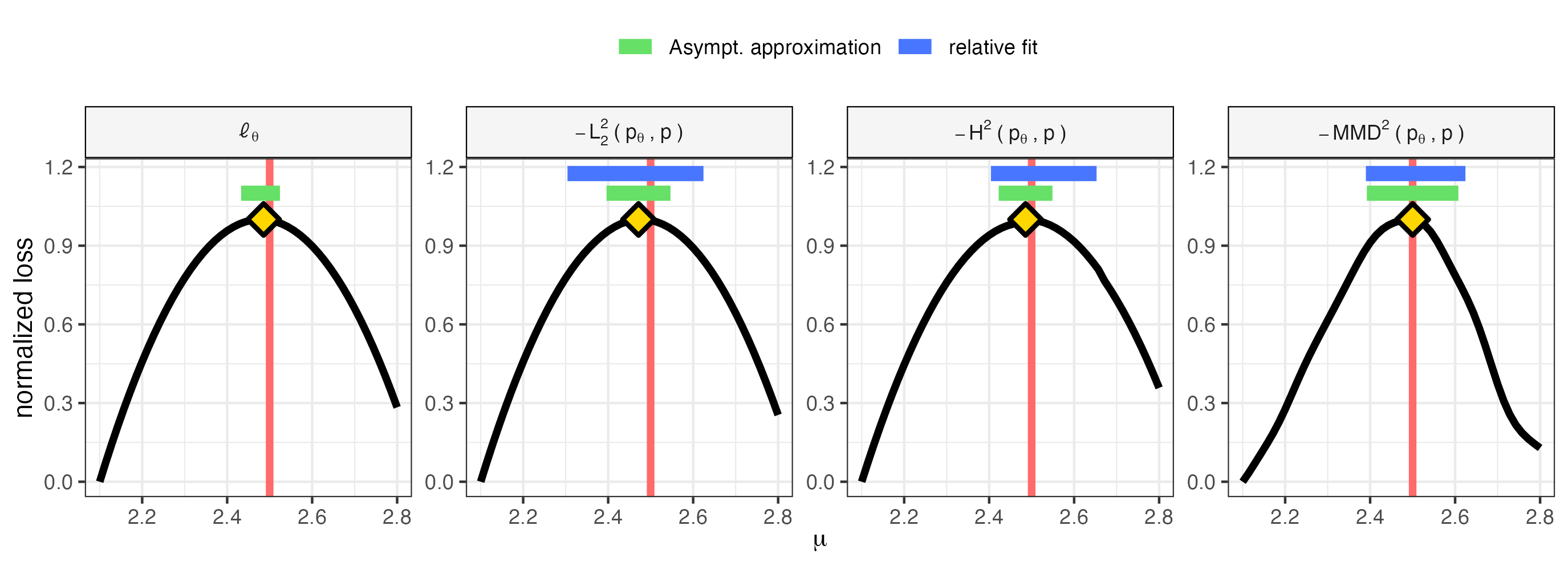}}
    \\
    \subfloat[Gaussian scale. True parameter values $\mu^*=2.5$, $\sigma^*=1$]{
    \includegraphics[width=.7\linewidth]{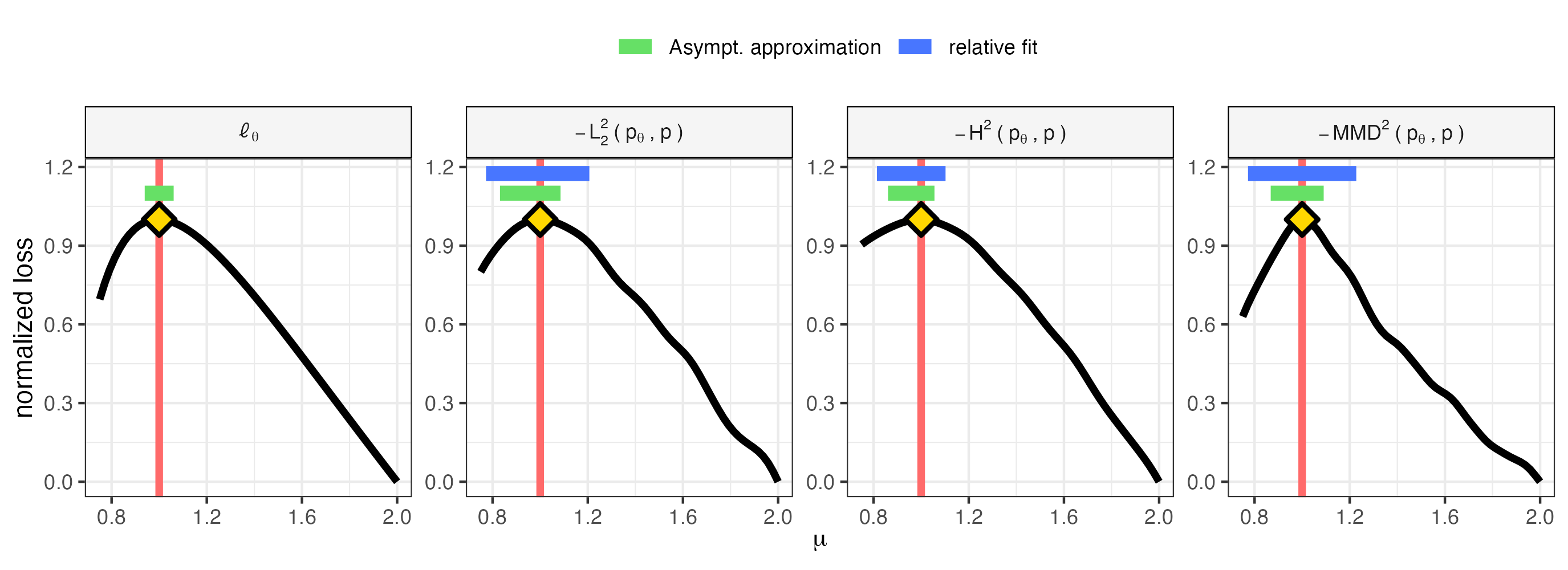}}
     \caption{
     \sl{{\bf SBI for Gaussian location and scale when the model $p_\theta$
     is correctly specified.}
     All discrepancies produce estimates (gold diamonds) close to the true values (red lines).
     Relative fit confidence sets (Section~\ref{sec::cs_misspecified}) are in blue.
     Asymptotic confidence sets (green) for $L_2$ and Hellinger divergences use a sandwich 
     estimator (Section~\ref{sec::cs_misspecified}); for the log-likelihood we inverted 
     the likelihood ratio test;
     for the MMD we applied the theoretical derivations in \cite{briol_statistical_2019}.   
     }}
\label{fig:gaus_loc_scale}

\bigskip

\centering
\begin{tabular}{
  l 
  *{10}{S[table-format=1.5]}
  S[table-format=5]
}
\toprule
\multicolumn{1}{c}{Discrepancy} &
  \multicolumn{4}{c}{Location parameter ($\mu$)}  &
  \multicolumn{4}{c}{Scale parameter ($\sigma$)} \\
\cmidrule(lr){2-3} \cmidrule(lr){4-5} \cmidrule(lr){6-7}\cmidrule(lr){8-9}  
  & \multicolumn{2}{c}{Asympt. Approx.} & \multicolumn{2}{c}{Relative fit} & \multicolumn{2}{c}{Asympt. Approx.} &  \multicolumn{2}{c}{Relative fit} \\
  & \multicolumn{1}{c}{Coverage} &  \multicolumn{1}{c}{Length} & \multicolumn{1}{c}{Coverage} &  \multicolumn{1}{c}{Length} & 
  \multicolumn{1}{c}{Coverage} &  \multicolumn{1}{c}{Length} &
  \multicolumn{1}{c}{Coverage} &  \multicolumn{1}{c}{Length} \\
\midrule
\midrule
Likelihood  & \multicolumn{1}{r}{0.95 \small $\pm$ .04} & \multicolumn{1}{c}{0.08} & {-} & {-} & \multicolumn{1}{c}{0.94 \small $\pm$ .05} & \multicolumn{1}{c}{0.03} & {--} & {--} \\
Hellinger & \multicolumn{1}{r}{0.98 \small $\pm$ .03} &  \multicolumn{1}{c}{0.11} & \multicolumn{1}{r}{1 \small $\pm$ .00} &\multicolumn{1}{c}{0.27}  & \multicolumn{1}{r}{0.93 \small $\pm$ .05} & \multicolumn{1}{c}{0.08} & 
\multicolumn{1}{r}{1 \small $\pm$ .00} & \multicolumn{1}{c}{0.27}\\
$L_2$  & \multicolumn{1}{r}{0.98 \small $\pm$ .03} & \multicolumn{1}{c}{0.14} & \multicolumn{1}{r}{1 \small $\pm$ .00} &\multicolumn{1}{c}{0.32} & \multicolumn{1}{r}{1 \small $\pm$ .00} & \multicolumn{1}{c}{0.21} & \multicolumn{1}{r}{1 \small $\pm$ .00} & \multicolumn{1}{c}{0.51}  \\ 
MMD & \multicolumn{1}{c}{0.97 \small $\pm$ .03} & \multicolumn{1}{c}{0.20} & \multicolumn{1}{r}{0.97 \small $\pm$ .03} & \multicolumn{1}{c}{0.31}& \multicolumn{1}{c}{0.95 \small $\pm$ .04} & \multicolumn{1}{c}{0.14} & \multicolumn{1}{r}{1 \small $\pm$ .00} & \multicolumn{1}{c}{0.33} \\
\bottomrule
\end{tabular}
\captionof{table}{{\bf SBI for Gaussian location and scale when the model $p_\theta$
     is correctly specified.}
\sl Empirical coverages with 95\% simulation bounds and average lengths (accurate up to two digits)
of the 95\% confidence sets in Figure~\ref{fig:gaus_loc_scale} 
in 100 repeat simulations.
}
\label{table::results_gaus_location_scale}
\end{figure}

\begin{figure}[t!]
    \centering
    \centering
    \subfloat[L2]{\includegraphics[width=4cm]{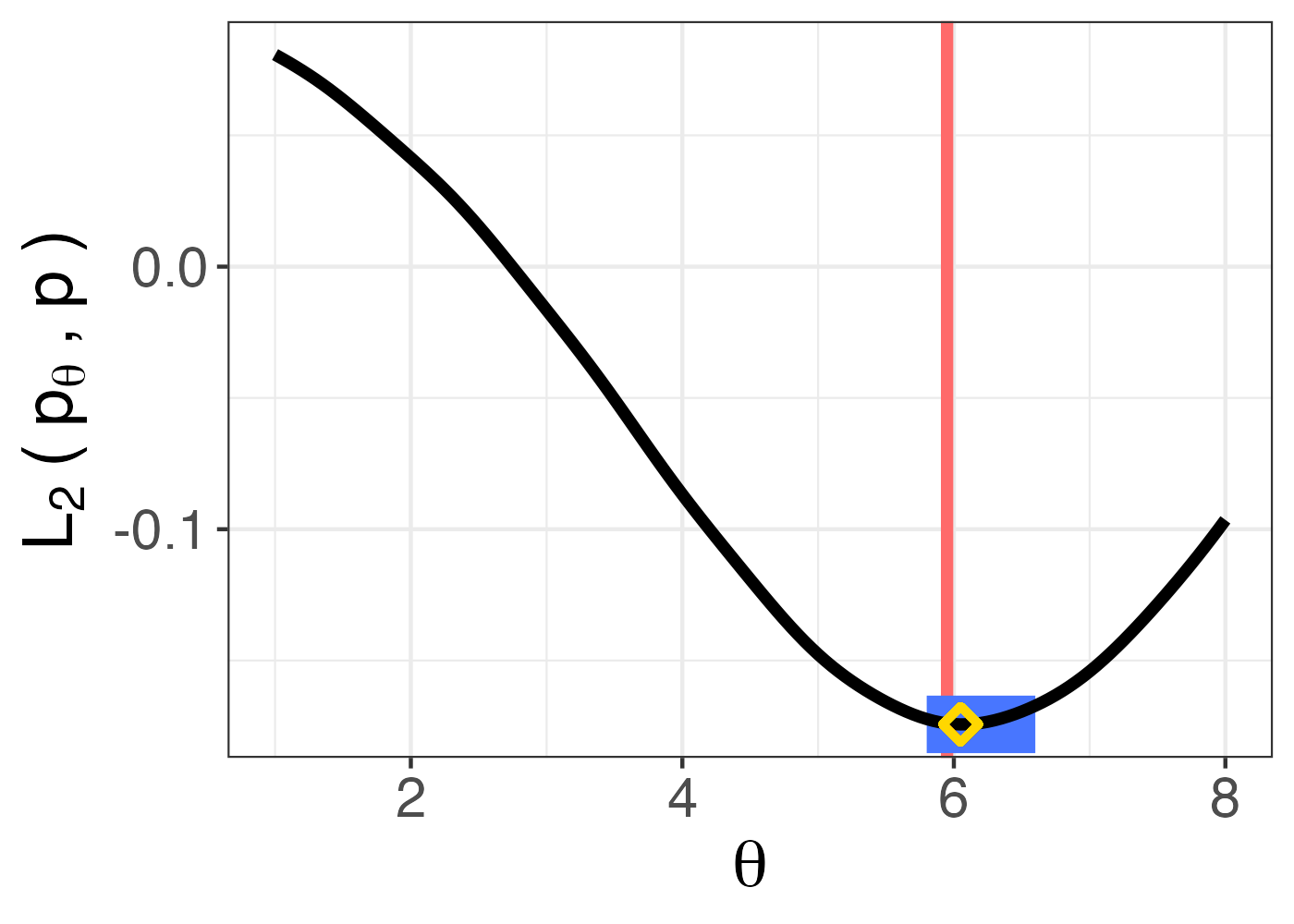}} 
    \subfloat[Hellinger]{\includegraphics[width=4cm]{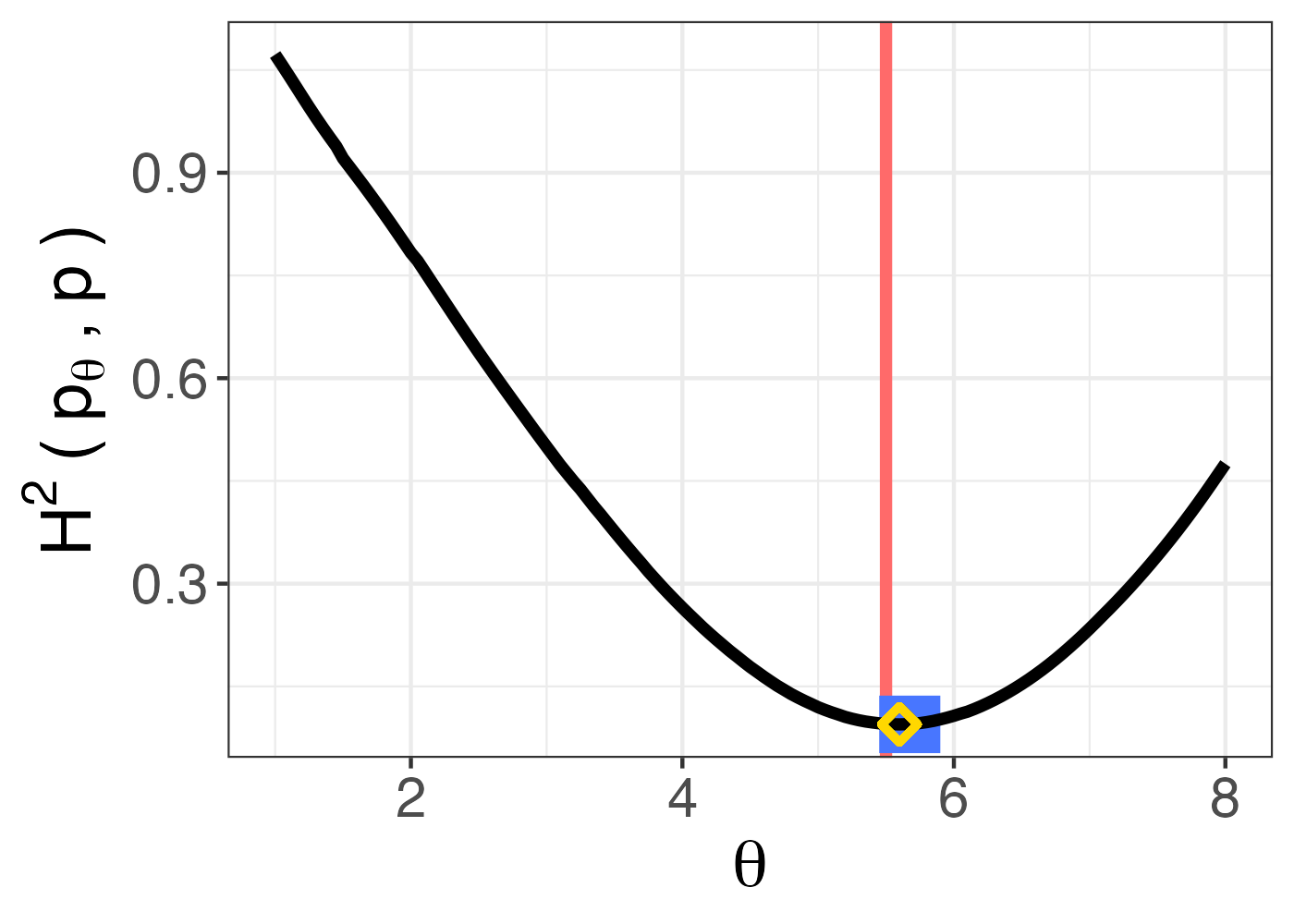}} 
    \subfloat[MMD]{\includegraphics[width=4cm]{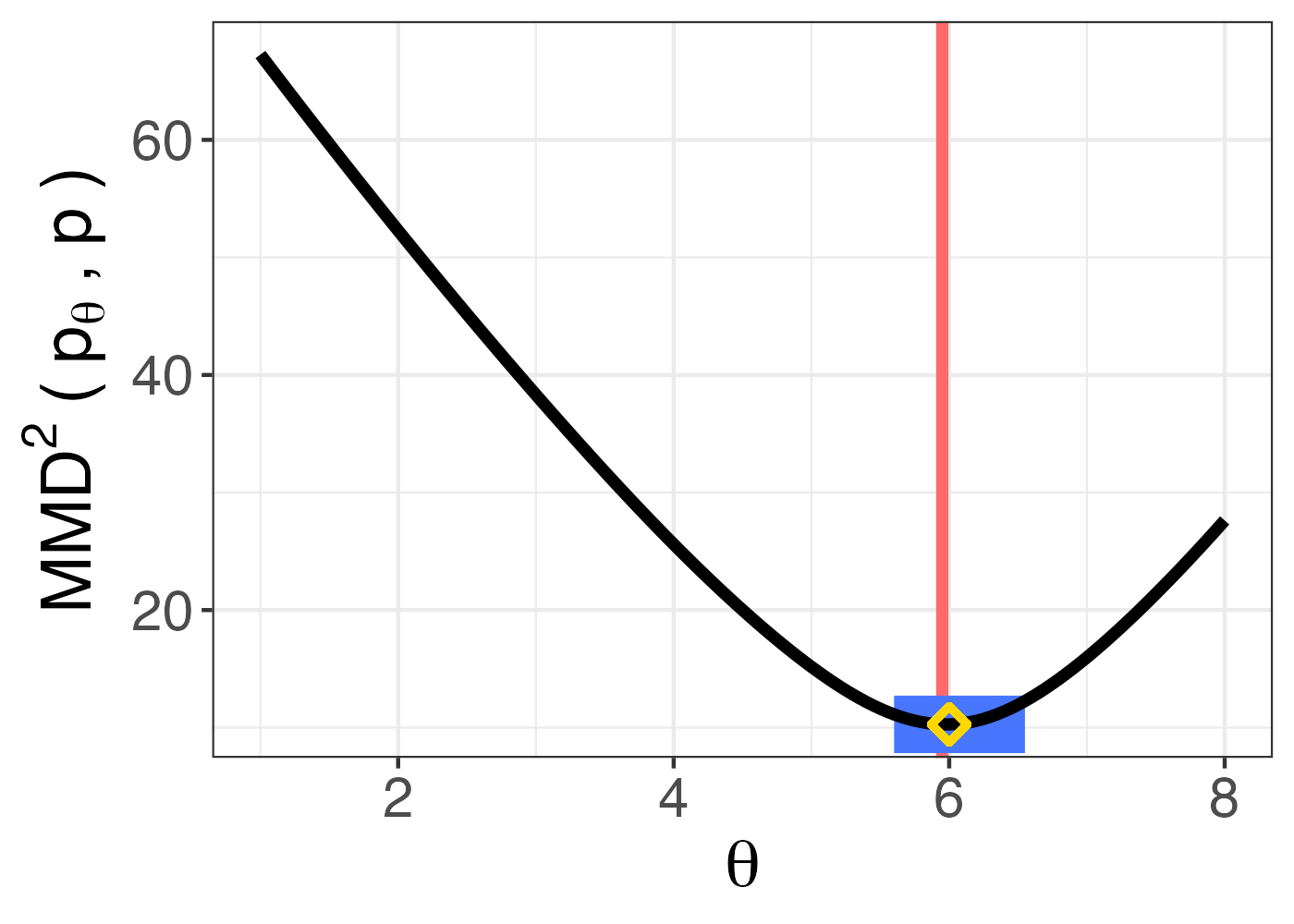}} 
    \caption{\sl{{\bf SBI for Gaussian location under model misspecification.} 
    The data has exponential 
    tilted normal density~(\ref{eq::exponential_tilt_normal}) with $\theta=2.5$, 
    $\sigma=2$ and $\alpha=(0.05, -0.005)$. The assumed model is $p_\theta = {\cal N}(\theta, 2.5^2)$.
    The plot shows the projection parameters for each discrepancy (red), the estimates (gold diamond)
    and the relative fit confidence sets (blue). 
}}
\label{fig:gaussian_model_misspecification}

 \bigskip
 
\centering
\begin{tabular}{
  l 
  *{10}{S[table-format=1.5]}
  S[table-format=5]
}
\toprule
\multicolumn{1}{c}{Discrepancy} &
\multicolumn{1}{c}{Length} &
  \multicolumn{4}{c}{Coverage} \\
\cmidrule(lr){3-6} 
& & \multicolumn{1}{c}{$\theta^{proj}_{H}$} &\multicolumn{1}{c}{$\theta^{proj}_{L_2}$} &\multicolumn{1}{c}{$\theta^{proj}_{MMD}$} & \\
\midrule
\midrule
KL (likelihood)  & \multicolumn{1}{c}{0.06 \small $\pm$ .00} & \multicolumn{1}{c}{0.16 \small $\pm$ .10} & {0 \small $\pm$ .00}  & {0 \small $\pm$ .00}\\
Hellinger & \multicolumn{1}{c}{0.39 \small $\pm$ .02} &  \multicolumn{1}{r}{0.98 \small $\pm$ .04} & {--}  & {--} \\
$L_2$  & \multicolumn{1}{c}{0.52 \small $\pm$ .03} & \multicolumn{1}{c}{--} & \multicolumn{1}{c}{0.98 \small $\pm$ .04}  & {--} \\ 
MMD & \multicolumn{1}{c}{0.68 \small $\pm$ .03} & \multicolumn{1}{c}{--} & {--} & {1 \small $\pm$ .00}  \\
\bottomrule
\end{tabular}
\captionof{table}{
\sl{{\bf SBI for Gaussian location under model misspecification.} 
\sl Empirical coverages with 95\% simulation bounds and average lengths (accurate up to two digits)
of the 95\% relative fit confidence sets in Figure~\ref{fig:gaussian_model_misspecification} 
in 50 repeat simulations.
Results for confidence sets obtained by inversion of the 
likelihood test are provided for comparison.
}}
\label{table::results_misspecified_exponential_tilt}

\end{figure}

We now illustrate that the projection methods yield confidence sets 
that have the correct coverage 
whether or not the assumed model is correctly specified. 
We 
begin with the correctly specified case.
We generated a sample of size $n=2000$ from the Gaussian distribution $P = \mathcal{N}(\mu, \sigma^2)$, 
$\theta=\left(\mu, \sigma\right)$ being the target of inference.
We assumed that the model $P_{\theta}$ was Gaussian.
We obtained MLEs and discrepancy-based estimates for $\theta$ (Section~\ref{sec::robust_sbi}),
and confidence sets using asymptotic approximations and the relative fit approach (Section~\ref{sec::cs_misspecified}).
Figure~\ref{fig:gaus_loc_scale} shows the discrepancies, parameter estimates and confidence sets. 
All discrepancies produced estimates that are close to the true $\theta$ 
and confidence sets that cover it.
Table~\ref{table::results_gaus_location_scale} shows the empirical coverages of 
the confidence sets in 100 repeat simulations.
All confidence sets achieve or exceed the nominal 95\% coverage for all the discrepancies.
The MDPD ($L_2$ in this case) yields wider confidence set than Hellinger, which agrees with its lower 
theoretical efficiency.

We now turn to the misspecified case. 
We generated data 
from the tilted Gaussian distribution (\ref{eq::exponential_tilt_normal})
with $\theta = 2.5$, $\sigma=2$ and $\alpha=(0.05, -0.005)$.
The target of inference is $\theta$.
We assumed the Gaussian model $P_\theta$ with unknown mean $\theta$.
The projection parameter is the first component of $\argmin_{\theta,\beta} d(p,p_{\theta,\beta})$.
Fig.~\ref{fig:gaussian_model_misspecification} shows the  discrepancies with 
estimated parameters and relative fit confidence sets. 
All discrepancies produce estimates that are close to the projection 
parameter, with confidence sets nearly centered around it.
Table~\ref{table::results_misspecified_exponential_tilt} contains the empirical 
coverages and average lengths of the confidence sets over 50 
repeat simulations.
All discrepancy-based confidence sets achieve valid coverage. Their widths align with theoretical expectations --
the Hellinger discrepancy is more efficient and thus yields shorter confidence sets compared to the $L_2$ discrepancy.
Likelihood-based confidence sets, while narrower, 
fall short of the desired 95\% coverage levels.

\subsection{Robust SBI for Intractable Likelihood -- G-and-k Distribution} \label{sec::gandk}

The g-and-k distribution cannot be written in closed form, but its quantiles are available 
so it can be simulated from using inverse CDF sampling, 
making it a prime candidate for SBI inference.
The quantiles are \citep{rayner_numerical_2002,prangle_gk_2017}:
\begin{equation}
    q_\theta(p) = l + s\cdot\left[1 + c \cdot\tanh\left(\frac{g \cdot\phi(p)}{2}\right)\right]\phi(p)(1 + \phi(p)^2)^k
\end{equation}
where $\phi(p)$ is the quantile function of the standard normal distribution, 
$c=0.8$, and the parameters $\theta=\left(l, s, g, k\right)$ 
determine the location ($l$) scale ($s$) skewness ($g$) and kurtosis ($k$).

We simulated $n=2000$ observations from the g-and-k distribution and performed robust SBI inference 
based on discrepancies. Figure~\ref{fig:gk_distribution_inference} displays parameter 
and relative fit confidence set estimates obtained using the three discrepancies.
Inference for skewness and kurtosis is notably more challenging than for location and scale, 
as indicated by the wider confidence sets, 
flatter profile loss functions and by the fact that the Hellinger 
metric 
yields estimate that deviates from the true 
kurtosis. 
Note also that the $L_2$ discrepancy confidence sets are 
wider than those for the other two discrepancies, which is consistent 
with
the discussion on efficiency in Section~\ref{sec::robust_sbi}.


\begin{figure}[t!]
    \centering
    \subfloat{\includegraphics[width=12cm, height=3cm]{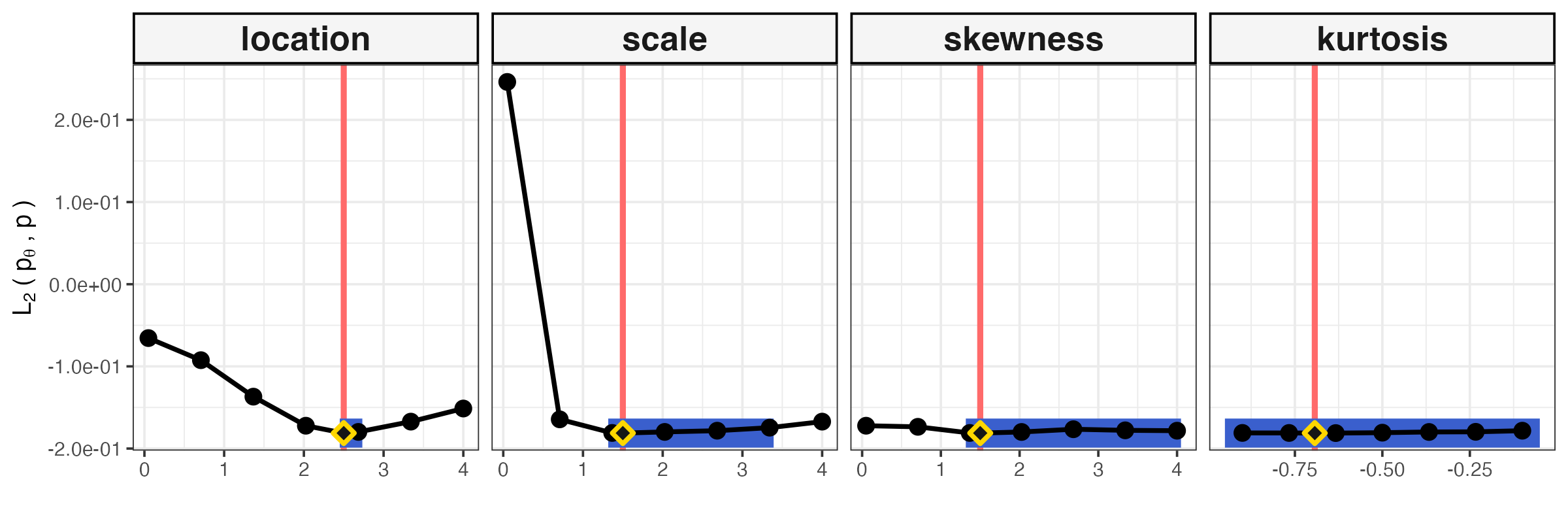}}
    \\
    \subfloat{\includegraphics[width=12cm, height=3cm]{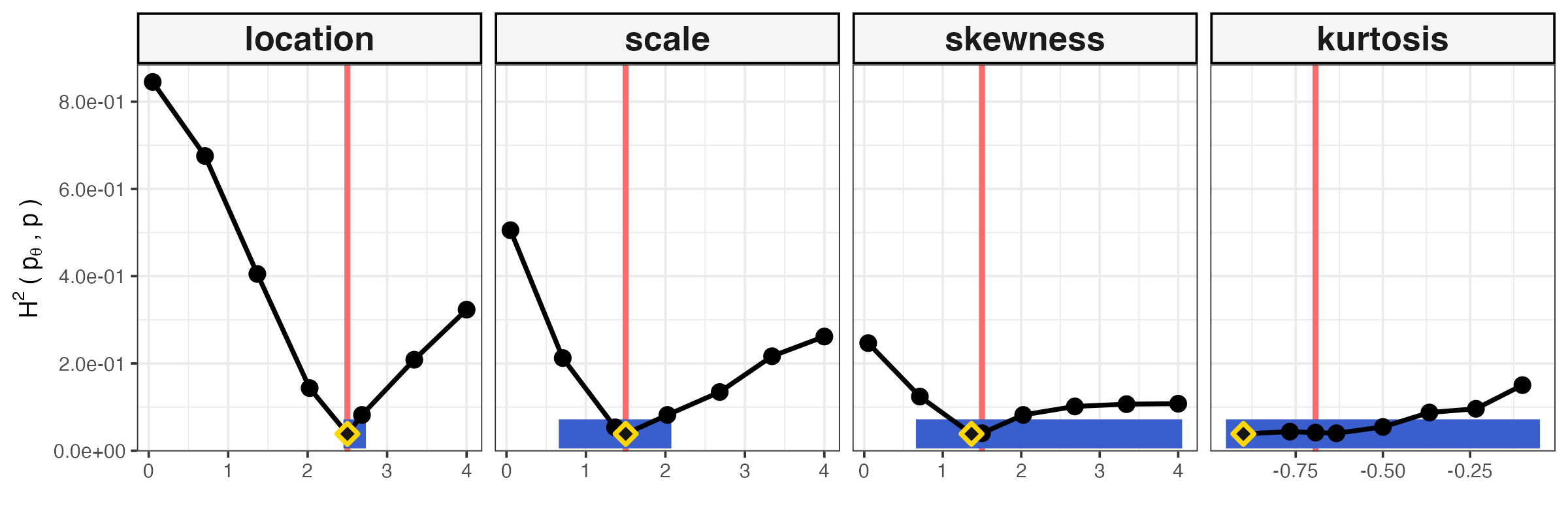}}\\
    \subfloat{\includegraphics[width=12cm, height=3cm]{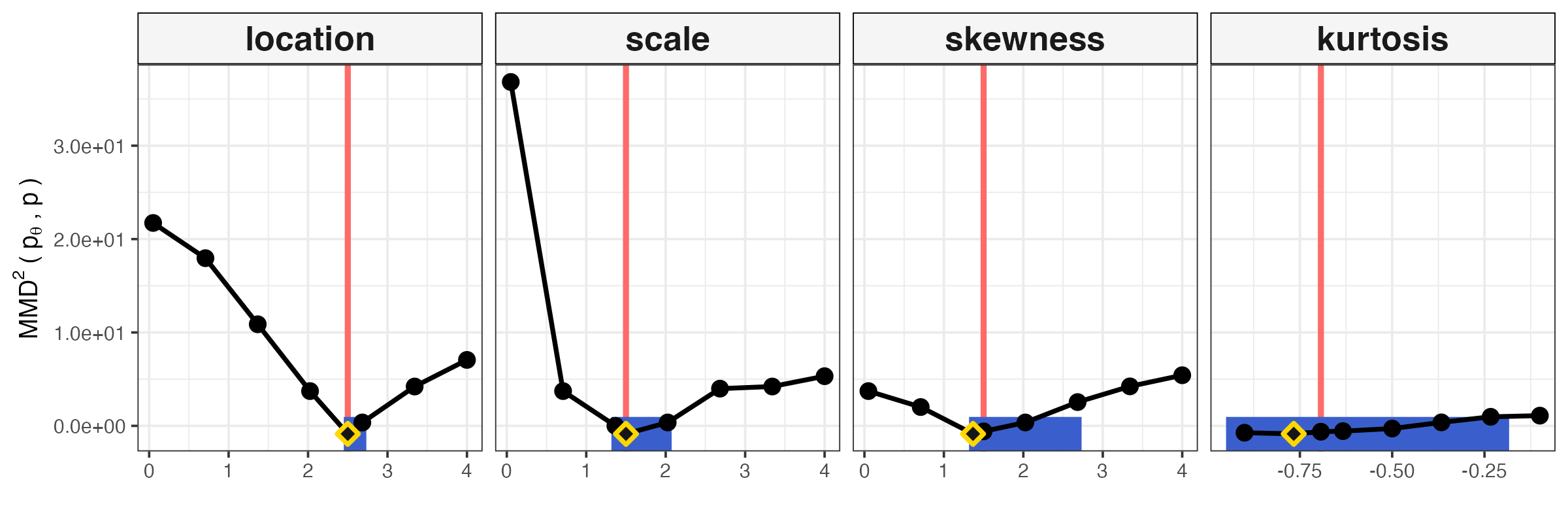}}    
    \caption{\sl{{\bf Inference for the four parameters of the g-and-k distribution} 
    with parameters $\theta=[\, l=2.5,\, s=1.5,\, g=1.5,\, k= - \log(2)\,]$ (red 
 lines).
The three discrepancies (rows) yield estimates (gold diamonds) that are close to the true values. 
Relative fit confidence sets are reported in blue.
They are wider for skewness and kurtosis, suggesting that inference for these parameters is more challenging.
    }}
    \label{fig:gk_distribution_inference}
\end{figure}

\subsection{Robust SBI for Irregular Model -- Gaussian Mixture Model} 
\label{sec::GMM}

\begin{table}[t!]
    \centering
    \begin{tabular}{
  l 
  *{8}{S[table-format=1.5]}
  S[table-format=5]
}
\toprule
\multicolumn{1}{c}{Discrepancy} &
  \multicolumn{4}{c}{Coverage}  &
  \multicolumn{4}{c}{Length} \\
\cmidrule(lr){2-5} \cmidrule(lr){6-9}  
  & \multicolumn{1}{c}{$\mu_1$} & \multicolumn{1}{c}{$\mu_2$} & \multicolumn{1}{c}{$\sigma$} &  \multicolumn{1}{c}{$p_1$} 
  & \multicolumn{1}{c}{$\mu_1$} & \multicolumn{1}{c}{$\mu_2$} & \multicolumn{1}{c}{$\sigma$} &  \multicolumn{1}{c}{$p_1$} \\
\midrule
\midrule 
KL (likelihood)  & \multicolumn{1}{c}{1} &  \multicolumn{1}{c}{1} & 
 \multicolumn{1}{c}{1} &  \multicolumn{1}{c}{1} & 
 \multicolumn{1}{c}{0.205  \small $\pm$ .011} &
\multicolumn{1}{c}{0.138  \small $\pm$ .005} &
\multicolumn{1}{c}{0.084  \small $\pm$ .003} &
\multicolumn{1}{c}{0.058  \small $\pm$ .00} 
\vspace{.1cm} \\
Hellinger & \multicolumn{1}{c}{1}  & \multicolumn{1}{c}{1} 
& \multicolumn{1}{c}{1} 
& 
\multicolumn{1}{c}1 
&  \multicolumn{1}{c}{0.925 \small $\pm$ .072}
&\multicolumn{1}{c}{0.581 \small $\pm$ .052}  
& \multicolumn{1}{c}{0.350 \small $\pm$ .035}
& \multicolumn{1}{c}{0.209 \small $\pm$ .014} \vspace{.1cm}\\
$L_2$  & \multicolumn{1}{c}{1} 
& \multicolumn{1}{c}{1} 
& \multicolumn{1}{c}{1} 
& \multicolumn{1}{c}{1} 
& \multicolumn{1}{c}{1.255  \small $\pm$ .119} &\multicolumn{1}{c}{0.534  \small $\pm$ .047} 
& \multicolumn{1}{c}{0.412  \small $\pm$ .072}
& \multicolumn{1}{c}{0.186  \small $\pm$ .013}  \\
\bottomrule
\end{tabular}
\caption{\sl{{\bf Inference for the four parameters of the GMM model in the identifiable case.} 
Empirical coverages with 95\% simulation bounds and average lengths (accurate up to two digits)
of 95\% confidence sets obtained via the relative fit approach (discrepancies) and inversion of likelihood ratio tests (likelihood),
based on 50 repeat simulations.}}
\label{table::results_GMM}

\bigskip
    \captionsetup[subfloat]{labelformat=empty}
    \subfloat{\includegraphics[width=3.2cm]{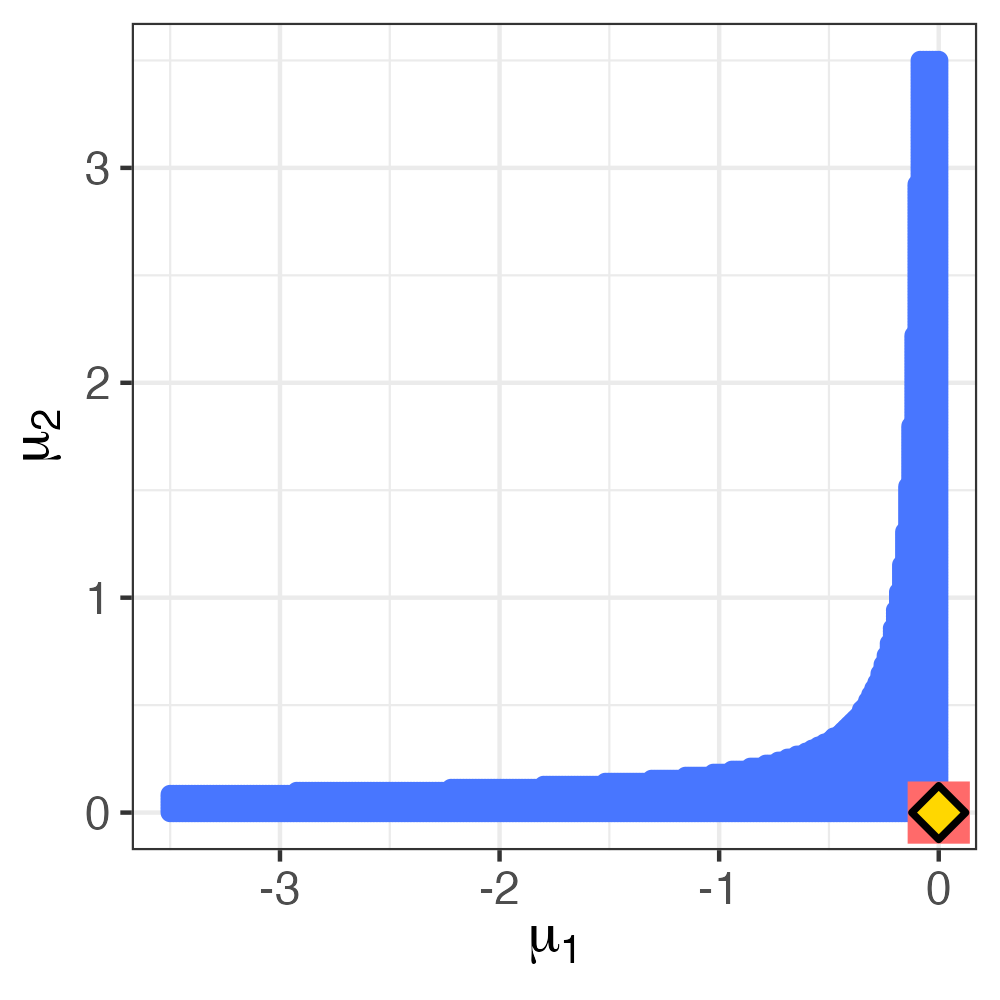}}
    \subfloat[a) {\sl $L_2$ discrepancy}]{\includegraphics[width=3.2cm]{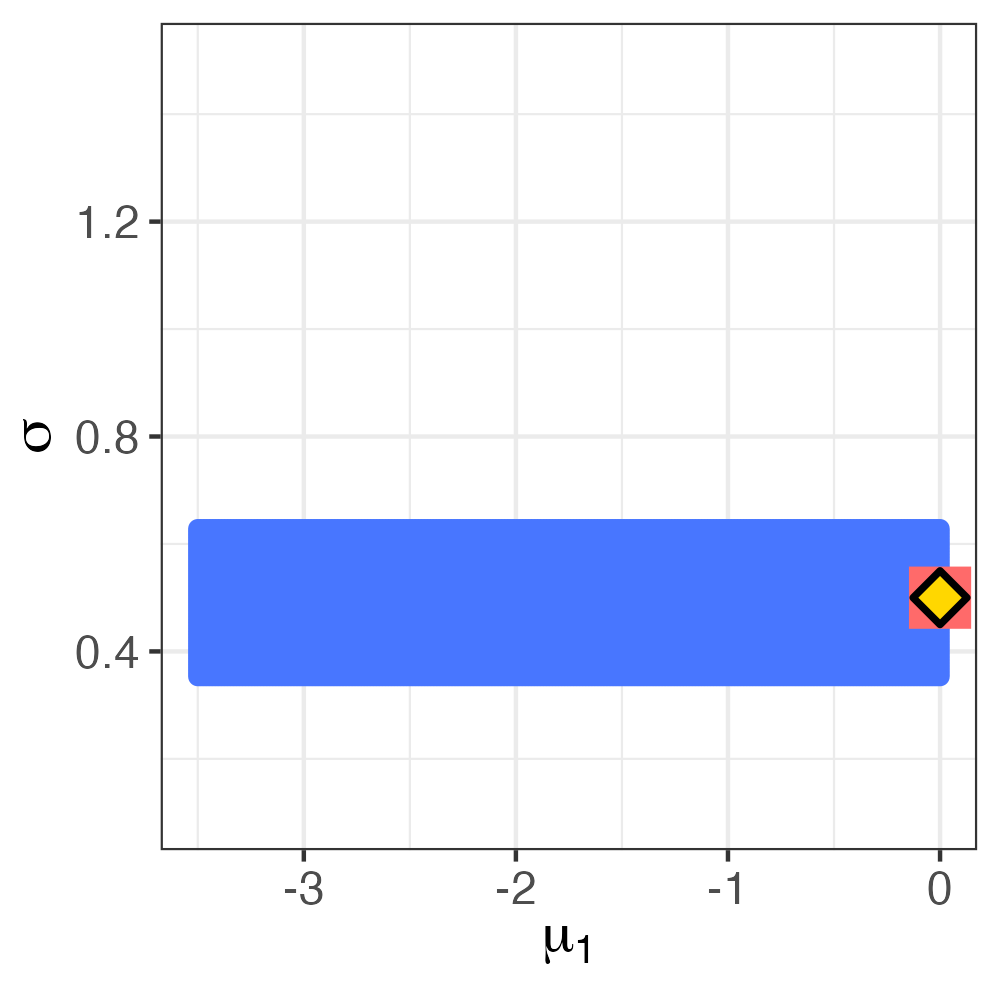}}
    \subfloat{\includegraphics[width=3.2cm]{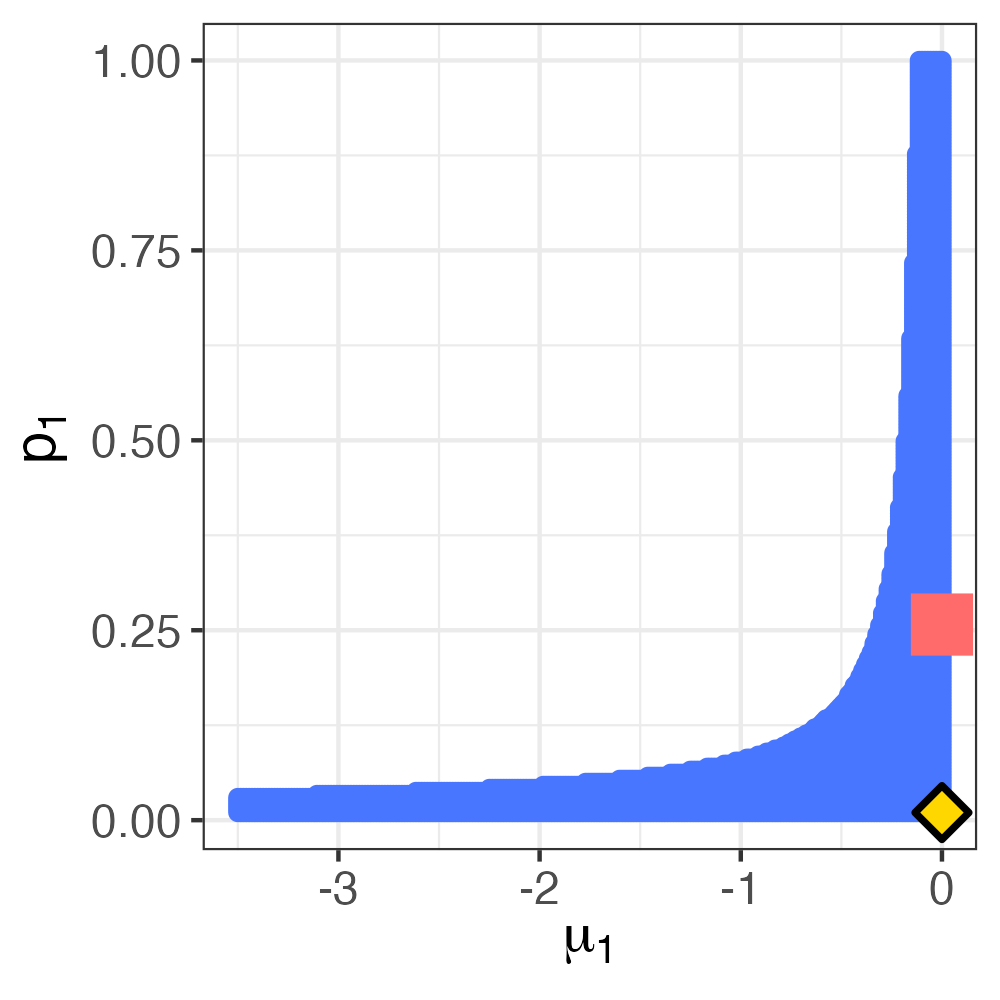}}   \\
    \subfloat{\includegraphics[width=3.2cm]{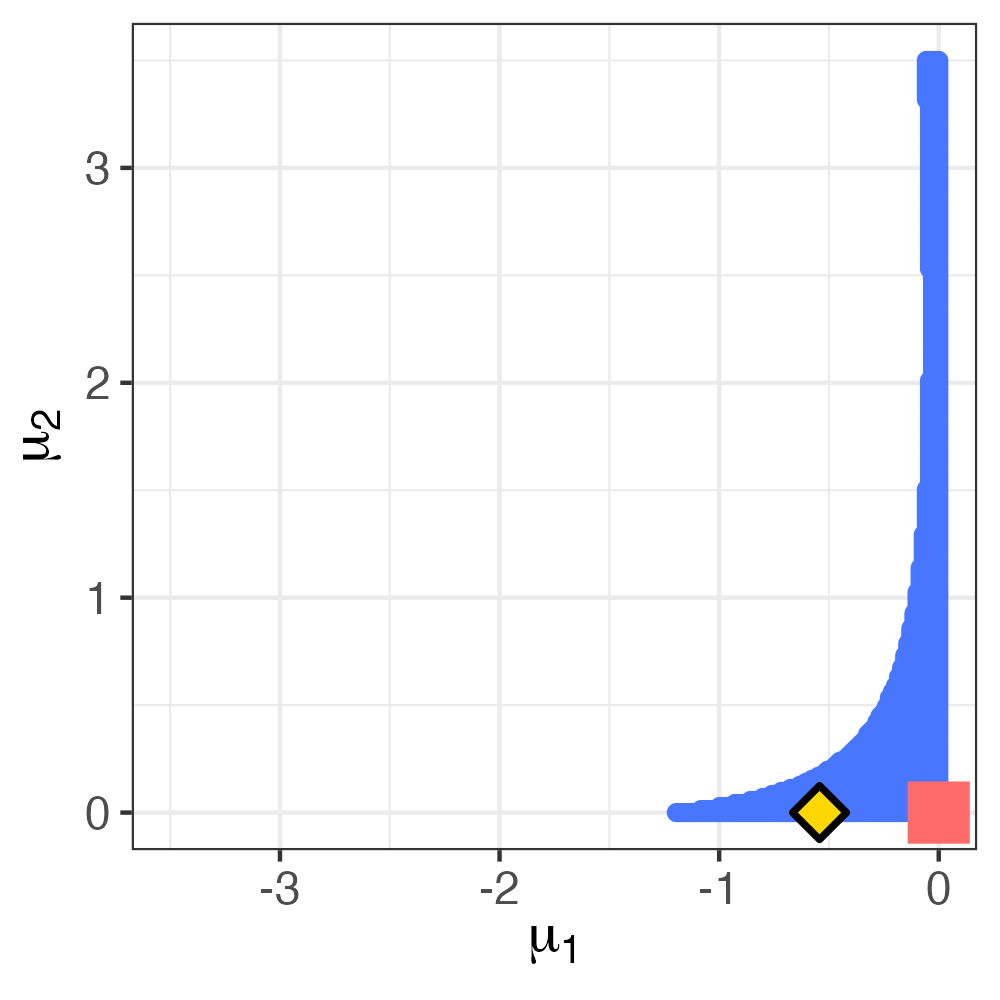}}
    \subfloat[  b) {\sl  Hellinger discrepancy}]{\includegraphics[width=3.2cm]{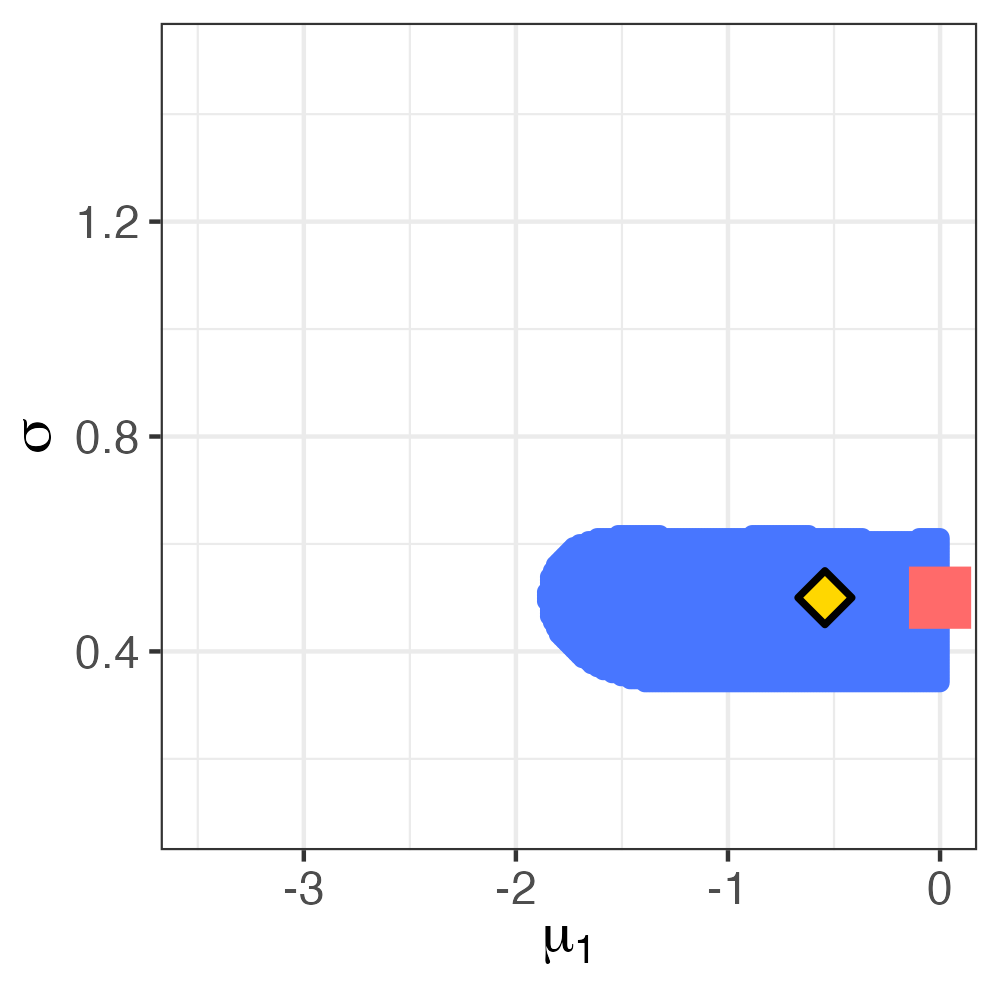}}
    \subfloat{\includegraphics[width=3.2cm]{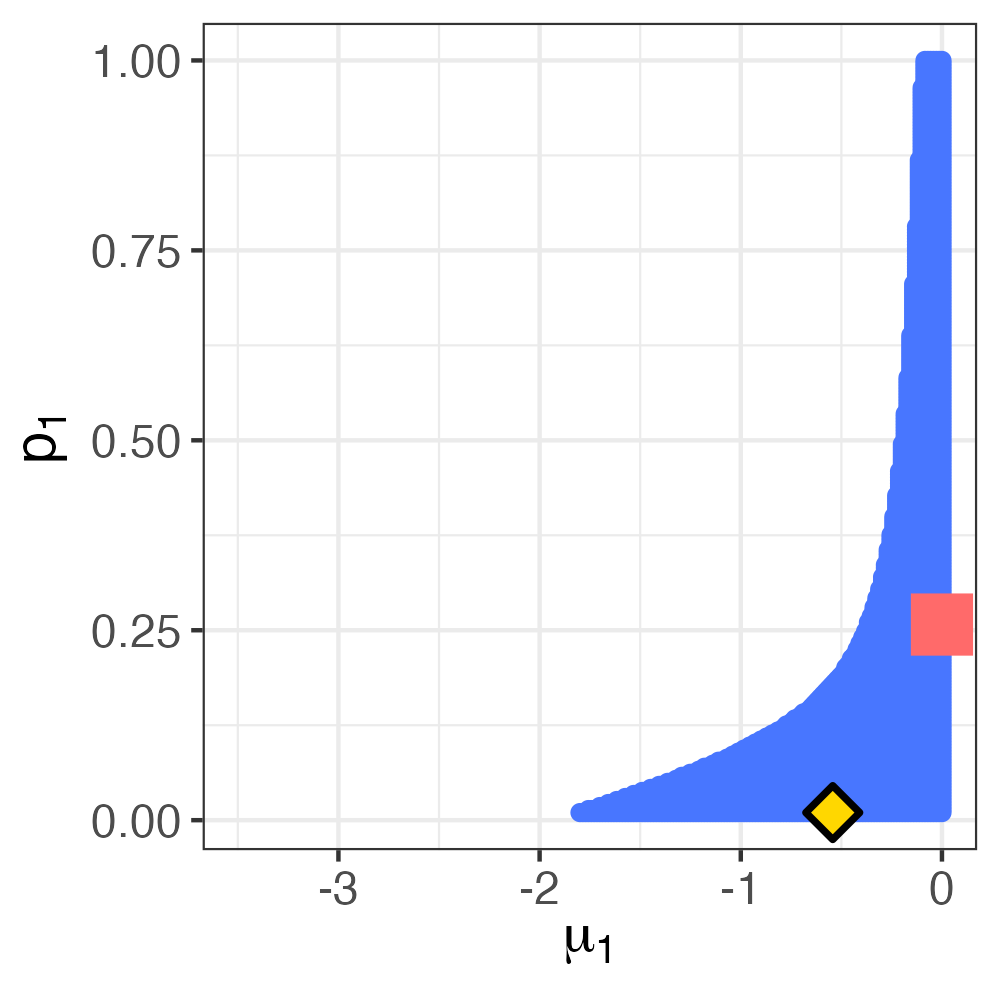}}   
    \captionof{figure}{\sl{{\bf Confidence sets for the four parameters of the GMM in the  unidentifiable case} when $\mu_1=\mu_2$, 
    for several slices of the parameter space.
    True (red) and estimated (gold) parameters, and relative fit confidence sets (blue) using the (a) efficient $L_2$ and (b) Hellinger discrepancies.}}
    \label{fig:gmm_complex_unidetifiable}

\end{table}

Assume that we have $n=2000$ observations from a Gaussian mixture models (GMM)
with density 
$p_\theta(x)=p_\cdot\phi(x, \mu_1, \sigma)+(1-p)\cdot \phi(x, \mu_2, \sigma)$, 
where $\phi(x, \mu, \sigma)$  is the Gaussian density 
with mean $\mu$ and variance $\sigma^2$. We fit a GMM model, and the target of inference 
is $\theta=(\mu_1, \mu_2, \sigma, p)$.
We use this example to illustrate the application of SBI is to obtain confidence 
sets when asymptotic results do not apply, as happens, for example,
for GMMs.

We start with 
the identifiable case when $\mu_1<\mu_2$ (and 
there is enough separation between the mixture components, for
convenience).
Table~\ref{table::results_GMM} reports the averages over 50 
repeat simulations of the coverages
and lengths of the relative fit $95\%$ confidence sets
for the efficient L2 and Hellinger discrepancies.
For comparison we also report these metrics for the asymptotic theoretical 
confidence set using inversion of likelihood ratio hypothesis tests. 
We did not report results for the MMD discrepancy as it failed to estimate parameters close to the truth and yielded wide, uninformative confidence sets.
For the MMD approach we experimented with both Gaussian and polynomial kernels.
%
All the other discrepancies produced valid confidence sets. The confidence sets for
$\mu_1$ based on the 
Hellinger discrepancy are shorter than the confidence sets for $L_2$, 
which is in line with theoretical results. For the other parameters, they are overlapping.
However, they are 4 to 5 times
wider than their theoretical likelihood inversion test counterparts 
due to 
estimation errors of the density ratios and sample splitting in the relative 
fit procedure.

Next we consider the unidentifiable case, with data simulated from
$p_\theta$ with $\mu_1=\mu_2 = 0$, so that the true 
distribution is effectively Gaussian. However,
we fit a GMM, so that $\mu_1$, $\mu_2$ and $p$ are not identifiable.
Figure~\ref{fig:gmm_complex_unidetifiable} shows slices of the Hellinger 
and $L_2$ estimated confidence sets based on $n=3000$ observations,
for several subspaces of the 4-dimensional parameter space.
(An asymptotic confidence set is invalid.) 
The robust SBI 
confidence set highlights the unidentifiability.
In the upper left plot we see that if $\mu_1$ is far from 0,
then $p$ must be close to 0 and $\mu_2$ must be close to 0.
The upper middle plot shows that $\sigma$ is identified.
The upper right plot shows that when $p>0$, $\mu_1$ is constrained to be near 0
but when $p\approx 0$, $\mu_1$ is not identified.
The situation is similar for the plots in the bottom row
(based on the Hellinger discrepancy).
In this example, the $L_2$ discrepancy seems to lead 
to tighter confidence sets.
In a non-identified models, it would be
difficult to say which discrepancy should lead to
smaller confidence sets in general.


\subsection{SBI Goodness-of-Fit Test}
\label{sec::exgof}

We apply the Wasserstein distance and Kolmogorov-Smirnov based GoF tests to 
simulated data from 
(i) the normal distribution $\mathcal{N}(5, 1)$, 
(ii) the exponentially tilted distribution (\ref{eq::exponential_tilt_normal}) 
with $\theta=5$, $\sigma=1$ and $\alpha=(0.075, -0.0075)$
and
(iii) the t-distribution with $df=3$ and shifted to have mean 5.
The assumed model is $P_\theta = \mathcal{N}(\theta, 1)$.

\begin{figure}[t!]
    \centering
    \includegraphics[width=.6\linewidth]{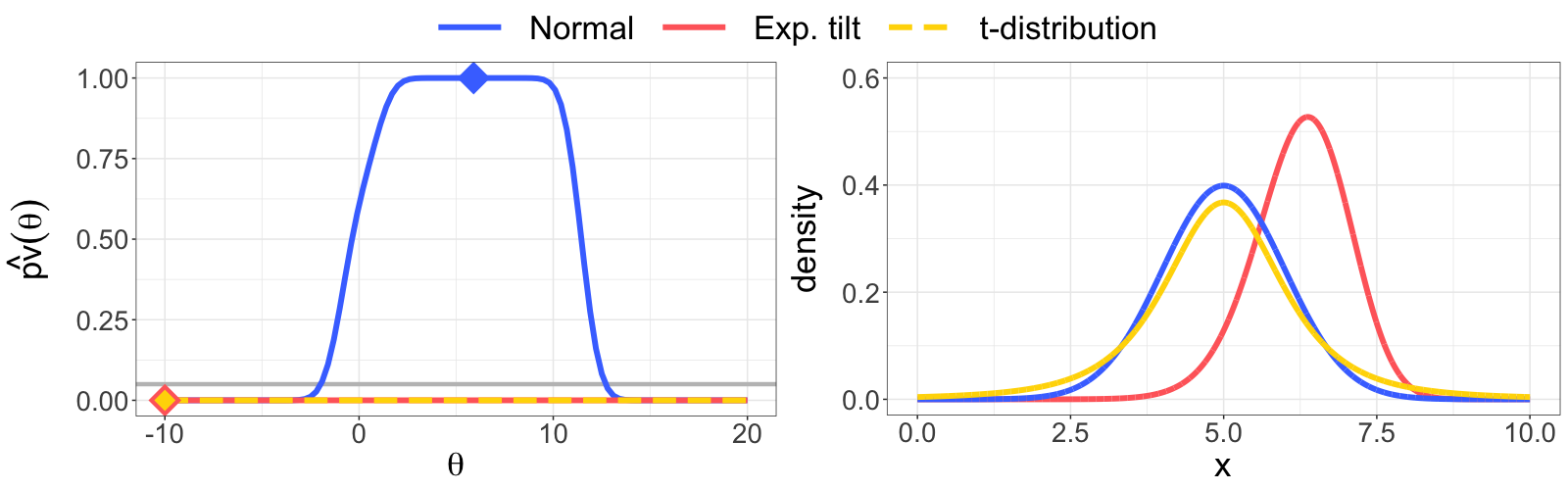}
\caption{\sl{{\bf SBI goodness of fit.} 
Left: Estimated p-value function (\ref{eq::gof.p}) for the Wasserstein distance based 
GoF test with null hypothesis $P \in \{p_\theta: \mathcal{N}(\theta, 1)\}$, 
when the true distribution $P$ is $\mathcal{N}(5, 1)$ (blue), exponentially 
tilted Gaussian in (\ref{eq::exponential_tilt_normal}) with $\alpha=(0.075, -0.0075)$ (red) and $\text{t}(\text{df}=3)$ (yellow).
The densities are shown in the right panel.
The GoF test estimated p-values in (\ref{eq::gof.pv}) are 1, 0 and 0 (diamond-shaped points).
The test yields the correct decisions in the three cases.}}
\label{fig:gof}

\bigskip

\centering
\begin{tabular}{
  l 
  *{8}{S[table-format=1.5]}
  S[table-format=5]
}
\toprule
\multicolumn{1}{c}{Test statistic} &
  \multicolumn{2}{c}{Wasserstein}  &
  \multicolumn{2}{c}{Kolmogorov-Smirnov} \\  
\cmidrule(lr){2-3} \cmidrule(lr){4-5}  
  & \multicolumn{1}{c}{Average $\widehat{\text{p}}$} & \multicolumn{1}{c}{Rejection prob.} &  \multicolumn{1}{c}{Average $\widehat{\text{p}}$} & \multicolumn{1}{c}{Rejection prob.} 
  \\ 
\midrule
\midrule
$Y\sim\mathcal{N}(\mu, \sigma^2)\in \mathcal{P}_\theta$ & \multicolumn{1}{c}{0.561 \small $\pm$ .079} &  \multicolumn{1}{c}{0 \small $\pm$ 0.0} & 
 \multicolumn{1}{c}{0.546 \small $\pm$ .099} &  \multicolumn{1}{c}{0.06 \small $\pm$ .068} \vspace{.15cm}\\
$Y\sim Exp.\ Tilt \notin \mathcal{P}_\theta$ & \multicolumn{1}{c}{0 \small $\pm$ 0.0}  & \multicolumn{1}{c}{1 \small $\pm$ 0.0} 
& \multicolumn{1}{c}{0 \small $\pm$ 0.0} 
& 
\multicolumn{1}{c}{1 \small $\pm$ 0.0} \vspace{.15cm}\\ 
$Y\sim t_{3} \notin \mathcal{P}_\theta$ & \multicolumn{1}{c}{0 \small $\pm$ 0.0} 
& \multicolumn{1}{c}{1 \small $\pm$ 0.0} 
& \multicolumn{1}{c}{0.026 \small $\pm$ .013} 
& \multicolumn{1}{c}{0.84 \small $\pm$ .105}  \\
\bottomrule
\end{tabular}
\captionof{table}{
\sl{{\bf SBI goodness of fit test properties} 
over 50 repetitions at $\alpha=0.05$, 
with settings as in Figure~\ref{fig:gof}. 
Estimated average p-value and rejection rate (with 95\% confidence intervals) 
for tests based on the Wasserstein and Kolmogorov–Smirnov (KS) statistics. Under 
the null (first row), the rejection rate estimates the nominal level $\alpha$;
under the alternatives, it estimates the 
power of the test.
The Wasserstein-based test reliably detects model misspecification. 
The KS-based test has lower power, particularly under the $t$-distribution alternative. 
}}
\label{table::gof_results}

\end{figure}

For these 1D exsamples, we estimate the Wasserstein distance using~(\ref{eq::w1d_qntl}). 
We observe $Y_1, \dots, Y_n\sim P$ and estimate the quantile function  
$\widehat{F}_n^{-1}(x)$. Then for each $\theta_1, \dots, \theta_N\sim \pi$ we
\begin{enumerate}
\item simulate $\mathcal{Y}_n(\theta_j)=Y_1(\theta_j), \dots, Y_n(\theta_j)\sim P_{\theta_j}$ and $\mathcal{Y}_M^*(\theta_j)=Y_1^*(\theta_j), \dots, Y_M^*(\theta_j)\sim P_{\theta_j}$,
\item estimate the quantile functions  
$\widehat{F}^{-1}_n(\theta_j, x)$ and $\widehat{F}^{-1}_M(\theta_j, x)$ using $\mathcal{Y}_n(\theta_j)$ and $\mathcal{Y}^*_M(\theta_j)$ respectively,
\item estimate the distance $\widehat{W}(P_M^*(\theta_j), P_n(\theta_j)) = 
\left( \int_0^1 |\widehat{F}_M^{-1}(\theta_j, u) - \widehat{F}_n^{-1}(\theta_j, u)|^2 du \right)^{1/2}$
and similarly for $\widehat{W}(P_M^*(\theta_j), P_n)$.
\end{enumerate}
The test statistics $\widehat{T}_n$ and $\widehat{T}_n(\theta)$ are derived using the estimated quantities. 
The Kolmogorov-Smirnov statistic is instead computed by first estimating the empirical CDFs from the observed, $\mathcal{Y}_n$, and simulated data, $\mathcal{Y}_n(\theta_j)$, $\mathcal{Y}_M^*(\theta_j)$, then  deriving $\widehat{KS}(P_M^*(\theta_j), P_n(\theta_j))=
\max_{x} |F_M^*(\theta_j, x)-F_n(\theta_j, x)|$.
Fig.~\ref{fig:gof} (right) compares the three empirical distributions 
to the assumed normal distribution fitted to the respective data. 
The discrepancy between true and assumed models in (ii) is 
clearly visible. 
In example (iii) we chose the degrees of freedom of the true distribution for it to be distinguishable from the model but not easily.
Fig.~\ref{fig:gof} (left) shows the estimates of $p(\theta)$ in (\ref{eq::gof.p}) 
as functions of $\theta$ for the three datasets, as well as 
the p-values in~(\ref{eq::gof.pv}) for the null hypothesis that 
the data has distribution $\mathcal{N}(\theta, 1)$. 
The test leads to the correct decisions
in the three cases.
We repeated this simulation 50 times to estimate the powers of the tests: when the true distribution belongs to the model (scenario (i)), 
the GoF test never rejected the null hypothesis when using Wasserstein distance and three times for KS statistic; see Table~\ref{table::gof_results}. 
In cases (ii) and (iii), when the truth does not belong to the model, the Wasserstein-based GoF correctly rejected the null all 50 times. The KS-based test fails to reject the null 8 out of 50 times for (iii) while correctly rejecting it for all repetitions of the experiment in (ii). The choice of distribution for (iii) highlights how the Wasserstein-based test is more powerful than the KS-based test 
in this example as it better captures subtle differences in the true and model 
distributions as seen in Fig.~\ref{fig:gof} right panel.

\section{Accoutrements}
\label{sec::accoutrements}

We now present two additional results that are useful in the SBI framework beyond the model
misspecification situation: a closed form approximation to an intractable model $p_\theta$ and an 
active learning method to sample the parameter space efficiently, which
should be useful particularly in higher dimensions.

\subsection{Model Approximation via SBI} \label{sec::approx}

In cases where
$p_\theta$ is intractable,
we have used SBI to construct a confidence set for $\theta$.
But in some cases
it might be useful
to have a closed form
expression that approximates $p_\theta$.
In this section we show how SBI 
can be used to find such an expression.
This is distinct from
constructing inferences for $\theta$.

We approximate $p_\theta(x)$ with a varying coefficient model
$$
p_{\theta,f}(y) = \sum_{r=1}^k f_r(\theta) b_r(y)
$$
where $b_1,\ldots, b_k$ are given basis functions
and $f(\theta) = (f_1(\theta),\ldots, f_k(\theta))$
are smooth functions mapping $\Theta$ to $\mathbb{R}$.
We want to find $f^*$ to minimize
$$
\int (p_\theta(y) - p_{\theta,f}(y))^2 dy.
$$
The mininimizer is
$$
f^*(\theta) = B^{-1} \E_\theta[b(Y)]
$$
where $B$ is the $k\times k$ matrix with
$B_{rs} = \int b_r(y) b_s(y)$
and $b(Y) = (b_1(Y),\ldots, b_k(Y))$.
We estimate this by
$$
\hat f^*(\theta_j) = \frac{1}{m}\sum_{i=1}^m B^{-1} b(Y_i(\theta_j)) = B^{-1} \overline{b}_{\theta_j}
$$
where
$\overline{b}_{\theta_j} = m^{-1}\sum_{i=1}^m b_{\theta_j}(Y_i(\theta_j))$.
For $r=1,\ldots, k$,
we estimate $f^*_r(\theta)$ by
nonparametric regression, e.g. kernel or local polynomial regression.
If we use the former we have
$$
\hat f_r(\theta) = 
\frac{\sum_{j=1}^N K_h(\theta-\theta_j) B^{-1} \overline{b}_{\theta_j,r}}
{\sum_{j=1}^N K_h(\theta-\theta_j)}
$$
where $K_h$ is a kernel with bandwidth $h$ and
$\overline{b}_{\theta_j,r}$ is the $r$-th element of
$\overline{b}_{\theta_j}$.
We are essentially doing $N$
density estimation problems
but the $N$ densities are related to each other by the
smooth functions $f_j$.

Then we approximate $p_\theta$ by
$$
p_{\theta,\hat f}(y) = \sum_{r=1}^k \hat f_r(\theta) b_r(y).
$$
Now
$\overline{b}_{\theta_j} - b_{\theta_j} = O_P(m^{-1/2})$ and so,
if the regression estimator has accuracy
$O_P(n^{-\gamma/(2\gamma+d)})$ then
$$
\int ( p_{\theta,f^*}(y) - p_{\theta,\hat f}(y))^2 dy = O_P(1/m) + O_P(n^{-2\gamma/(2\gamma+d)}).
$$

Rather than estimating $f(\theta)$ at each $\theta_j$ and then appying smoothing,
we can instead use smoothing to estimate $\E_\theta[b(Y)]$.
This is useful if $m$ is taken to be small
for computational expediency.
In fact, we can even take $m=1$.
In that case, we let
$$
\hat{b}_\theta = 
\frac{\sum_j K_h(\theta-\theta_j)\overline{b}_{\theta_j}}
{\sum_j K_h(\theta-\theta_j)}
$$
and then we set
$$
\hat f(\theta) = B^{-1}\hat {b}_\theta.
$$

Now we consider how one might choose the number of basis functions $k$.
We fix an upper bound $K$ and choose
$1\leq k \leq K$.
For a fixed $\theta$,
one approach is to minimize an estimate of the $L_2$ error
$$
\int (p_{\theta,\hat f,k}(y) - p_{\theta}(y))^2 dy
$$
where we now include the subscript $k$.
This is equivalent to minimizing
$$
L(\theta,k)=\int p_{\theta,\hat f,k}^2(y)dy - 2\int p_{\theta,\hat f,k}(y) p_\theta(y) dy
$$
which we can estimate by
$$
\hat L(\theta,k) = 
\int p_{\theta,\hat f,k}^2(y)dy - \frac{2}{m}\sum_i  \hat p_{\theta,\hat f,k}( Y_i(\theta)).
$$
However, the result would be a different $k$ for each $\theta$.
Instead, we minimize
\begin{equation}
    \hat L(k) = \frac{1}{N} \sum_j \hat L(\theta_j,k).\label{eq::loss_l2_approx}
\end{equation}
(An alternative is to maximize the maximum over $\theta_j$.)

As proof of concept, we show that a Beta$(\alpha,\beta)$ distribution can be well approximated
by an exponentially
tilted uniform distribution.
The unknown parameter is \( \theta = \alpha \) and,
for simplicity,
we fix \( \beta = 1.5\, \alpha \) to make this a 1-dimensional problem. 
Figure~\ref{fig:approx_model} shows the true density and approximations that use
$k=4$ and $k=8$ basis functions $b_r$, 
for $\theta=1,3$ and $5$.  
The loss (\ref{eq::loss_l2_approx}) was minimized at $k=8$, and the corresponding estimates are
close to the true densities for all $\theta$.

\begin{figure}[t!]
    \centering
    \includegraphics[width=.7\linewidth
    ]{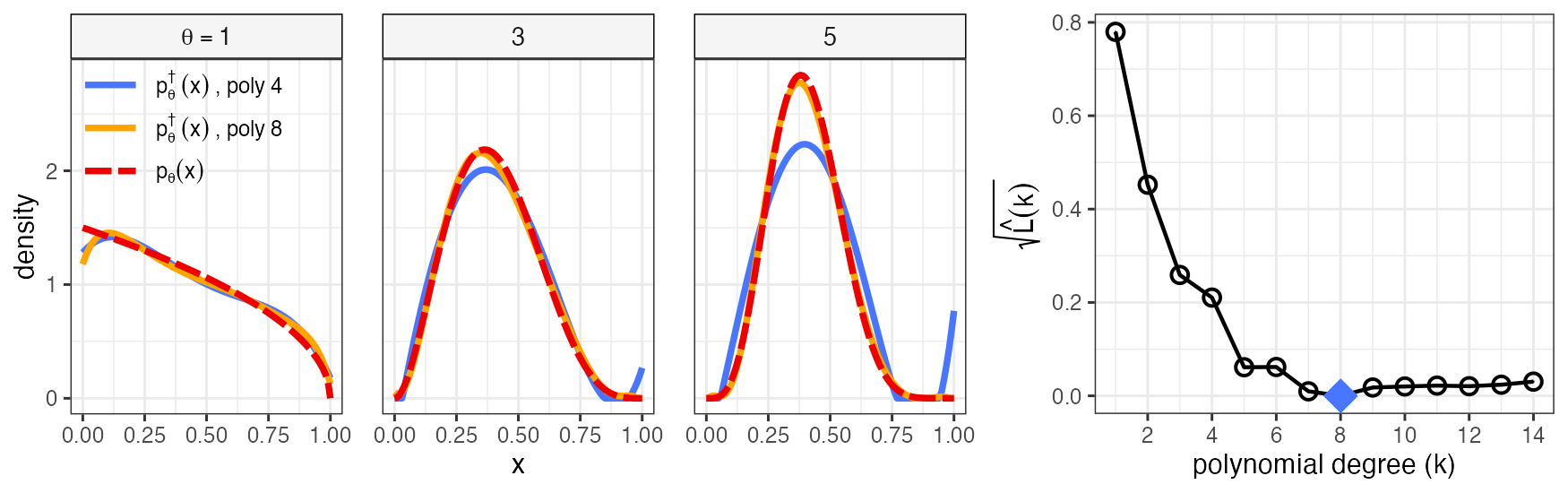}
    \caption{\sl{{\bf Approximation of $p_\theta=\text{Beta}(\theta, 1.5 \cdot \theta)$},
    $\theta=1,3, 5$ (red curves).
    The blue (orange) approximations use polynomial functions with 
    $k=4$ ($k=8$) basis functions. 
    The loss (\ref{eq::loss_l2_approx}) is minimized at $k=8$.
    }
    }
    \label{fig:approx_model}
\end{figure}

\subsection{Active Learning to Explore the Parameter Space Efficiently} \label{sec::AL}

The first step of SBI is to generate $N$ independent values of $\theta$ from $\pi$;
see Section~\ref{sec::SBI}.
Here we consider choosing $\theta$ values sequentially to explore the parameter space more efficiently.

Suppose we have already drawn
$\theta_1,\ldots, \theta_j$
and let $\hat{\mathrm{pv}}(\theta)$
be the current estimate of the p-value function~(\ref{eq::B}).
Let $ C=\{\theta:\ {\text{pv}}(\theta)\geq \alpha\}$
and 
$\hat C=\{\theta:\ \hat {\text{pv}} (\theta)\geq \alpha\}$
denote the confidence set and its current estimate.
We aim
to minimize an estimate of the error
between
$\hat C$ and $C$:
$$
R(\hat C,C)=
\int P (\hat B(\theta) \neq B(\theta)) d\theta, 
$$
where 
$\hat B(\theta) = \mathbb{I} (\hat{\text{pv}}(\theta)\geq \alpha)$ and
$B(\theta) = \mathbb{I} ({\text{pv}}(\theta)\geq \alpha)$.
Therefore, to reduce $R(\hat C,C)$, we choose the next value $\theta_{j+1}$ 
for which
$P (\hat B(\theta) \neq B(\theta))$ is large.

If
\begin{equation}\label{eq::kclt}
\frac{\hat{\text{pv}}(\theta)  - \text{pv}(\theta)}{s(\theta)}\rightsquigarrow N(0,1)
\end{equation}
then
$P(\hat B(\theta) \neq B(\theta)) \to
\Phi \left(- {|\alpha - \text{pv}(\theta)| \over s(\theta)}\right)$.
To see this,
suppose that $\text{pv}(\theta)>\alpha$.
Then
$$
P(\hat B(\theta) \neq B(\theta)) =
P \left(\hat{\text{pv}}(\theta) < \alpha \right)=
P( \left(\hat{\text{pv}}(\theta)-\text{pv}(\theta))/s(\theta) < 
(\alpha - \text{pv}(\theta))/s(\theta) \right) \to
\Phi(- |\alpha - \text{pv}(\theta)|/s(\theta)).
$$
Similarly for $\text{pv}(\theta)<\alpha$. 
Condition (\ref{eq::kclt}) holds 
if, for example, $\hat{\text{pv}}$ is the kernel estimator
with appropriate bandwidth.

We estimate the bound by
$e(\theta) = \Phi(- |\alpha - \hat{\text{pv}}(\theta)|/\hat s(\theta))$. 
To reduce 
$R(\hat C,C)$,
we choose $\theta_{j+1}$ to maximize $e(\theta)$
or we sample it from a density  
that puts high probability on $\theta$'s where
$e(\theta)$ is large.
Note that $e(\theta)$ is large when
$\hat{\text{pv}}(\theta)$ is close to $\alpha$ (we are close to the boundary of the confidence set)
or when $s(\theta)$ is large (the $p$-value is poorly estimated).

\RestyleAlgo{ruled}
\begin{algorithm}[t!]
\caption{Active learning for confidence set estimation 
\vspace{0.1cm}}\label{alg::ALnpm}
\SetAlgoLined\vspace{0.1cm}
\SetKwInOut{Input}{Input}
\SetKwInOut{Output}{Output}
\Input{\begin{itemize}
    \item[] Observed data  ${\cal Y}=Y_1, \dots, Y_n\sim p$;\vspace{-.075in}
    \item[] Initial parameter set and simulated data set $S_\theta = S_Y = \{\varnothing\}$;\vspace{-.075in} 
    \item[] Number of total simulated parameters $N$;\vspace{-.075in}
    \item[] Number of active learning steps $\eta$;\vspace{-.075in}\item[] Initial parameter values $\{\theta_1^{(0)},\ldots,\theta_{N/\eta}^{(0)}\}$, $\theta_j^{(0)} \sim \pi$; \vspace{-.05in}
\end{itemize}
}
\Output{$1-\alpha$ confidence set $\hat{C}=\{\theta: \hat{
\text{pv}}(\theta) \geq \alpha \}$}\vspace{.025in}
\For{$i = 1,\dots,\eta$}{
Augment the set of parameter values: $S_\theta \leftarrow S_\theta \, \bigcup \ \left\{\theta_{1}^{(i-1)}, \dots, \theta_{N/\eta}^{(i-1)} \right\}$ \\
Augment the set of simulated data: $S_Y \leftarrow S_Y\, \bigcup\ \left\{\mathcal{Y}(\theta_1^{(i-1)}), \ldots, \mathcal{Y}(\theta_{N/\eta}^{(i-1)}) \right \}$, where $\mathcal{Y}(\theta_j^{(i-1)})= \left( Y_1(\theta_j^{(i-1)}),\dots, Y_n(\theta_j^{(i-1)}) \right)$ and $\ Y_j(\theta_j^{(i-1)}) \sim p_{\theta_j^{(i-1)}}$  \\
Build the dataset for SBI using $(S_\theta,S_Y)$ and estimate the likelihood at all $\theta \in S_\theta$ (see Section~\ref{sec::SBI})   \\
Compute the indicators $B(\theta_j) = \mathbb{I} \{\ell(\mathcal{Y}(\theta_j), \theta_j)\leq \ell(\mathcal{Y}, \theta_j)\}$,
$\theta_j \in S_\theta$\\
%
%
Estimate $\hat{\text{pv}}(\theta)$ and $\hat{s}_{\text{pv}}(\theta)$ via kernel regression of $B(\theta_j)$ on $\theta_j$, $\theta_j \in S_\theta$\\
Sample $\theta_{1}^{(i)}, \dots, \theta_{N/\eta}^{(i)}\sim f_{\theta}$, where $\ f_{\theta} \propto e(\theta)$ and $e(\theta) = \Phi(- |\alpha - \hat{\text{pv}}(\theta)|/\hat s(\theta))$ 
\\
}
{\bf return} \quad $\widehat{C}=\{\theta:\ \widehat{\text{pv}}(\theta)>\alpha\}$.\vspace{.1cm}
\end{algorithm}

To illustrate the method, we estimated the mean vector of a 
bivariate normal distribution 
$\mathcal{N}(\mu, \Sigma)$ with mean $\theta = (\mu_1, \mu_2)$ 
and known covariance $\Sigma = \sigma^2 I$ with $\sigma = \sqrt{2}$.
The true target parameter is $\theta^* = (1, 2)$. 
We used a small SBI simulation
to emulate a high dimensional situation, when active learning is most valuable.
We started with $N_0=100$ parameters equally spaced over 
a grid in the parameter space $\Theta=[-5,5]\times[-5,5]$. 
Because $N_0$ is small, we used a guided
approach for the first five iterations: we uniformly 
sampled $50$ points from the level set corresponding to the 
$1-\alpha$ quantile of a chi-squared distribution estimated 
by a quadratic approximation of the SBI likelihood function.
We then applied the proposed active learning procedure, 
as summarized in Algorithm~\ref{alg::ALnpm}, 
sampling 25 additional $\theta$ values
at each iteration.
The likelihood was estimated anew at each iteration,
per~(\ref{eq::likelihood_trick})
using deep learning to solve the classification problem 
(see Appendix~\ref{app::al_sbi_nn} for details).
For comparison, we
also estimated the likelihood by 
SBI over a regular grid of parameter values, 
matching the sample size of the active learning approach.

Figure~\ref{fig:AL_results} 
shows the true and estimated confidence sets for iterations 
3, 7, 11, 15 and 19, with a focus on the high-likelihood region of the parameter space.
The usual SBI confidence sets are variable and do not 
improve markedly as 
the simulation size increases. Their AL counterparts are more stable and
improve steadily.
The $\theta$ values sampled at each step of the 
AL procedure (grey ‘+’) tend to concentrate in areas where the 
confidence set can be improved in subsequent iterations.

A more formal comparison of the two approaches can be based on the excess risk defined in \cite{willett_minimax_2007} Eq.~(6), without normalization:
\begin{equation}
\mathcal{R}(C) - \mathcal{R}(C^*) = \int_{\Delta(C^*, C)} |\alpha - pv(\theta)| \, d\mathbb{P}_\theta,\label{eq::excess_risk}
\end{equation}
where $\Delta(C^*, C)$ is the symmetric difference of the true and estimated confidence sets $C$ and $C^*$,
\begin{equation}
\Delta(C^*, C) = \left\{\theta \in \Theta : (\theta \in C \setminus C^*) \cup (\theta \in C^* \setminus C)\right\}.\nonumber
\end{equation}

A possible alternative active learning approach from \cite{zhao_sequential_2012}
relies on nonparametric quantile regression, but it is more delicate to 
implement because several parameters have to be chosen. Details 
and example are provided in
Appendix~\ref{app::AL}.

\begin{figure}[t!]
    \centering
    \includegraphics[width=.95\linewidth]{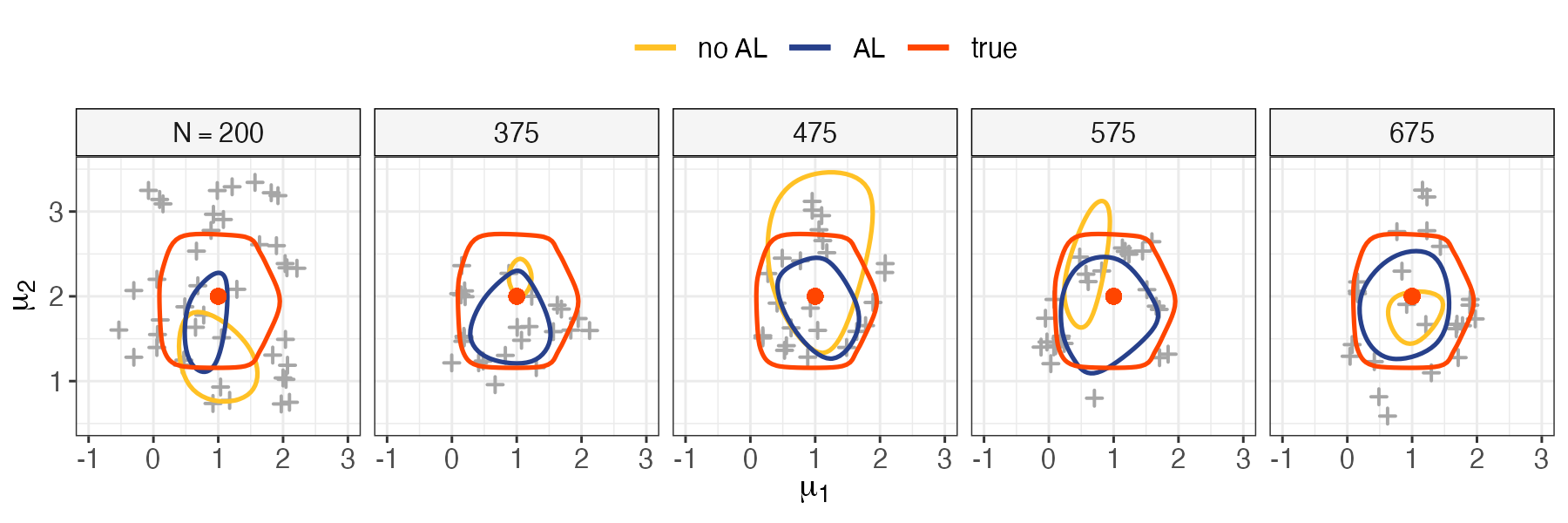}
    \caption{\sl{{\bf Active Learning Confidence Sets.} 
    Inference is for the mean $\theta=(\mu_1,\mu_2)$ of a bivariate normal distribution 
    $N(\theta, \Sigma)$, with true value $\theta^*= (1,2)$ (red dot). 
    The red perimeter is the true confidence set.
    The yellow perimeters are confidence sets obtained by estimating the likelihood by SBI on a 
    regular grid of $N$ points (specified at the top).
    The navy perimeters are the sequential active learning SBI confidence sets (Algorithm~\ref{alg::ALnpm}) 
    using the same $N$. The grey + are the new points generated by the active learning
    procedure. 
    }} 
    \label{fig:AL_results}
\end{figure}

\section{Conclusion}

We presented an
approach to robust simulation-based inference (SBI) approach.
We proposed discrepancy-based estimators and discussed the theoretical guarantees for the validity of the proposed confidence sets, which are based on a relative-fit approach. This method relaxes many of the assumptions required for confidence sets.

We demonstrated the 
validity of our inference approaches across a range of applications, 
from simple Gaussian location and scale inference to more complex 
settings with unknown model densities (e.g., the g-and-k distribution), 
and an example  where regularity conditions fail
(e.g., Gaussian mixture models with same components). 
We derived empirical 
coverage of the proposed confidence sets, which achieved or exceeded the nominal coverage under both correctly specified and misspecified models. This showcases 
the robustness of our discrepancy-based inference method. 
We also proposed an approach to expand the model via exponential tilt to address model misspecification, demonstrating its validity as well with several examples.

We conclude by discussing considerations for future research. 
As detailed in Section~\ref{sec::density_ratio_estim}, our approach relies on a kernel method for density ratio estimation, 
which requires selecting a reference distribution $g$. 
There is no general rule for this choice, except selecting a distribution with larger variance to avoid exploding ratios. This choice however needs 
to be balanced against computational stability and accuracy.
In Section~\ref{sec::robust_sbi}, saw
how the reference distribution $g$ affects the asymptotic variance of the estimator, 
raising an interesting research question about the 
dependence of the variance of estimated density ratios on $g$.

As the dimensionality of the parameter space increases, 
computational costs rise exponentially.
\cite{walchessen_neural_2024} proposed
a method based on
Latin Hypercube Sampling (LHS)\footnote{\cite{carnell_lhs_2023}, https://cran.r-project.org/web/packages/lhs/lhs.pdf} which guarantees uniform coverage of the parameter space.
However, LHS does not mitigate increasing complexity with the dimensionality of the parameter space. 
We have proposed an active learning approach
but much more research on this topic is needed.

\bibliography{references}

\appendix

\section{Algorithms}
\label{appendix::algo} 

\subsection{Simulated dataset for classification-based likelihood estimation}
\label{alg:data_SBI}

\begin{algorithm}[h!]
\SetAlgoLined
\SetKwInOut{Input}{Input}
\SetKwInOut{Output}{Output}
\Input{
number $N$ of $\theta$ values\\
\ sampling distribution $\pi$ over the parameter space $\Theta$
 }
\Output{
dataset for SBI $(Z_j,{\cal Y}_j, \theta_j)$, $1\leq j \leq 2N$}
sample $\theta_1, \dots, \theta_N \sim \pi$\\
generate a permutation of the index set $s=[I_1, \dots, I_N]$  \\
\For{$j=1,\ldots, N$}{
  draw $\mathcal{Y}_j = \mathcal{Y}(\theta_j)\sim p_{\theta_j}$ and $Z_j=1$ \\
  set $\mathcal{Y}_{N+j}=\mathcal{Y}_{j}$, $Z_{N+j}=0$ and $\theta_{N+j}=\theta_{s_j}$
}
\end{algorithm}

\subsection{Newton-Raphson for Algorithm~\ref{alg:tilt} } \label{app::NR}

We start at $\hat \beta^{(0)} = (0,\ldots, 0)^T$, where the tilted model coincides with $p_\theta$,
and then iterate
$$
\hat \beta^{(k+1)}(\theta) \leftarrow \hat \beta^{(k)}(\theta) - V(\theta)^{-1} S(\theta)
$$ 
where $S(\theta)=\sum_i b(Y_i) -n\left(\dfrac{ \sum_i b(Y_i(\theta)) e^{\beta^T b(Y_i(\theta))}}{ \sum_i e^{\beta^T b(Y_i(\theta))}}\right)$ and $$V(\theta)= n  \left(\dfrac{\left( \sum_i b(Y_i(\theta)) e^{\beta^T b(Y_i(\theta))}\right)\left( \sum_i b(Y_i(\theta)) e^{\beta^T b(Y_i(\theta))}\right)^\top}{\left( \sum_i e^{\beta^T b(Y_i(\theta))}\right)^2}-\dfrac{\sum_i b(Y_i(\theta)) b(Y_i(\theta))^\top e^{\beta^T b(Y_i(\theta))}}{\sum_i e^{\beta^T b(Y_i(\theta))}}\right).$$

\section{Proofs}
\label{appendix::proofs} 

{\bf Proof of Lemma \ref{lemma::hellinger}.} 
We first derive the one-step estimator.
Let the influence function for $r$ be
\begin{align*}
    \phi_r(x) &= \frac{\partial}{\partial \epsilon}
\psi(\theta, P + \epsilon (\delta_x - P))\Biggm|_{\epsilon=0} \\
&= 
\frac{\partial}{\partial \epsilon}
\frac{1}{2}\int g\cdot \sqrt{\frac{p+\epsilon(\delta_x - p)}{g}\cdot s_\theta} 
\Biggm|_{\epsilon=0} \\
&= 
\frac{1}{2}
\int \sqrt{g\cdot\frac{s_\theta}{p}} (\delta_x - p)\\
&=
\frac{1}{2} \sqrt{\frac{s_\theta}{r}} - \frac{\psi(\theta)}{2}.
\end{align*}
Similarly by perturbating $P_\theta$ we get the influence function
\begin{align*}
    \phi_{s_\theta}(x) &= \frac{\partial}{\partial \epsilon}
\psi(\theta, P_\theta + \epsilon (\delta_x - P_\theta))\Biggm|_{\epsilon=0} \\
&= 
\frac{\partial}{\partial \epsilon}
\frac{1}{2}\int g \cdot \sqrt{\frac{p_\theta +\epsilon(\delta_x - p_\theta) }{g}\cdot r} 
\Biggm|_{\epsilon=0} \\
&= 
\frac{1}{2}
\int \sqrt{g\cdot\frac{r}{p_\theta}} (\delta_x - p_\theta)\\
&=
\frac{1}{2} \sqrt{\frac{r(x)}{s_\theta(x)}} - \frac{\psi(\theta)}{2}.
\end{align*}
Now let $Y_i\sim p$ and $Y_i^*\sim p_\theta$. The above lead to the one-step estimator
\begin{align*}
\widehat\psi(p_\theta, p) &= \int g \cdot \sqrt{\widehat{r}\widehat{s}_\theta} + \dfrac{1}{n}\sum_i \widehat\phi_r (Y_i) + \dfrac{1}{m}\sum_i \widehat\phi_{s_\theta} (Y_i^*) = \dfrac{1}{2n}\sum_i   \sqrt{\frac{\widehat{s}_\theta(Y_i)}{\widehat{r}(Y_i)}} +
\frac{1}{2m}\sum_i \sqrt{\frac{\widehat{r}(Y^*_i)}{\widehat{s}_\theta(Y^*_i)}} .
\end{align*}
From a standard von Mises expansion we get 
\begin{align*}
   \sqrt{n}(\widehat\psi -\psi)&=\sqrt{n}\Bigl(\int \phi_r(x,\hat{P}) \mathrm{d} P +
\int \phi_{s_\theta}(x,\hat{P}_\theta) \mathrm{d} P_\theta + R_n \Bigr)\\
&\approx \dfrac{1}{n}\sum_i \widehat\phi_r (Y_i) + \dfrac{1}{m}\sum_i \widehat\phi_{s_\theta} (Y_i^*) + \mathbb{G}_n(\widehat\phi_r (Y_i)-\phi_r (Y_i))+ \mathbb{G}_n(\widehat\phi_{s_\theta} (Y_i^*)-\phi_{s_\theta} (Y_i^*) + R_n)
\end{align*}
where $\mathbb{G}_nf=\sqrt{n}(\frac{1}{n}\sum_i f(X_i)-\mathbb{E}_X[f])$. Under proper assumptions (see Section 19 in \cite{vaart_asymptotic_1998}), the empirical processes
$$\frac{1}{\sqrt{n}}\sum \widehat\phi_{s_\theta} (Y_i^*)\rightsquigarrow\mathcal{N}(0, \mathbb{E}[\phi_{s_\theta}^2]),\quad \frac{1}{\sqrt{n}}\sum \widehat\phi_{r} (Y_i)\rightsquigarrow\mathcal{N}(0, \mathbb{E}[\phi_{r}^2])$$ 
with the remainder term  $R_n=o_P(n^{-1/2})$ and $\mathbb{G}_n(\widehat\phi_{r} (Y_i)-\phi_{r} (Y_i))=o_P(1)$ and $\mathbb{G}_n(\widehat\phi_{s_\theta} (Y_i^*)-\phi_{s_\theta} (Y_i^*))=o_P(1)$. 

The variance of the estimator is $\sigma^2=\mathbb{E}_{PP_\theta}[\phi_{r}^2+\phi_{s_\theta}^2]$, where 
\begin{align*}
    \mathbb{E}[\varphi^2] &= \mathbb{E}_P\left[\Biggl(\dfrac{1}{2}\sqrt{\dfrac{s_\theta}{r}}(Y)-\dfrac{1}{2}\psi\Biggr)^2\right] + \mathbb{E}_{P_\theta}\left[\Biggl(\dfrac{1}{2}\sqrt{\dfrac{r}{s_\theta}}(Y^*)-\dfrac{1}{2}\psi\Biggr)^2\right] \\
    &= \dfrac{1}{4}\int \dfrac{s_\theta}{r} p 
    + \dfrac{1}{4}\psi^2 
    - \dfrac{\psi}{2}\int \sqrt{\dfrac{s_\theta}{r}}p
    + \dfrac{1}{4} \int \dfrac{r}{s_\theta}p_\theta
    + \dfrac{1}{4}\psi^2
    - \dfrac{\psi}{2}\int \sqrt{\dfrac{r}{s_\theta}}p_\theta\\
    &= \dfrac{1}{4}\int p_\theta
    + \dfrac{1}{4} \int p
    + \dfrac{1}{2}\psi^2 
    - \psi \int \sqrt{p p_\theta}
    \\
    &=\dfrac{1-\psi^2}{2} 
\end{align*}

{\bf Proof of Lemma \ref{lemma::power}.}
We first derive the one-step estimator.
Let the influence function for $r$ be
\begin{align*}
    \phi_r(x) &= \frac{\partial}{\partial \epsilon}
\psi_\gamma(\theta, P + \epsilon (\delta_x - P))\Biggm|_{\epsilon=0} \\
&= 
\frac{\partial}{\partial \epsilon}
\int s_\theta^{1+\gamma} g^{1+\gamma}-\int \Bigl(1+\frac{1}{\gamma}\Bigr) \frac{p+\epsilon(\delta_x-p)}{g}s_\theta^\gamma g^{1+\gamma}
\Biggm|_{\epsilon=0} \\
&= -\Bigl(1+\frac{1}{\gamma}\Bigr)\int  \frac{(\delta_x-p)}{g}s_\theta^\gamma g^{1+\gamma}\\
&=-\Bigl(1+\frac{1}{\gamma}\Bigr)\Bigl[ s_\theta^\gamma(x) g^\gamma(x) +\int  r s_\theta^\gamma g^{1+\gamma}\Bigr].
\end{align*}
Similarly by perturbating $P_\theta$ we get the influence function
\begin{align*}
    \phi_{s_\theta}(x) &= \frac{\partial}{\partial \epsilon}
\psi(\theta, P_\theta + \epsilon (\delta_x - P_\theta))\Biggm|_{\epsilon=0} \\
&= 
\frac{\partial}{\partial \epsilon}
\int \Bigl(\dfrac{p_\theta+\epsilon(\delta_x - p_\theta)}{g}\Bigr)^{1+\gamma} g^{1+\gamma}-\int \Bigl(1+\frac{1}{\gamma}\Bigr) r \Bigl(
\dfrac{p_\theta+\epsilon(\delta_x - p_\theta)}{g}\Bigr)^\gamma g^{1+\gamma}
\Biggm|_{\epsilon=0}  \\
&= \int 
(1+\gamma)\dfrac{p_\theta^\gamma}{g^{1+\gamma}}(\delta_x - p_\theta) g^{1+\gamma}-\Bigl(1+\frac{1}{\gamma}\Bigr) \int  \gamma\cdot r 
\dfrac{p_\theta^{\gamma-1}}{g^\gamma}(\delta_x - p_\theta) g^{1+\gamma}
\\
&= (1+\gamma)s_\theta^\gamma(x)g^\gamma(x)- \gamma\Bigl(1+\frac{1}{\gamma}\Bigr) r (x)
s_\theta^{\gamma-1}(x) g^\gamma (x)-(1+\gamma)\int 
s_\theta^{1+\gamma} g^{1+\gamma} + \gamma\Bigl(1+\frac{1}{\gamma}\Bigr) \int r 
s_\theta^{\gamma} g^{1+\gamma} 
\end{align*}
Now let $Y_i\sim p$, $Y_i^*\sim p_\theta$ and $X_i\sim g$. The above lead to the one-step estimator
\begin{align*}
\widehat\psi_\gamma(p_\theta, p) &= \int \widehat{s}_\theta^{1+\gamma} g^{1+\gamma} - \left(1 + \frac{1}{\gamma}\right) 
\widehat{r}\, \widehat{s}_\theta^\gamma g^{1+\gamma} + \dfrac{1}{n}\sum_i \phi_r (Y_i) + \dfrac{1}{m}\sum_i \phi_{s_\theta} (Y_i^*) \\
&= \dfrac{1+\gamma}{2m}\sum_i   \widehat{s}_\theta^\gamma(Y_i^*) g^\gamma(Y_i^*) -  \dfrac{1+\gamma}{2m}\sum_i \widehat{r} (Y_i^*)
\widehat{s}_\theta^{\gamma-1}(Y_i^*) g^\gamma (Y_i^*) 
-\Bigl(1+\frac{1}{\gamma}\Bigr) \frac{1}{n}\sum_i \widehat{s}_\theta^\gamma(Y_i) g^\gamma(Y_i)\\
&- \gamma \Bigl(\int \widehat{s}_\theta^{1+\gamma} g^{1+\gamma} - \left(1 + \frac{1}{\gamma}\right)\int \widehat{r}\widehat{s}_\theta^\gamma g^{1+\gamma}\Bigr).
\end{align*}
where we estimate the integrals using samples from $X_1, \dots, X_{\tilde{m}}\sim g$, so that for large enough $\tilde{m}$ 
$$
\int \widehat{s}_\theta^{1+\gamma} g^{1+\gamma}\approx \frac{1}{\tilde{m}}\sum_i  \widehat{s}_\theta^{1+\gamma}(X_i) g^{\gamma}(X_i)
,\quad 
\int \widehat{r}\,\widehat{s}_\theta^\gamma g^{1+\gamma} \approx \frac{1}{\tilde{m}}\sum_i \widehat{r}(X_i)\widehat{s}_\theta^\gamma (X_i) g^{\gamma}(X_i)
$$
Asymptotic normality of the one-step estimator follows from the same argument in the proof of Lemma \ref{lemma::hellinger}.
The asymptotic variance is 
\begin{align*}
    \mathbb{E}[\phi_r^2+\phi_{s_\theta}^2] &= \mathbb{E}_P\left[
    \Bigl(1+\frac{1}{\gamma}\Bigr)^2\Biggl(-s_\theta^\gamma(x) g^\gamma(x) + \int  r s_\theta^\gamma g^{1+\gamma}\Biggr)^2\right] \\
    &+ \mathbb{E}_{P_\theta}\left[(1+\gamma)^2\Biggl(s_\theta^\gamma(x)g^\gamma(x)- r (x)s_\theta^{\gamma-1}(x) g^\gamma (x)-\int s_\theta^{1+\gamma} g^{1+\gamma} +  \int r s_\theta^{\gamma} g^{1+\gamma}\Biggr)^2\right] \\
    &= \Bigl(1+\frac{1}{\gamma}\Bigr)^2 \Biggl(\int s_\theta^{2\gamma} g^{2\gamma} p +\Bigl(\int  r s_\theta^\gamma g^{1+\gamma}\Bigr)^2 - 2 \int  r s_\theta^\gamma g^{1+\gamma}\int s_\theta^\gamma g^\gamma p \Biggr)\\
    &+(1+\gamma)^2\Biggl(\int s_\theta^{2\gamma}g^{2\gamma}p_\theta  + \int r^2 s_\theta^{2(\gamma-1)} g^{2\gamma} p_\theta 
    + \Bigl(\int s_\theta^{1+\gamma} g^{1+\gamma}\Bigr)^2 + \Bigl(\int r s_\theta^{\gamma} g^{1+\gamma}\Bigr)^2\\
    & - 2\int s_\theta^\gamma r s_\theta^{\gamma-1} g^{2\gamma}p_\theta - 2\int s_\theta^{1+\gamma} g^{1+\gamma}\int s_\theta^\gamma g^\gamma p_\theta 
    +  2\int r s_\theta^{\gamma} g^{1+\gamma} \int s_\theta^\gamma g^\gamma p_\theta \\
    & + 2\int s_\theta^{1+\gamma} g^{1+\gamma}\int r s_\theta^{\gamma-1} g^\gamma p_\theta 
    -  2\int r s_\theta^{\gamma} g^{1+\gamma} \int r s_\theta^{\gamma-1} g^\gamma p_\theta -2\int s_\theta^{1+\gamma} g^{1+\gamma}  \int r s_\theta^{\gamma} g^{1+\gamma}\Biggr)
    \\
    &= \Bigl(1+\frac{1}{\gamma}\Bigr)^2 \int s_\theta^{2\gamma} g^{2\gamma} p 
    +(1+\gamma)^2 \int s_\theta^{2\gamma}g^{2\gamma}p_\theta 
    + (1+\gamma)^2 \int r^2 s_\theta^{2\gamma-1} g^{2\gamma+1} 
    - 2(1+\gamma)^2\int  r s_\theta^{2\gamma} g^{2\gamma+1} \\ 
    &\quad 
    - (1+\gamma)^2\Bigl(\int s_\theta^{1+\gamma} g^{1+\gamma}\Bigr)^2
    - \Bigl(1+\frac{1}{\gamma}\Bigr)^2\Bigl(\int  r s_\theta^\gamma g^{1+\gamma}\Bigr)^2
    - (1+\gamma)^2 \Bigl(\int r s_\theta^{\gamma} g^{1+\gamma} \Bigr)^2\\
    &\quad 
    +  2(1+\gamma)^2\int r s_\theta^{\gamma} g^{1+\gamma} \int s_\theta^{1+\gamma} g^{1+\gamma} \\
    &= \Bigl(1+\frac{1}{\gamma}\Bigr)^2 \E_P \Bigl[ s_\theta^{2\gamma} g^{2\gamma} \Bigr]
    +(1+\gamma)^2 \E_{p_\theta}\Bigl[ s_\theta^{2\gamma}g^{2\gamma}\Bigr]
    + (1+\gamma)^2 \E_g \Bigl[r^2 s_\theta^{2\gamma-1} g^{2\gamma}\Bigr] 
    - 2(1+\gamma)^2\E_g \Bigl[  r s_\theta^{2\gamma} g^{2\gamma}\Bigr]  \\ 
    &\quad 
    - (1+\gamma)^2\Biggl(
    \Bigl(\int s_\theta^{1+\gamma} g^{1+\gamma}\Bigr)^2
    + \Bigl(1+\frac{1}{\gamma^2}\Bigr) \Bigl(\int  r s_\theta^\gamma g^{1+\gamma}\Bigr)^2 -2 \int r s_\theta^{\gamma} g^{1+\gamma} \int s_\theta^{1+\gamma} g^{1+\gamma} \\
    &\qquad\qquad\qquad\ 
    \pm \frac{2}{\gamma}  \Bigl(\int  r s_\theta^\gamma g^{1+\gamma}\Bigr)^2
    \pm \frac{2}{\gamma} \int r s_\theta^{\gamma} g^{1+\gamma} \int s_\theta^{1+\gamma} g^{1+\gamma} \Biggr) \\
    &= \Bigl(1+\frac{1}{\gamma}\Bigr)^2 \E_p \Bigl[ s_\theta^{2\gamma} g^{2\gamma} \Bigr]
    - \frac{2(1+\gamma)^2}{\gamma} \E_{p}\Bigl[ s_\theta^{\gamma} g^{\gamma}\Bigr] \E_{p_\theta}\Bigl[ s_\theta^{\gamma} g^{\gamma}\Bigr]
    +(1+\gamma)^2 \E_{p_\theta}\Bigl[ s_\theta^{2\gamma}g^{2\gamma}\Bigr]\\
    &\quad + (1+\gamma)^2 \E_g \Bigl[r^2 s_\theta^{2\gamma-1} g^{2\gamma}\Bigr] 
    - 2(1+\gamma)^2\E_g \Bigl[  r s_\theta^{2\gamma} g^{2\gamma}\Bigr]  
    + \frac{2(1+\gamma)^2}{\gamma}  \Bigl(\E_g \Bigl[ r s_\theta^\gamma g^{\gamma}\Bigr]\Bigr)^2
     \\
    &\quad- (1+\gamma)^2\psi_\gamma^2\\
    &=  
    \E_{pp_\theta} \Biggl[\Biggl( \Bigl(1+\frac{1}{\gamma}\Bigr) s_\theta^{\gamma}(Y) g^{\gamma}(Y) - 
    (1+\gamma)^2 s_\theta^{\gamma}(Y^*)g^{\gamma}(Y^*)\Biggl)^2\Biggr]\\
    &\quad + (1+\gamma)^2 \E_g \Bigl[r^2 s_\theta^{2\gamma-1} g^{2\gamma}\Bigr] 
    - 2(1+\gamma)^2\E_g \Bigl[  r s_\theta^{2\gamma} g^{2\gamma}\Bigr]  
    + \frac{2(1+\gamma)^2}{\gamma}  \Bigl(\E_g \Bigl[ r s_\theta^\gamma g^{\gamma}\Bigr]\Bigr)^2
     \\
    &\quad- (1+\gamma)^2\psi_\gamma^2
\end{align*}

{\bf Proof of Theorem \ref{thm::whatisthis}.}
We can write $\hat\Delta(\theta_1,\theta_2)$ as in eq.\ref{eq::diff_discr}.
Under weak conditions 
(which do not require regularity conditions on the model),
$$
\frac{ \hat\Delta(\theta_1,\theta_2) - \Delta(\theta_1,\theta_2) }
{s(\theta_1, \theta_2)}
\rightsquigarrow N(0,1).
$$

Now split the data into two parts
${\cal D}_0$ and
${\cal D}_1$.
For notational simplicity, assume each has sample size $n_0=n_1=n$. 
Let $\hat\theta$ be any estimator
computed from ${\cal D}_0$.
Let
\begin{equation}
C = \{ \theta:\ \hat\Delta(\theta,\hat\theta)\leq t_n(\theta)\}\label{eq::cs_relative_fit}
\end{equation}
where
$t_n(\theta) = s(\theta,\hat\theta)z_\alpha/\sqrt{n}$.
Note that
$$\Delta(\theta^*,\hat \theta) = 
d(p_{\theta^*},p)-d(p_{\hat \theta},p)
\leq 0
$$
since $\theta^*$ minimizes
$D(\theta)$.
Therefore, conditional on ${\cal D}_0$,
\begin{align*}
P(\theta^* \notin C) &=
P \left( \hat\Delta(\theta,\hat\theta)\leq t_n(\theta)\right)\\
&=
P \left( \sqrt{n}(\hat\Delta(\theta,\hat\theta)- \Delta(\theta,\hat\theta) )
\leq \sqrt{n}(t_n(\theta) - \Delta(\theta,\hat\theta)) \right)\\
& \leq
P \left( \sqrt{n}(\hat\Delta(\theta,\hat\theta)- \Delta(\theta,\hat\theta))
\leq \sqrt{n}t_n(\theta) \right)\ \ 
\text{since\ } \Delta(\theta^*,\hat\theta) \leq 0\\
& \to
P(Z > z_\alpha) = \alpha.
\end{align*}

\paragraph{Proof of Theorem \ref{thm::4}.}
From 
Lemma \ref{lemma::Lip_Wass}, using assumption (2),
$\hat T_n = T_n + O_P(\sqrt{\log N/N})+ O_P(\sqrt{\log M/M})$.
Similarly,
$\hat T_n(\theta) = T_n(\theta) + O_P(\sqrt{\log N/N})+ O_P(\sqrt{\log M/M})$
uniformly in $\theta$. 
In this proof we write $w_r(\theta):= \frac{K_h(\theta_r - \theta)}{\sum_r K_h(\theta_r - \theta)} $.
We have,
\begin{align*}
    \widehat{p}(\theta)= \sum_r w_r(\theta) I( \hat T_n(\theta_r) \geq \hat T_n) &=
\sum_r w_r(\theta) I(  T_n(\theta_r) \geq  T_n) +
\sum_r w_r(\theta) \Bigl[ I( \hat T_n(\theta_r) \geq \hat T_n)-  I(  T_n(\theta_r) \geq  T_n)\Bigr] \\
&=
\sum_r w_r(\theta) I(  T_n(\theta_r) \geq  T_n) +
O_P(\sqrt{\log N/N}) + O_P(\sqrt{\log M/M})
\end{align*}
Indeed, let $\widehat{D}_r = \hat T_n(\theta_r) - \hat T_n$ and $D_r =T_n(\theta_r) - T_n$.
Then $\widehat{D}_r=D_r+\delta_{N,M}$ where $$\delta_{N,M}=O_P(\sqrt{\log N/N})+ O_P(\sqrt{\log M/M})$$. 
We want to show that 
$\mathbb{I}(\widehat{D}_r>0)-\mathbb{I}(D_r>0) = O_P(\sqrt{\log N/N})+ O_P(\sqrt{\log M/M})$. 
We note that 
\begin{align*}
    |\mathbb{I}(\widehat{D}_r>0)-\mathbb{I}(D_r>0)| &= |\mathbb{I}(\widehat{D}_r>0)\neq\mathbb{I}(D_r>0)|=\mathbb{I}(\text{sign} (D_r + \delta_{N,M})\neq\text{sign} (D_r))\\
    &\leq \mathbb{I}(D_r \in [-\delta_{N,M}, \delta_{N,M}]) = \mathbb{I}(|D_r| \leq |\delta_{N,M}|).
\end{align*}
Since $\delta_{N,M}\leq C_1  \sqrt{\log N/N}+C_2  \sqrt{\log M/M}$ for some $C_1,C_2>0$, then 
$$
|\mathbb{I}(\widehat{D}_r>0)-\mathbb{I}(D_r>0)|\leq \mathbb{I}\Bigl(|D_r| \leq  C_1  \sqrt{\log N/N}+C_2  \sqrt{\log M/M}\Bigr).
$$
Let $Z_r=|\mathbb{I}(\widehat{D}_r>0)-\mathbb{I}(D_r>0)|$, then
$Z_r=O_P\Biggl(\mathbb{P}\Bigl(|D_r| \leq C_1  \sqrt{\log N/N}+C_2  \sqrt{\log M/M}\Bigr)\Biggr)$. 
By definition of boundedness of probability, we want to find $a_{N,M}$ such that for some large $A>0$ and small $\varepsilon>0$,
$\mathbb{P}(Z_r > A\cdot a_{N,M})\leq \varepsilon$. We use Markov inequality fo find such $a_{N,M}$ as follows
\begin{align*}
    \mathbb{P}(Z_r > A \cdot a_{N,M})\leq \dfrac{\mathbb{E}[Z_r]}{A \cdot a_{N,M}} = 
    \dfrac{\mathbb{P}(Z_r=1)}{A \cdot a_{N,M}}\leq \dfrac{\mathbb{P}\Bigl(|D_r| \leq C_1  \sqrt{\log N/N}+C_2  \sqrt{\log M/M}\Bigr)}{A \cdot a_{N,M}}
\end{align*}
By setting $a_{N,M}=\mathbb{P}\Bigl(|D_r| \leq C_1  \sqrt{\log N/N}+C_2  \sqrt{\log M/M}\Bigr)$ and $\varepsilon=1/A$, we get the desired result.
Setting $\varphi_{N,M} =  n^\xi\Bigl(C_1 \sqrt{\log N/N}+C_2 \sqrt{\log M/M}\Bigr)$,
$$
\mathbb{P}\Bigl(|D_r| \leq \varphi_{N,M}\Bigr) = \int_{-\varphi_{N,M}}^{\varphi_{N,M}} f_{D}(u)\text{d} u\leq C_{\max} \int_{-\varphi_{N,M}}^{\varphi_{N,M}}\text{d} u = \tilde{C}\cdot n^\xi\cdot\Bigl(\sqrt{\log N/N}+ \sqrt{\log M/M}\Bigr) 
$$
where  $C_{\max} = \max_{u\in [-\varphi_{N,M}, \varphi_{N,M}]} f_D(u)$ and $\tilde{C} = C_{\max} \cdot \max{\{C_1,C_2\}} $. 
This shows that 
$$
\mathbb{I}(\widehat{D}_r>0)-\mathbb{I}(D_r>0) = O_P\left(n^\xi\Bigl(\sqrt{\log N/N} + \sqrt{\log M/M} \Bigr)\right)
$$
Since $\sum_r w_r(\theta) = 1$ and
$w_r(\theta)\leq 1$,
$$
\sum_{r=1}^N w_r(\theta)[\mathbb{I}(\widehat{D}_r>0)-\mathbb{I}(D_r>0)] = O_P\left(n^\xi\Bigl(\sqrt{\log N/N} + \sqrt{\log M/M} \Bigr)\right).
$$

By standard kernel arguments,
\begin{align*}
\hat p (\theta) &=
p (\theta) + 
O_P(h^\beta + (Nh^d)^{-1/2}) + O_P(\sqrt{\log N/N}) + O_P(\sqrt{\log M/M}) +\\
&\qquad + O_P\left(n^\xi\Bigl(\sqrt{\log N/N} + \sqrt{\log M/M} \Bigr)\right)
\end{align*}
Moreover, this bound is uniform in $\theta$.
So 
$$
\hat p = \max_j \hat p(\theta_j) \leq 
\sup_\theta p(\theta) + 
O_P(h^\beta + (Nh^d)^{-1/2})  + 
O_P\left(\left(1+n^\xi\right)\Bigl(\sqrt{\log N/N} + \sqrt{\log M/M} \Bigr)\right)
$$
and the result follows.
Note the optimal kernel bandwidth is obtained by setting 
$$
h^\beta\asymp \Bigl(\dfrac{1}{Nh^d}\Bigr)^{1/2} \iff h\asymp N^{-1/(d+2\beta)}
$$
which leads to the final bound.

\paragraph{Proof of Lemma~\ref{lemma::Lip_Wass}.}
We have
\begin{align*}
|\min_j W(P_M(\theta_j),Q) - \inf_\theta W(P_\theta,Q)| &\leq
|\min_j W(P_M(\theta_j),Q) - \min_j W(P_{\theta_j},Q)|\\
&\ \ \ +
|\min_j W(P_{\theta_j},Q) - \inf_\theta W(P_\theta,Q)|.
\end{align*}

For the first term,
\begin{align*}
P(|\min_j W(P_M(\theta_j),Q) - \min_j W(P_{\theta_j},Q)|> \epsilon) & \leq P(\max_j |W(P_M(\theta_j),Q) - W(P_{\theta_j},Q)|> \epsilon)\\
& \leq \sum_j P( |W(P_M(\theta_j),Q) - W(P_{\theta_j},Q)|> \epsilon)\\
&\leq N e^{- c M \epsilon^2} \leq M e^{- c M \epsilon^2} 
\end{align*}

from Theorem 2 of \cite{fournier2015rate},
where we have assumed that the dimension of $Y$ is less than 4.
This implies that
the first term $O_P(\sqrt{\log N/N})$.
A similar argument applies when dimension is greater than or equal to 4
with a slight change in the form of the exponential term.

By the Lipschitz condition,
the second term is $O(\delta)$ where
$\delta = \sup_{\theta\in\Theta}\min_j ||\theta - \theta_j||$
and
$\delta = O_P(\sqrt{\log N/N})$
since $\Theta$ is compact and
$\pi$ is strictly positive. 

As a side note, since $N\leq M$, we also have 
$$P(|\min_j W(P_M(\theta_j),Q) - \min_j W(P_{\theta_j},Q)|> \epsilon)\leq N e^{- c M \epsilon^2} \leq N e^{- c N \epsilon^2}$$
This implies that both terms  appearing at the start of this proof are of order $O_P(\sqrt{\log N / N})$, yielding a sharper bound than the one stated in the current version of the lemma. Nevertheless, we retain both terms for the sake of generality.

\paragraph{Proof of Theorem~\ref{thm::6}.}
Let $k$ denote the dimension of $Y$.
By Theorem 2 of
\cite{fournier2015rate},
there are constants $c$ and $C$ such that
$$
P_\theta(W(P_M^*(\theta),P_\theta) > \epsilon) \leq
\begin{cases}
C e^{-cM\epsilon^2} & \mathrm{if}\ 4 > k\\
C e^{- cM (\epsilon /\log(2 + 1/\epsilon))^2} & \mathrm{if}\ 4=k\\
C e^{-cM\epsilon^{q/2}} & \mathrm{if}\ 4 <k.
\end{cases}
$$
To avoid repetition, we'll assume that $4>k$
but the other cases are similar.
Let
$\mathcal{C} = \{ \theta_1,\ldots, \theta_R\}$
be a $\epsilon/4$ covering set of $\Theta$.
Thus, for each $\theta$ there is a $\theta_j \in \mathcal{C} $ such that
$||\theta-\theta_j|| \leq \epsilon/4$.
Note that $R \leq (c_1/\epsilon)^k$ for some $c_1$.
Let
$\mathbb{P} = \prod_{j=1}^N P_{\theta_j}^M$
denote the product measure.
By the above exponential inequality above
and the Lipschitz property,
\begin{align*}
\mathbb{P}(\sup_\theta |W(P_M^*(\theta),P_n)-W(P_\theta,P_n)| > \epsilon) & \leq
\mathbb{P}(\max_{\theta\in \in \mathcal{C}} |W(P_M^*(\theta),P_n)-W(P_\theta,P_n)| +\epsilon/2 > \epsilon)\\
\leq \left(\frac{\tilde{c}}{\epsilon}\right)^{k} C e^{-cM\epsilon^2}.
\end{align*}
To show the above inequality for some $\theta$ and $ \theta_c\in \Theta$, we focus on
\begin{align*}
    |W(P_M(\theta), P_n)-W(P(\theta), P_n))|
    &\leq 
    \underset{(i)}{
    |W(P_M(\theta), P_n)-W(P_M(\theta_c),P_n)|}
    +
    \underset{(ii)}{|W(P(\theta), P_n))-W(P(\theta_c), P_n))|}
    \\
    &\quad +\underset{(iii)}{|W(P_M(\theta_c), P_n))-W(P(\theta_c), P_n))|}\\
    & \leq |W(P_M(\theta_c), P_n))-W(P(\theta_c), P_n))| + \epsilon/2
\end{align*}
where we used triangle inequality and $(i)\leq L||\theta -\theta_c||\leq \epsilon/4$, assuming $L=1$, alternatively we construct $\epsilon/4L$ covering sets for the inequality to hold, and $(ii)\leq \epsilon/4$ similarly. By taking the supremum of the LHS and the maximum over the covering set on the RHS, then using the union bound, we get the result
\begin{align*}
    \mathbb{P}(\sup_\theta |W(P_M^*(\theta),P_n)-W(P_\theta,P_n)| > \epsilon) &\leq
\mathbb{P}(\max_{\theta\in \mathcal{C}} |W(P_M^*(\theta),P_n)-W(P_\theta,P_n)|> \epsilon/2)\\
    &\leq \sum_{\tilde{\theta}\in\mathcal{C}} \mathbb{P}(|W(P_M^*(\tilde\theta),P_n)-W(P_{\tilde\theta},P_n)|>\epsilon/2)\\
    &\leq |\mathcal{C}| \ \mathbb{P}(W(P_M^*(\theta), P_\theta) >\epsilon/2) 
\end{align*}
Where $|\mathcal{C}|\leq (\frac{\tilde{c}}{\epsilon})^k$ is the covering number, for some constant $\tilde{c}>0$, and for any $P_M(\theta),P(\theta), P_n$ by non-negativity of the distance $W(\cdot,\cdot)$  and the triangle inequality 
\begin{align*}
W(P_M(\theta),P_\theta) \geq |W(P_M(\theta),P_n)-
W(P_\theta,P_n) | 
\end{align*}
From Theorem 2 of \cite{fournier2015rate} with $k<4$, it follows:
$$
\mathbb{P}(\sup_\theta |W(P_M^*(\theta),P_n)-W(P_\theta,P_n)| > \epsilon)
\leq
\Bigl(\ \dfrac{\tilde{c}}{\epsilon}\ \Bigr)^k C e^{-cM\epsilon^2} 
$$
Setting $\epsilon = \sqrt{
\log M/cM}$ we conclude that
$\sup_\theta |W(P_M^*(\theta),P_n)-W(P_\theta,P_n)| = O_P(\sqrt{\log M/M})$.
This comes by setting 
the right hand side to some $\delta>0$
then analyzing the asymptotic behavior (up to a constant) of both sides of the derived equation $k\log (1/\epsilon)\approx -cM\epsilon^2$. 
A similar argument shows that
$\sup_\theta |W(P_M^*(\theta),P_n(\theta))-W(P_\theta,P_n(\theta))| = O_P(\sqrt{\log M/M})$.

\section{Confidence Sets}

\subsection{Asymptotic confidence sets for the Projection Estimator Under Regularity}

Since 
the projection estimator
$\hat\theta$
is the minimizer of the distance,
it is an $m$-estimator
so, if the standard regularity conditions hold,
then
one can use the usual
asymptotic confidence set
$$
C = \Bigl\{ \theta:\ n(\theta-\hat\theta)^T \hat\Sigma^{-1} (\theta-\hat\theta)\leq 
\chi^2_{\alpha,d}\Bigr\}
$$
where 
$\Sigma = G^{-1} M (G^{-1})^\top$,
$G = \E[\nabla U(Y,\theta^*)]+\E[\nabla V(Y(\theta^*),\theta^*)]$ and 
$M=-\left(\E[U(Y,\theta^*)]+\E[V(Y(\theta^*),\theta^*)]\right)$.
This approach might be problematic in SBI for two reasons.
First, we may not have access to 
$\nabla U(Y,\theta^*)$.
The HulC (\cite{kuchibhotla_hulc_2024}) provides a solution in that case.
The data are split into $B$ batches, with 
$B = \lceil \log(2/\alpha)/\log(2)\rceil$, and
estimates $\hat \theta_1,\ldots,\hat \theta_B$ are obtained 
from each batch.
If the median bias of $\hat\theta$
tends to zero 
(which holds under the usual regularity conditions)
then $[\min_j \hat\theta_j(r),\max_j \hat\theta_j(r)]$
is a $1-\alpha$ confidence set for $\theta(r)$, 
the $r$-th component $\theta$. 
The bootstrap provides an alternative.
The usual bootstrap requires obtaining estimates of $\theta^*$ in each of many 
bootstrap samples, which is undesirable in SBI
since computations can be expensive.
Instead, the {\em cheap bootstrap}
\citep{lam_cheap_2023} allows us to construct confidence intervals
using only $b$
bootstrap samples, with $b$ small.
The interval for a parameter $\psi$ is then
$$
C = \Bigl[\hat\psi-t_{b,\alpha/2}|\hat\psi-\psi^*|,\ 
\hat\psi+t_{b,\alpha/2}|\hat\psi-\psi^*|\Bigr]
$$
where
$\hat\psi$ is the original estimator,
$\hat\psi^*$ is the estimator
constructed from bootstrap sample 
and $t_{b,\alpha/2}$
is the upper $\alpha/2$ quantile
of a $t$ distribution with $b$ degrees of freedom. 
Note that too small a $b$ produce conservative confidence sets (larger width), 
while $b$ large defies the purpose of using the cheap bootstrap.
In our applications we found that $b$ as low as 
$5$ produce satisfactory results; see Appendix Fig.~\ref{fig:cheap_bootstrap_CS}
for more details.

\subsection{Asymptotic confidence sets for Kernel distance estimator}

We present the asymptotic confidence sets for the projection parameters obtained by minimizing the MMD, as used in the applications in Section~\ref{sec::applications_iid}.

\paragraph{Gaussian location.} Let $Y_1,\dots,Y_m\sim\mathcal{N}(\theta^*, \sigma^2I_{d\times d})$ and $Y_1^*,\dots,Y_m^*\sim\mathcal{N}(\theta, \sigma^2I_{d\times d})$.
Under the assumptions of Theorem 2 in \cite{briol_statistical_2019} we have
$$\sqrt{(n\wedge m)}(\hat{\theta}_{n,m}-\theta^*)\xrightarrow{d} \mathcal{N}(0, C_\lambda)$$
where 
$\hat{\theta}_{n,m}$ is the minimizer of Eq.~(\ref{eq_mmd}) using a Gaussian kernel with bandwidth $h^2$, $C_\lambda=\frac{1}{\lambda(1-\lambda)}C$, with $\lambda=\frac{n}{n+m}$ and 
$$C=\sigma^2((h^2+\sigma^2)(3\sigma^2+h^2))^{-\frac{d}{2}-1}(h^2+2\sigma^2)^{d+2}$$
The CI in the 1d case is
$$\hat\theta_{n,m}\pm z_{\frac{\alpha}{2}}\sqrt{\frac{C}{(n\wedge m)}}$$

\paragraph{Gaussian scale.} 
Let $Y_1,\dots,Y_m\sim\mathcal{N}(\mu, \theta_*^2 I_{d\times d})$ and $Y_1^*,\dots,Y_m^*\sim\mathcal{N}(\mu, \theta^2 I_{d\times d})$.
From the proof of proposition 7 (Appendix C.6 in \cite{briol_statistical_2019}) the asymptotic variance of the CLT corresponds to: $$C=\dfrac{(h^2+2s)^2\left(((h^2+s)^{-\frac{d}{2}-2}(h^2+2s)^{d+2}(h^2+3s)^{-\frac{d}{2}-2}\left((h^2+2s)^2+2\frac{s^2}{d}\right)-1\right)}{(d+2)^2s^2}
$$
where $s=2\theta^*$. 
We estimate $\hat{C}$ by plugging in $\hat\theta_{n,m}$ in place of $\theta^*$ and build the
the CI in the 1d case as done in the Gaussian location case.

\subsection{Confidence sets using the cheap bootstrap}
\label{section::cheap}

The cheap bootstrap procedure proposed by \citep{lam_cheap_2023} allows us to derive confidence sets with desired theoretical guarantees with limited computational efforts. This is particularly valuable for the discrepancy-based exponential tilt approach 
(Section \ref{sec::exponential_tilt}),
which requires minimizing the loss function for all $\theta$ over a grid to estimate the exponential tilt model parameters, i.e. $\left(\hat{\theta}, \hat{\beta}(\hat\theta)\right)$. 
In that section we computed relative fit confidence sets for $\theta^*$
using the profile likelihood.
To compute confidence sets for both the model and tilting parameters, multiple repetitions of the minimization procedure would be necessary. We limit the number of bootstrap iterations using the cheap bootstrap approach described in Algorithm~\ref{alg::cheap_bootstrap}. 
Fig.~\ref{fig:cheap_bootstrap_CS} presents confidence sets ($B=15$ bootstrap iterations) for the exponential tilt parameters in the settings of Fig.~\ref{fig:exponential_tilt_normal_profile_likelihood} when minimizing the L2 loss.
These sets are informative and cover the true parameter values (red line).
In a separate work, we  investigated  how confidence sets coverage and width vary as a function of bootstrap iterations, for inference on the variance of a folded standard normal distribution (i.e., $|\mathcal{N}(0,1)|$, true variance $\theta^*=1-2/\pi$).
We achieved relatively short confidence sets with as few as 5 bootstrap iterations, while maintaining coverage at or above the nominal level. 

\RestyleAlgo{ruled}
\begin{algorithm}[hbt!]
\caption{Confidence sets construction via the cheap bootstrap approach. \vspace{0.05cm}}\label{alg::cheap_bootstrap}
\SetAlgoLined
\SetKwInOut{Input}{Input}
\SetKwInOut{Output}{Output}
\Input{
\begin{itemize}
    \item[] observed data  ${\cal Y}=Y_1, \dots, Y_n\sim p$\vspace{-.075in}
    \item[] parameter values $\theta_1,\ldots,\theta_N \sim \pi$; \vspace{-.075in}
    \item[] initial values $\beta_1^{init},\dots, \beta_N^{init}$ obtained using Algorithm \ref{alg:tilt};\vspace{-.075in}
    \item[] loss tolerance $\epsilon$; max number of iterations $\iota_X$; learning rate $\delta$;\vspace{-.075in}
    \item[] number of bootstrap iterations $B$;\vspace{-.075in}
\end{itemize}
}
\Output{Confidence sets for $(\theta, \beta(\theta))$.\vspace{0.05cm}}
\For{$b=1,\dots,B$}{
  {\bf sample} $\mathcal{Y}^{\,b}=Y_{i_1},\ldots, Y_{i_n}$ where $i_1,\dots,i_n$ is a permutation of the index set;\\ 
  {\bf simulate} $\mathcal{X}^{\,b}=X_1, \dots,X_m\sim g$ \\
  {\bf estimate} the density ratio $\hat{r}_\theta^{\,b}:\mathbb{R}\mapsto \mathbb{R}$ from $\mathcal{Y}^{\,b}$ and $\mathcal{X}^{\,b}$;\\
  {\bf estimate} $$(\widehat{\theta}^{\,b},\widehat{\beta}^{\,b}(\widehat{\theta}^{\,b}))=\argmin_j S_n(\theta_j^{\,b}, \widehat{\beta}^{\,b}(\theta_j^{\,b}))$$ 
  \qquad\qquad\quad  
  e.g., via NR method in \ref{app::NR} using gradient and Hessian in \ref{sec::appx_mdpd} or \ref{sec::appx_hellinger} depending on $S_n$
}
{\bf compute} \quad $\bar{\theta}^{\,B}=\frac{1}{b}\sum_{b=1}^B \widehat{\theta}^{\,b}$, $S_{\bar{\theta}^{\,B}}=\sqrt{\frac{1}{b}\sum_{b=1}^B \left(\widehat{\theta}^{\,b}-\bar{\theta}^{\,B}\right)^2}$ and similarly for each tilting parameter;\\
{\bf return} $$\bar{\theta}^{\,B}\pm q_{\alpha/2, t_B}\cdot S_{\bar{\theta}^{\,B}}\,,\quad \bar{\beta}_i^{\,B}\pm q_{\alpha/2, t_B}\cdot S_{\bar{\beta}_i^{\,B}}\ \text{ for}\ i=1,\dots,k$$
\qquad\qquad where $q_{\alpha/2, t_B}$ is the $1-\alpha/2$ quantile of the $t$ distribution with $B$ degrees of freedom.
\vspace{.1cm}
\end{algorithm}

\begin{figure}[t!]
    \centering
    \includegraphics[width=7.75cm,height=3cm]{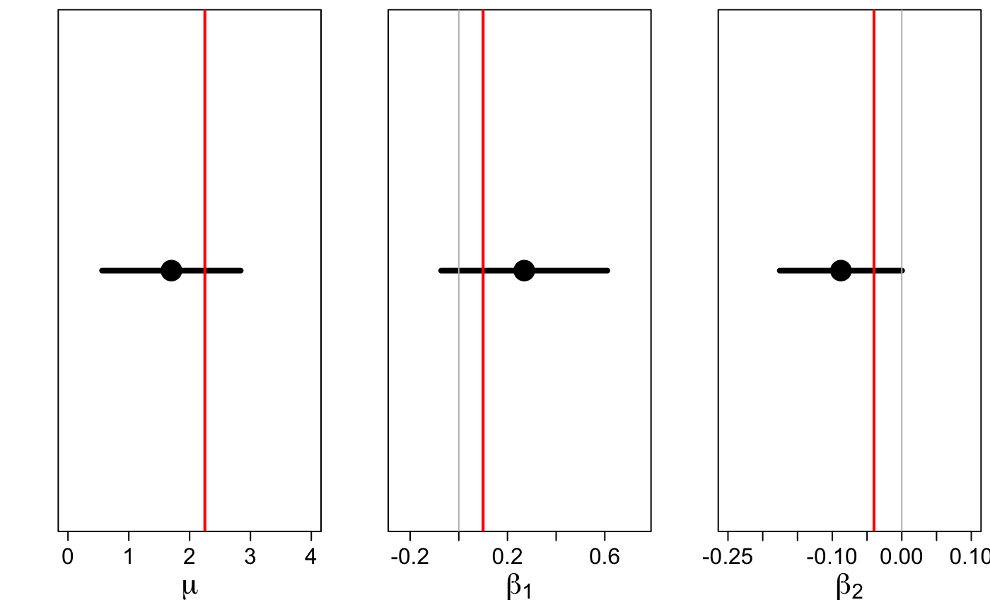}
    \caption{\sl 
     {\bf Cheap bootstrap} confidence sets for the parameters of an exponentially tilted normal distribution (with unknown location) when minimizing the $L_2$ loss. Same settings as in Fig.~\ref{fig:exponential_tilt_normal_profile_likelihood}.
}
    \label{fig:cheap_bootstrap_CS}
\end{figure}

\section{Exponential tilt - Gradient and Hessian}\label{app::grad_hess}

We derive the analytical gradient and Hessian for the discrepancies discussed in Section~\ref{sec::exponential_tilt}. Note that for the results in Section~\ref{sec::applications_iid}, we used gradients and Hessians estimated via optimization algorithms.

\subsection{Efficient MDPD loss}\label{sec::appx_mdpd}
\begin{align}
    S_n\left(\theta, \beta\right) &=  
    \dfrac{1+\gamma}{m} \sum_i   \widehat{r}_\theta^{\gamma}(Y^*_i)g^{\gamma}(Y^*_i)\Biggl(\dfrac{e^{\,\beta^\top b(Y_i^*)}}{c(\theta, \beta)}\Biggr)^\gamma 
    -   \dfrac{1+\gamma}{m} \sum_i \widehat{r}_\theta^{\gamma-1}(Y^*_i)\widehat{r}(Y^*_i)g^{\gamma}(Y^*_i)\Biggl(\dfrac{e^{\,\beta^\top b(Y^*_i)}}{c(\theta, \beta)}\Biggr)^{\gamma-1}\nonumber\\
	&- \left(1+\dfrac{1}{\gamma}\right)\dfrac{1}{n} \sum_i \widehat{r}_\theta^\gamma (Y_i) g(Y_i)^\gamma \Biggl(\dfrac{e^{\,\beta^\top b(Y_i)}}{c(\theta, \beta)}\Biggr)^{\gamma} 
 - \dfrac{\gamma }{\tilde{m}} \sum_i \widehat{r}^{1+\gamma}_\theta(\tilde{Y}_i) g^\gamma(\tilde{Y}_i)\Biggl(\dfrac{e^{\,\beta^\top b(\tilde{Y}_i)}}{c(\theta, \beta)}\Biggr)^{1+\gamma}\nonumber\\
     &+ \dfrac{1+\gamma}{\tilde{m}} \sum_i 
     \widehat{r}(\tilde{Y}_i)\widehat{r}^{\gamma}_\theta(\tilde{Y}_i) g^\gamma(\tilde{Y}_i) \Biggl(\dfrac{e^{\,\beta^\top b(\tilde{Y}_i)}}{c(\theta, \beta)}\Biggr)^{\gamma}
\end{align}

The gradient is
\begin{align}
    \nabla_\beta S_n\left(\theta, \beta\right) &=  \frac{1+\gamma}{m}\sum_i \hat r_\theta^\gamma (Y_i^*) g^\gamma (Y_i^*) \dfrac{\gamma e^{\,\gamma\beta^\top b(Y_i^*)}}{c(\theta, \beta)^{\gamma}}\left(\dfrac{ b(Y_i^*)c(\theta, \beta)- \nabla_\beta c(\theta, \beta)}{c(\theta, \beta)}\right) \nonumber\\ 
    &- \dfrac{1+\gamma}{m} \sum_i \widehat{r}_\theta^{\gamma-1}(Y^*_i)\widehat{r}(Y^*_i)g^{\gamma}(Y^*_i)
    \dfrac{(\gamma-1)e^{(\gamma-1)\,\beta^\top b(Y^*_i)}}{c(\theta, \beta)^{\gamma-1}}\left(\dfrac{ b(Y_i^*)c(\theta, \beta)- \nabla_\beta c(\theta, \beta)}{c(\theta, \beta)}\right)\nonumber\\
    &- \left(1+\dfrac{1}{\gamma}\right)\dfrac{1}{n} \sum_i \widehat{r}_\theta^\gamma (Y_i) g(Y_i)^\gamma 
    \dfrac{\gamma e^{\,\gamma\beta^\top b(Y_i)}}{c(\theta, \beta)^{\gamma}} \left(\dfrac{ b(Y_i)c(\theta, \beta)- \nabla_\beta c(\theta, \beta)}{c(\theta, \beta)}\right) \nonumber\\
    &- \dfrac{\gamma }{\tilde{m}} \sum_i \widehat{r}^{1+\gamma}_\theta(\tilde{Y}_i) g^\gamma(\tilde{Y}_i)\dfrac{(1+\gamma)e^{\,(1+\gamma)\beta^\top b(\tilde{Y}_i)}}{c(\theta, \beta)^{1+\gamma}}\left(\dfrac{ b(\tilde{Y}_i)c(\theta, \beta)- \nabla_\beta c(\theta, \beta)}{c(\theta, \beta)}\right) \nonumber\\
    &+ \dfrac{1+\gamma}{\tilde{m}} \sum_i 
    \widehat{r}(\tilde{Y}_i)\widehat{r}^{\gamma}_\theta(\tilde{Y}_i) g^\gamma(\tilde{Y}_i) \dfrac{\gamma e^{\,\gamma\beta^\top b(\tilde{Y}_i)}}{c(\theta, \beta)^\gamma} \left(\dfrac{ b(\tilde{Y}_i)c(\theta, \beta)- \nabla_\beta c(\theta, \beta)}{c(\theta, \beta)}\right)
\label{eq::grad_MDPD_exp_tilt}
\end{align}
while the Hessian has $rs$-th entry  
\begin{align}
    \dfrac{\partial^2 S_n\left(\theta, \beta\right)}{\partial \beta_r\partial \beta_s} &=
     \frac{\gamma (1+\gamma)}{m}\sum_i \hat r_\theta^\gamma (Y_i^*) g^\gamma (Y_i^*) \dfrac{e^{\,\gamma\beta^\top b(Y_i^*)}}{c(\theta, \beta)^{\gamma+2}}\Biggl[
     \nonumber\\
    &\qquad\dfrac{\gamma e^{\,\gamma\beta^\top b(Y_i^*)}}{c(\theta, \beta)^{\gamma}}\Bigl(b(Y_i^*)c(\theta, \widehat\beta) - 
    \dfrac{\partial c(\theta, \widehat\beta)}{\partial \beta_r }\Bigr) \Bigl( b(Y_i^*)c(\theta, \widehat\beta) - 
    \dfrac{\partial c(\theta, \widehat\beta)}{\partial \beta_s }\Bigr)+\Biggl(
    \dfrac{\partial c(\theta, \widehat\beta)}{\partial \beta_r }\dfrac{\partial c(\theta, \widehat\beta)}{\partial \beta_s }-\dfrac{\partial^2 c(\theta, \widehat\beta)}{\partial \beta_r \partial \beta_s }
    \Biggr)
    \Biggr]\nonumber\\ 
    &- \dfrac{(1+\gamma)(\gamma-1)}{m} \sum_i \widehat{r}_\theta^{\gamma-1}(Y^*_i)\widehat{r}(Y^*_i)g^{\gamma}(Y^*_i)\dfrac{e^{(\gamma-1)\,\beta^\top b(Y^*_i)}}{c(\theta, \beta)^{\gamma+1}}\Biggl[
     \nonumber\\
&\qquad\dfrac{(\gamma-1)e^{(\gamma-1)\,\beta^\top b(Y^*_i)}}{c(\theta, \beta)^{\gamma-1}}\Bigl(b(Y_i^*)c(\theta, \widehat\beta) - 
    \dfrac{\partial c(\theta, \widehat\beta)}{\partial \beta_r }\Bigr) \Bigl( b(Y_i^*)c(\theta, \widehat\beta) - 
    \dfrac{\partial c(\theta, \widehat\beta)}{\partial \beta_s }\Bigr)\nonumber\\
    &\quad
    +\Biggl(
    \dfrac{\partial c(\theta, \widehat\beta)}{\partial \beta_r }\dfrac{\partial c(\theta, \widehat\beta)}{\partial \beta_s }-\dfrac{\partial^2 c(\theta, \widehat\beta)}{\partial \beta_r \partial \beta_s }
    \Biggr)
    \Biggr]\nonumber\\
    &- \gamma\left(1+\dfrac{1}{\gamma}\right)\dfrac{1}{n} \sum_i \widehat{r}_\theta^\gamma (Y_i) g(Y_i)^\gamma 
    \dfrac{ e^{\,\gamma\beta^\top b(Y_i)}}{c(\theta, \beta)^{\gamma+2}} \Biggl[
     \nonumber\\
&\quad
\dfrac{\gamma e^{\,\gamma\beta^\top b(Y_i)}}{c(\theta, \beta)^{\gamma}}\Bigl(b(Y_i)c(\theta, \widehat\beta) - 
    \dfrac{\partial c(\theta, \widehat\beta)}{\partial \beta_r }\Bigr) \Bigl( b(Y_i)c(\theta, \widehat\beta) - 
    \dfrac{\partial c(\theta, \widehat\beta)}{\partial \beta_s }\Bigr)
    +\Biggl(
    \dfrac{\partial c(\theta, \widehat\beta)}{\partial \beta_r }\dfrac{\partial c(\theta, \widehat\beta)}{\partial \beta_s }-\dfrac{\partial^2 c(\theta, \widehat\beta)}{\partial \beta_r \partial \beta_s }
    \Biggr)
    \Biggr]\nonumber\\ 
    &- \dfrac{\gamma(1+\gamma)}{\tilde{m}} \sum_i \widehat{r}^{1+\gamma}_\theta(\tilde{Y}_i) g^\gamma(\tilde{Y}_i)
    \dfrac{e^{\,(1+\gamma)\beta^\top b(\tilde{Y}_i)}}{c(\theta, \beta)^{3+\gamma}}
    \Biggl[
     \nonumber\\
&\quad\dfrac{(1+\gamma) e^{\,(1+\gamma)\beta^\top b(\tilde{Y}_i)}}{c(\theta, \beta)^{1+\gamma}}\Bigl(b(\tilde{Y}_i)c(\theta, \widehat\beta) - 
    \dfrac{\partial c(\theta, \widehat\beta)}{\partial \beta_r }\Bigr) \Bigl( b(\tilde{Y}_i)c(\theta, \widehat\beta) - 
    \dfrac{\partial c(\theta, \widehat\beta)}{\partial \beta_s }\Bigr)\nonumber\\
    &\quad
    +\Biggl(
    \dfrac{\partial c(\theta, \widehat\beta)}{\partial \beta_r }\dfrac{\partial c(\theta, \widehat\beta)}{\partial \beta_s }-\dfrac{\partial^2 c(\theta, \widehat\beta)}{\partial \beta_r \partial \beta_s }
    \Biggr)
    \Biggr]\nonumber\\
    &+ \dfrac{\gamma (1+\gamma)}{\tilde{m}} \sum_i 
    \widehat{r}(\tilde{Y}_i)\widehat{r}^{\gamma}_\theta(\tilde{Y}_i) g^\gamma(\tilde{Y}_i) \dfrac{e^{\,\gamma\beta^\top b(\tilde{Y}_i)}}{c(\theta, \beta)^{2+\gamma}} 
    \Biggl[
     \nonumber\\
&\quad\dfrac{\gamma e^{\,\gamma\beta^\top b(\tilde{Y}_i)}}{c(\theta, \beta)^\gamma}\Bigl(b(\tilde{Y}_i)c(\theta, \widehat\beta) - 
    \dfrac{\partial c(\theta, \widehat\beta)}{\partial \beta_r }\Bigr) \Bigl( b(\tilde{Y}_i)c(\theta, \widehat\beta) - 
    \dfrac{\partial c(\theta, \widehat\beta)}{\partial \beta_s }\Bigr) +\Biggl(
    \dfrac{\partial c(\theta, \widehat\beta)}{\partial \beta_r }\dfrac{\partial c(\theta, \widehat\beta)}{\partial \beta_s }-\dfrac{\partial^2 c(\theta, \widehat\beta)}{\partial \beta_r \partial \beta_s }
    \Biggr)
    \Biggr]
\label{eq::hessian_MDPD_exp_tilt}
\end{align}
Since, for fixed $\theta$,
$p_{\theta,\beta}$
is an exponential family,
we can use properties of the cumulant function 
to estimate the partial first and second derivatives in~(\ref{eq::grad_MDPD_exp_tilt}) and (\ref{eq::hessian_MDPD_exp_tilt}).
Specifically, let $X_1,\dots,X_n\sim p_{\theta, \beta}$, we have 
$\nabla_\beta c(\theta, \beta)=\mathbb{E}[b(X)]$ with estimator $\widehat{\frac{\partial c(\theta, \beta)}{\partial\beta_r}}=\dfrac{1}{n}\sum_i b_r(X_i)=:\hat\mu_r$. The estimate for second-order partial derivatives is $\widehat{\frac{\partial^2c(\theta, \beta)}{\partial\beta_r\partial\beta_s}}=\dfrac{1}{n}\sum_{i}(b_r(X_i)-\hat\mu_r)(b_s(X_i)-\hat\mu_s)=:\hat\sigma^2_{rs}$.
NR method (algorithm in \ref{app::NR}) is used to estimate $\beta$ via minimization of
$S_n(\theta,\beta)$.

\subsection{Hellinger loss}\label{sec::appx_hellinger}
$$
S_n(\theta, \beta) = 
\frac{1}{2n}\sum_i \sqrt{\dfrac{\widehat{r}_\theta(Y_i)}{\widehat{r}(Y_i)}} \dfrac{e^{\frac{1}{2} \widehat\beta^T b(Y_i)}}{c(\theta,\widehat\beta)^\frac{1}{2}}
+ \frac{1}{2m}\sum_i \sqrt{\dfrac{\widehat{r}(Y_i^*)}{\widehat{r}_\theta(Y_i^*)}}  e^{-\frac{1}{2}\widehat\beta^T b(Y_i^*)}c(\theta,\widehat\beta)^\frac{1}{2}
$$

The gradient corresponds to
\begin{align}
\nabla_\beta S_n(\theta, \beta) &= 
\frac{1}{2n}\sum_i \sqrt{\dfrac{\widehat{r}_\theta(Y_i)}{\widehat{r}(Y_i)}} \dfrac{e^{\frac{1}{2} \widehat\beta^T b(Y_i)}}{2 c(\theta,\widehat\beta)^\frac{1}{2}}\left(\dfrac{ b(Y_i)c(\theta, \widehat\beta)- \nabla_\beta c(\theta, \widehat\beta)}{c(\theta, \widehat\beta)}\right)\nonumber\\
&+ \frac{1}{2m}\sum_i \sqrt{\dfrac{\widehat{r}(Y_i^*)}{\widehat{r}_\theta(Y_i^*)}}  \dfrac{e^{-\frac{1}{2}\widehat\beta^T b(Y_i^*)}}{2} \left(\dfrac{\nabla_\beta c(\theta, \widehat\beta)- b(Y_i^*)c(\theta, \widehat\beta)}{c(\theta,\widehat\beta)^\frac{1}{2}}\right)\label{eq::gradient_hellinger_exp_tilt}
\end{align}

while the Hessian has $rs$-th entry  
\begin{align}
    \dfrac{\partial^2 S_n\left(\theta, \beta\right)}{\partial \beta_r\partial \beta_s} &=
    \frac{1}{2n}\sum_i \sqrt{\dfrac{\widehat{r}_\theta(Y_i)}{\widehat{r}(Y_i)}} 
    \Biggl[
    \dfrac{e^{\frac{1}{2} \widehat\beta^T b(Y_i)}}{2 c(\theta,\widehat\beta)^\frac{1}{2}}\dfrac{ \Bigr(b(Y_i)c(\theta, \widehat\beta) - 
    \dfrac{\partial c(\theta, \widehat\beta)}{\partial \beta_r }\Bigr)}{c(\theta, \widehat\beta)}\times\dfrac{e^{\frac{1}{2} \widehat\beta^T b(Y_i)}}{2 c(\theta,\widehat\beta)^\frac{1}{2}}\dfrac{\Bigl( b(Y_i)c(\theta, \widehat\beta) - 
    \dfrac{\partial c(\theta, \widehat\beta)}{\partial \beta_s }\Bigr)}{c(\theta, \widehat\beta)}\nonumber\\ 
    &\qquad\qquad\qquad\qquad\ + \dfrac{e^{\frac{1}{2} \widehat\beta^T b(Y_i)}}{2 c(\theta,\widehat\beta)^\frac{1}{2}}\Biggl(
    \dfrac{\dfrac{\partial c(\theta, \widehat\beta)}{\partial \beta_r }\dfrac{\partial c(\theta, \widehat\beta)}{\partial \beta_s }-\dfrac{\partial^2 c(\theta, \widehat\beta)}{\partial \beta_r \partial \beta_s }}{c(\theta, \widehat{\beta})^2}
    \Biggr)
    \Biggr]\nonumber\\
    &+ \frac{1}{2m}\sum_i \sqrt{\dfrac{\widehat{r}(Y_i^*)}{\widehat{r}_\theta(Y_i^*)}}  \Biggl[
    \dfrac{-\frac{1}{2}e^{-\frac{1}{2} \widehat\beta^T b(Y_i^*)}}{ c(\theta,\widehat\beta)^{-\frac{1}{2}}}\dfrac{ \Bigr(b(Y_i^*)c(\theta, \widehat\beta) - 
    \dfrac{\partial c(\theta, \widehat\beta)}{\partial \beta_r }\Bigr)}{c(\theta, \widehat\beta)}\times
    \dfrac{-\frac{1}{2}e^{-\frac{1}{2} \widehat\beta^T b(Y_i^*)}}{ c(\theta,\widehat\beta)^{-\frac{1}{2}}}
    \dfrac{\Bigl( b(Y_i^*)c(\theta, \widehat\beta) - 
    \dfrac{\partial c(\theta, \widehat\beta)}{\partial \beta_s }\Bigr)}{c(\theta, \widehat\beta)}\nonumber\\ 
    &\qquad\qquad\qquad\qquad\qquad + \dfrac{-\frac{1}{2}e^{-\frac{1}{2} \widehat\beta^T b(Y_i^*)}}{ c(\theta,\widehat\beta)^{-\frac{1}{2}}}\Biggl(
    \dfrac{\dfrac{\partial c(\theta, \widehat\beta)}{\partial \beta_r }\dfrac{\partial c(\theta, \widehat\beta)}{\partial \beta_s }-\dfrac{\partial^2 c(\theta, \widehat\beta)}{\partial \beta_r \partial \beta_s }}{c(\theta, \widehat{\beta})^2}
    \Biggr)
    \Biggr]\nonumber\\
    &=
    \frac{1}{4n}\sum_i \sqrt{\dfrac{\widehat{r}_\theta(Y_i)}{\widehat{r}(Y_i)}} \dfrac{e^{\frac{1}{2} \widehat\beta^T b(Y_i)}}{c(\theta,\widehat\beta)^\frac{5}{2}}
    \Biggl[
    \dfrac{e^{\frac{1}{2} \widehat\beta^T b(Y_i)}}{2 c(\theta,\widehat\beta)^\frac{1}{2}}
    \Bigl(b(Y_i)c(\theta, \widehat\beta) - 
    \dfrac{\partial c(\theta, \widehat\beta)}{\partial \beta_r }\Bigr) \Bigl(b(Y_i)c(\theta, \widehat\beta)-
    \dfrac{\partial c(\theta, \widehat\beta)}{\partial \beta_s }\Bigr)\nonumber \\
    &\qquad\qquad\qquad\qquad\qquad\qquad\qquad + \Biggl(
    \dfrac{\partial c(\theta, \widehat\beta)}{\partial \beta_r }\dfrac{\partial c(\theta, \widehat\beta)}{\partial \beta_s }-\dfrac{\partial^2 c(\theta, \widehat\beta)}{\partial \beta_r \partial \beta_s }
    \Biggr)
    \Biggr]\nonumber\\
    &- \frac{1}{4m}\sum_i \sqrt{\dfrac{\widehat{r}(Y_i^*)}{\widehat{r}_\theta(Y_i^*)}}  \dfrac{e^{-\frac{1}{2} \widehat\beta^T b(Y_i^*)}}{ c(\theta,\widehat\beta)^{\frac{3}{2}}} \Biggl[
    \dfrac{-\frac{1}{2}e^{-\frac{1}{2} \widehat\beta^T b(Y_i^*)}}{ c(\theta,\widehat\beta)^{-\frac{1}{2}}} \Bigl(b(Y_i^*)c(\theta, \widehat\beta) - 
    \dfrac{\partial c(\theta, \widehat\beta)}{\partial \beta_r }\Bigr) \Bigl( b(Y_i^*)c(\theta, \widehat\beta) - 
    \dfrac{\partial c(\theta, \widehat\beta)}{\partial \beta_s }\Bigr)\nonumber\\
    &\qquad\qquad\qquad\qquad\qquad\qquad\qquad\qquad 
    +\Biggl(
    \dfrac{\partial c(\theta, \widehat\beta)}{\partial \beta_r }\dfrac{\partial c(\theta, \widehat\beta)}{\partial \beta_s }-\dfrac{\partial^2 c(\theta, \widehat\beta)}{\partial \beta_r \partial \beta_s }
    \Biggr)
    \Biggr]\label{eq::hessian_hellinger_exp_tilt}
\end{align}
Similarly to the MDPD loss, 
we can use properties of the cumulant function 
to estimate the partial first and second derivatives in Eqs.~(\ref{eq::gradient_hellinger_exp_tilt}, \ref{eq::hessian_hellinger_exp_tilt}).
Specifically, let $X_1,\dots,X_n\sim p_{\theta, \beta}$, we have 
$\nabla_\beta c(\theta, \beta)=\mathbb{E}[b(X)]$ with estimator $\widehat{\frac{\partial c(\theta, \beta)}{\partial\beta_r}}=\dfrac{1}{n}\sum_i b_r(X_i)=:\hat\mu_r$. The estimate for second-order partial derivatives is $\widehat{\frac{\partial^2c(\theta, \beta)}{\partial\beta_r\partial\beta_s}}=\dfrac{1}{n}\sum_{i}(b_r(X_i)-\hat\mu_r)(b_s(X_i)-\hat\mu_s)=:\hat\sigma^2_{rs}$.

\section{Additional examples} \label{app::add}

\subsection{Ricker's Model}

Ricker's model
\citep{ricker1954stock, bortolato2025box}
describes the evolution of a population over time. The observed members of the population at a time $t$ are  a random variable of an underline, latent, number of individuals $N(t)$ which is modeled at a time $t$ by
\begin{align}
\log N(t) &= \log (r) + \log(N(t-1))- N(t-1) + \sigma Z_t\nonumber\\
Y_t &\sim \mathrm{Poisson}(\phi N(t))\label{eq::rickers_model}
\end{align}
where
$Z_1,Z_2, \sim N(0,1)$.
Here $Y_t$ is the observed population, and $r$ is the growth rate and $\phi$ is regarded as a known scale parameter.
The parameters are $\sigma$ and $r$. Because of the latent variable $N(t)$, the likelihood is intractable. 

In Fig.~\ref{fig:rickers_model}, we report the $L_2$ and Hellinger profile losses, along with the corresponding relative-fit confidence sets discussed in Sec.~\ref{sec::cs_misspecified}. The estimated parameter (yellow diamond), which minimizes the profile losses, is found close to the true value (red line). Both confidence sets (blue segments) correctly cover the true parameter, with the Hellinger-based set being slightly narrower than the one derived from the $L_2$ loss. This result is consistent with prior findings on CS lengths.

The non-monotonic behavior of the $L_2$ profile loss in the top-right panel for $\sigma > 0.75$ may be attributed to a strong presence of outliers in the distribution of simulated data for such parameters combinations, combined with similarly shaped distributions to those simulated using $\sigma\in [0.6,0.75]$ in high-density areas of the distribution of observed data.
Robustness of the $L_2$ to outliers might thus result in lower loss values beyond the $\sigma$ threshold.

\begin{figure}[t!]
    \centering    
    \includegraphics[width=.4\linewidth]{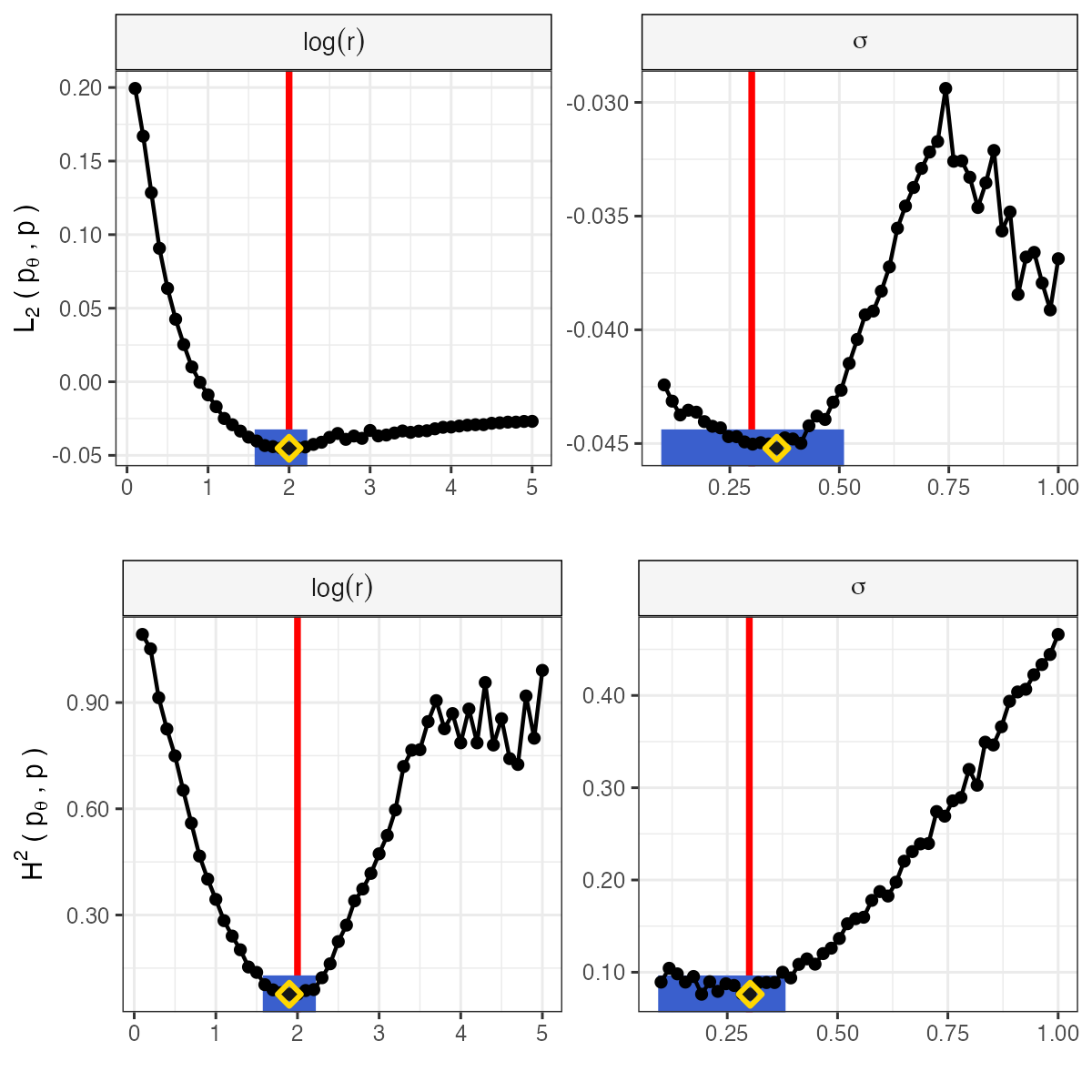}
    \caption{\sl{{\bf Inference on Ricker's model parameters.}
    The Ricker's model describes the evolution of a population over time according to the set of equations~(\ref{eq::rickers_model}). The estimated parameters are $\theta=(\log(r), \sigma)$ with known scale parameter $\phi=5$. The true parameter is $\theta^*=(2, 0.3)$. We observe the total population over $n=T_{obs}=2000$ time steps. 
    For estimation, for each parameter in the grid we simulate a total of $T=3500$ points from the model and consider the last  $2000$ points to allow for the model to converge to its stationary distribution. 
    We report results for the $L_2$ (top) and Hellinger (bottom) based SBI. 
    The estimated parameter (yellow diamond) is close to the truth (red line). Both confidence sets (blue segments) correctly cover the true parameter. The Hellinger-based sets are slightly narrower than the one derived from the $L_2$ loss, in line with findings presented in this paper about CS lengths of the approaches.
    }}
    \label{fig:rickers_model}

\bigskip

    \centering
    \subfloat[\sl Efficient L2 discrepancy]{\includegraphics[width=5cm]{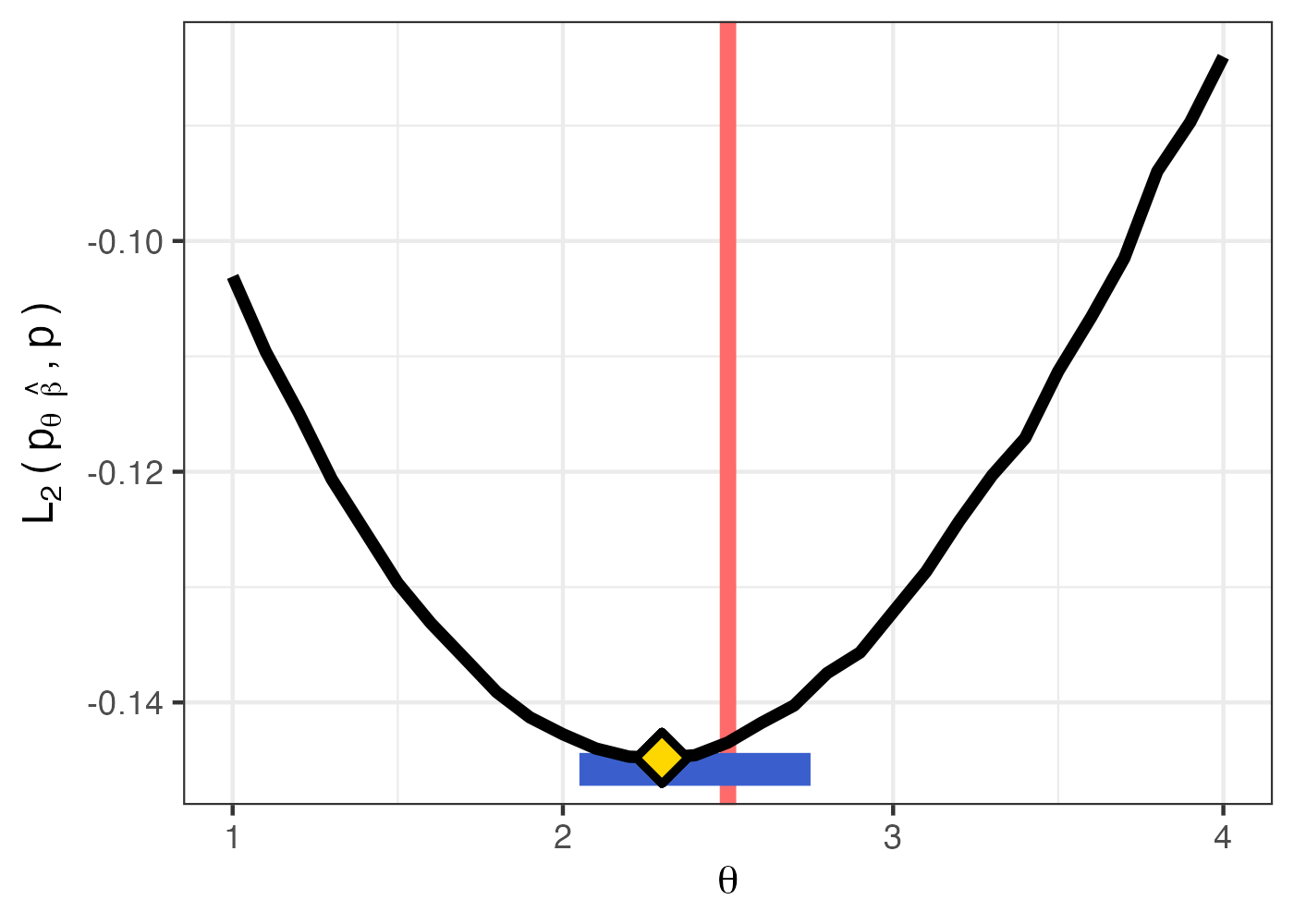}}
    \subfloat[\sl  Hellinger discrepancy]{\includegraphics[width=5cm]{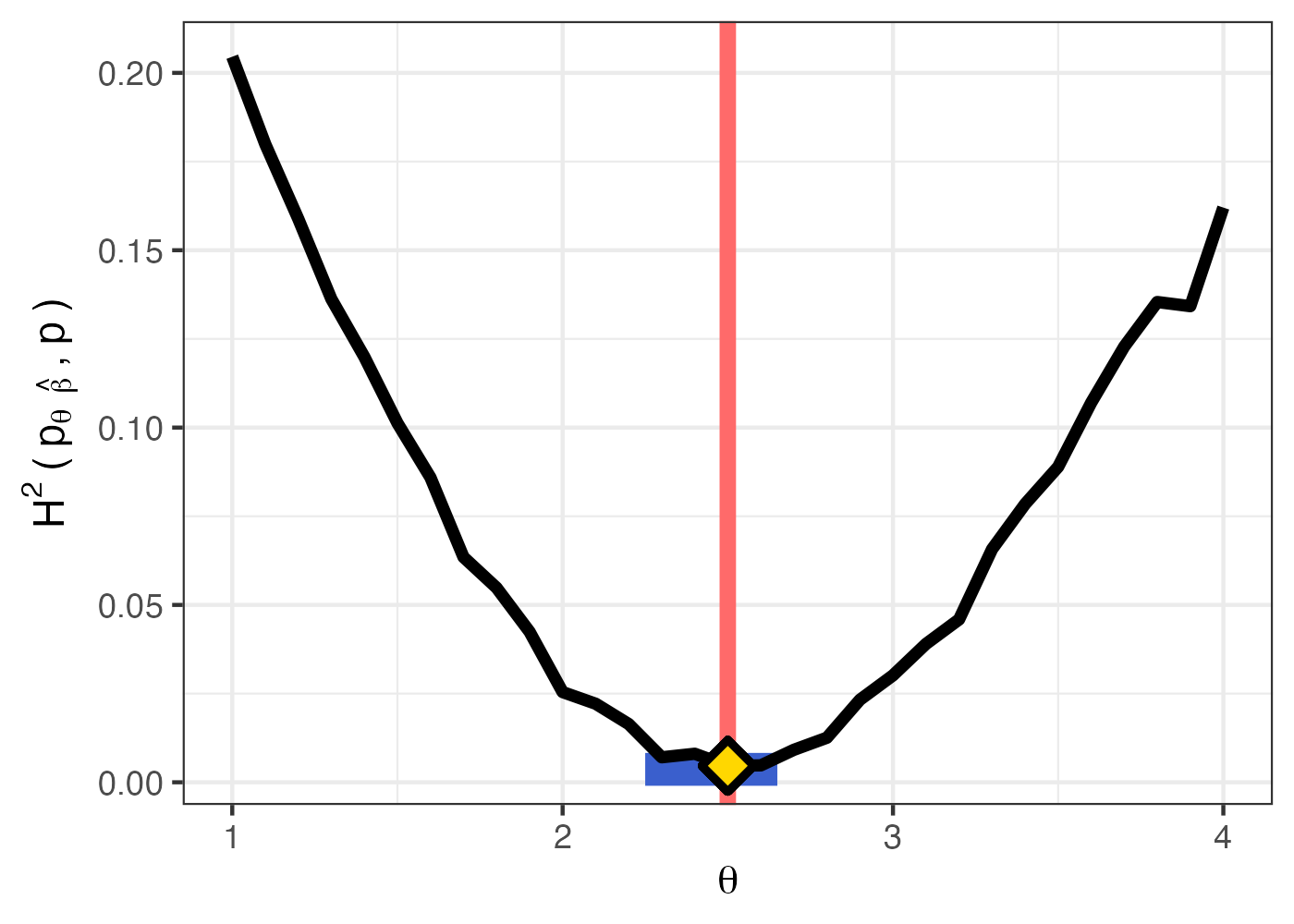}}
    \caption{\sl
    {{\bf
    Robust inference for expanded model.} The data has distribution (\ref{eq::exponential_tilt_normal}).
    The assumed model is (\ref{eq::exponential_tilt_norm}).
    {\bf (a)} Loss function for the $L_2$ discrepancy with true parameter $\theta$ (red line),
    estimated projection parameter (diamond-shaped point), and relative fit confidence set. 
    {\bf (b)} Same for the Hellinger discrepancy.
    The two discrepancies are minimized close to the true value and the confidence sets contain the true value.
}}
    \label{fig:exponential_tilt}
\end{figure}

\subsection{Inference by Projection for an Expanded Model} \label{app::expansion}

The data have distribution (\ref{eq::exponential_tilt_normal}). 
The target of inference is $\theta$.
Fig.~\ref{fig:exponential_tilt} shows the estimated $L_2$ and Hellinger discrepancies, 
the estimate of the projection parameter of $\theta$ and 
the relative fit confidence sets, when the assumed model is (\ref{eq::exponential_tilt_norm}).
The tilting parameters were estimated using a one-step procedure, following the approach in \cite{karunamuni2011one}. In detail, initial values were obtained by maximizing the log-likelihood using the NR algorithm described in section~\ref{app::NR}. Starting from these initial values, a single additional NR step was performed to obtain the final parameter estimates, replacing the log-likelihood with either the negative Hellinger or L2 loss function, and using the corresponding gradients and Hessians derived in section \ref{app::grad_hess}. 
Good starting values are essential for convergence; the one-step approach preserves efficiency of the final estimates, as discussed in the original paper.
We used the profiled values 
$\hat \beta_1(\theta)$ and $\hat \beta_2(\theta)$.

\section{Active Learning}

\subsection{SBI specifics for the example in section~\ref{sec::AL}}\label{app::al_sbi_nn}
In the example in fig.~\ref{fig:AL_results} we used the likelihood-based SBI approach in section~\ref{sec::SBI}.  
We estimated the likelihood of data by solving a classification problem with a deep learning approach for automatic feature extraction and classification, based on a multilayer perceptron with the specifics presented in figure~\ref{fig:NN_archt}. 

\begin{figure}[b!]
    \centering
    \includegraphics[width=.4\linewidth]{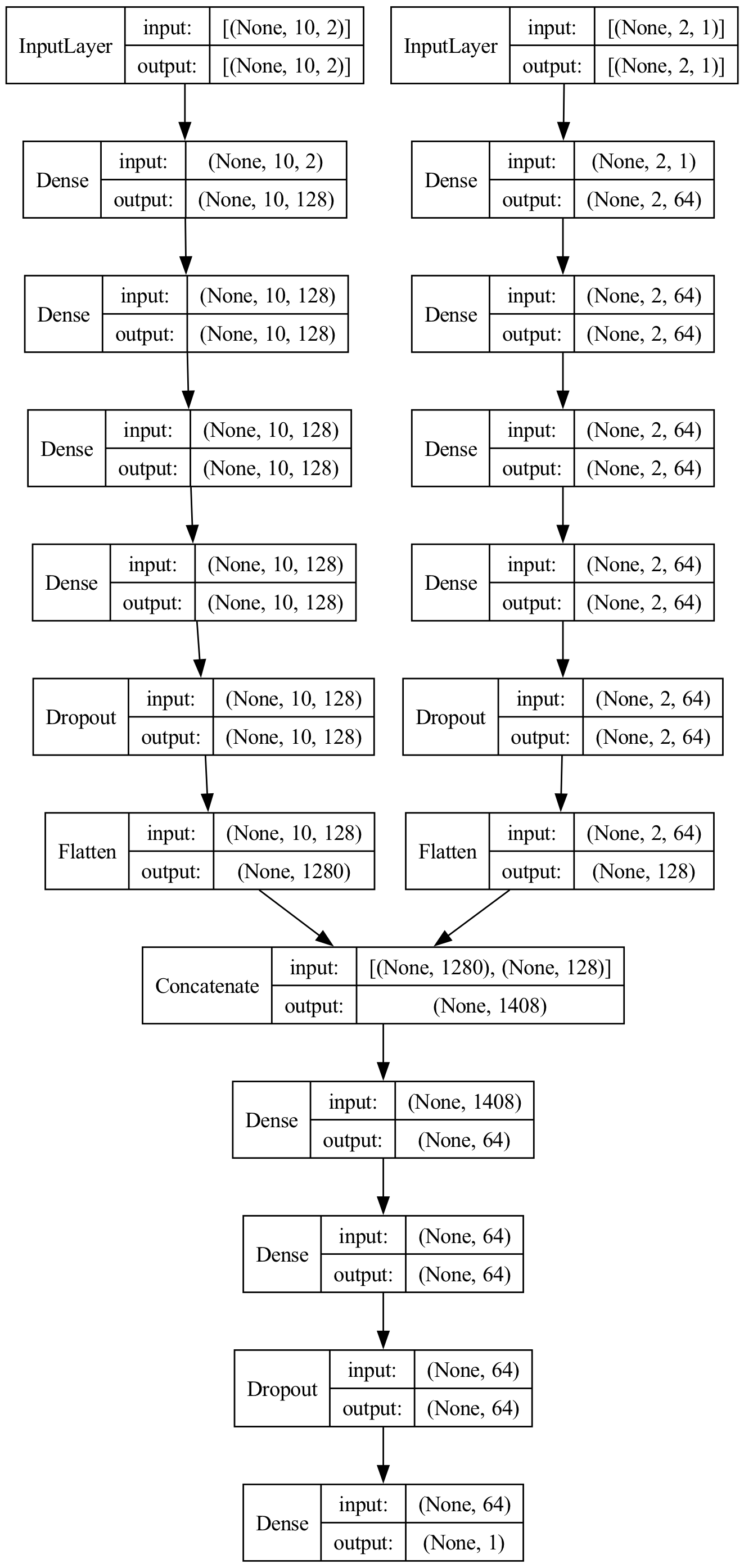}
    \caption{\sl 
     {\bf Neural network specifics.  
    } Neural network architecture used to solve the classification problem to estimate the likelihood of data for the active learning example in figure~\ref{fig:AL_results}. Data is the input of the left branch, while the parameter vector, $\theta$, is input of the right branch. 
    The output is binary for the classification problem. The neural network is trained on a $50\%$ train-validation split with binary crossentropy loss and a decaying learning rate with  5 epochs patience, $90\%$ decay factor, initial rate $10^{-3}$ and lower bound $10^{-5}$.
}
\label{fig:NN_archt}
\end{figure}

\subsection{Alternative Approach for Active Learning}
\label{app::AL}

The second active learning approach is
from \cite{zhao_sequential_2012}.
In many cases, we can write
$C =\{\theta:\ T(\theta) \geq q(\theta)\}$
where $T(\theta)$ is some statistic and
$q(\theta)$ is the $\alpha$ quantile of $T(\theta)$.
In this case
$\hat C =\{\theta:\ T(\theta) \geq \hat q(\theta)\}$
where
$\hat q(\theta)$ is the estimated quantile.
We can estimate $q(\theta)$ by local linear
quantile regression with kernel $K$ and bandwidth $h$:
choose $\hat q(\theta)$ and $\hat\mu(\theta)$ to minimize
$$
\sum_{i=1}^j L(T(\theta_i) - \mu(\theta) -\beta(\theta-\theta_i))K_h(\theta-\theta_i)
$$
where
$\mu(\theta) = \E[T|\theta]$ and
$L(t) = |t| + (2 (1-\alpha)-1)t$ is the pinball loss.
If the next $\theta$ is sampled from $f(\theta)$,
then, under regularity conditions,
$$
\E| \hat q(\theta) - q(\theta)|^2 =
h^4 \rho(\theta) +
\frac{W(\theta)}{n h f(\theta)} +
o(h^4 + (nh)^{-1})
$$
where $\rho(\theta)$ is some function of $\theta$ given in \cite{zhao_sequential_2012},
$$
W(\theta) = \frac{\alpha (1-\alpha) D_K}{m^2(q(\theta)|\theta)},
$$
$D_K = \int K^2$ and
$m$ is the density of $T$ given $\theta$.
The bias term
$h^4 \rho(\theta)$  is not affected by $f(\theta)$.
\cite{zhao_sequential_2012}
show that 
the density $f$ that minimizes
$\E| \hat q(\theta) - q(\theta)|^2$ is
$f(\theta) \propto m(q(\theta)|\theta)$.
They recommend estimating 
$m(t|\theta)$
using the conditional kernel estimator
$$
\hat m(t|\theta) =
\frac{ (j \nu h)^{-1}\sum_{i=1}^j K_h(\theta-\theta_i) K_\nu(T(\theta_i) - t)}
{ (j  h)^{-1}\sum_{i=1}^j K_h(\theta-\theta_i)}
$$
using bandwidths $h$ and $\nu$.

\end{document}